\newif\ifllncs
    \title{A Turing Kernelization Dichotomy for Structural Parameterizations of \texorpdfstring{$\mathcal{F}$}{F}-Minor-Free Deletion}
    \titlerunning{Turing Kernelization Dichotomy for \FMDeletion}
    \author{Huib Donkers\,$^\text{\Letter}$\orcidID{0000-0002-2767-8140} \and
    Bart M.P. Jansen\thanks{Supported by NWO Gravitation grant ``Networks''.}\orcidID{0000-0001-8204-1268}}
    \authorrunning{H.T. Donkers \and B.M.P. Jansen}
    \institute{Eindhoven University of Technology,
    The Netherlands\\
    \email{\{h.t.donkers,b.m.p.jansen\}@tue.nl}}
    \let\origendproof=\endproof
    \def\endproof{\qed\origendproof}
    \let\origqed\qed
    \newcommand{\claimqed}{\hfill$\lrcorner$}
    \newenvironment{claimproof}[1][\proofname]{\begin{proof}\renewcommand{\qed}{\claimqed}}{\end{proof}\renewcommand{\qed}{\origqed}}
    \spnewtheorem{subclaim}{Claim}{\itshape}{\rmfamily}
    \spnewtheorem{observation}{Observation}{\bfseries}{\rmfamily}
    \title{A Turing Kernelization Dichotomy for Structural Parameterizations of \texorpdfstring{$\mathcal{F}$}{F}-Minor-Free Deletion}
    \titlerunning{Turing Kernelization Dichotomy for \FMDeletion}
    \author{Huib Donkers}{Eindhoven University of Technology, The Netherlands}{h.t.donkers@tue.nl}{0000-0002-2767-8140}{}
    \author{Bart M.\,P. Jansen}{Eindhoven University of Technology, The Netherlands}{b.m.p.jansen@tue.nl}{0000-0001-8204-1268}{Supported by NWO Gravitation grant 024.002.003 ``Networks''.}
    \authorrunning{H.\,T. Donkers and B.\,M.\,P. Jansen}
    \subjclass{%
      \ccsdesc[500]{Theory of computation~Graph algorithms analysis},
      \ccsdesc[500]{Theory of computation~Parameterized complexity and exact algorithms}
    }
    \keywords{kernelization, Turing kernelization, minor-free deletion, subgraph-free deletion, structural parameterization}
    \theoremstyle{plain}
    \newtheorem{proposition}[theorem]{Proposition}
    \newtheorem{proposition*}{Proposition}
    \newtheorem{claim}[theorem]{Claim}
    \newtheorem{observation}[theorem]{Observation}
    \newcommand{\llheading}[1]{\subparagraph*}
    \newenvironment{subclaim}{\begin{claim}}{\end{claim}}
    \let\plainqed\qedsymbol
    \newcommand{\claimqed}{$\lrcorner$}
    \newenvironment{claimproof}[1][\proofname]{\begin{proof}\renewcommand{\qedsymbol}{\claimqed}}{\end{proof}\renewcommand{\qedsymbol}{\plainqed}}
\@nx\else[{#1}]\fi}
\@nx\else[{#1}]\fi\fi}
\let\oldnl\nl
\newcommand{\nonl}{\renewcommand{\nl}{\let\nl\oldnl}}
\mathchardef\hyphen="2D
\newcommand{\mx}{\uparrow}
\newcommand{\mn}{\downarrow}
\DeclareMathOperator{\tw}{tw}
\newcommand{\minorof}{\preceq}
\newcommand{\wminorof}{\precsim}
\DeclareMathOperator{\slb}{\lambda}
\DeclareMathOperator{\prune}{\hyphen prune}
\newcommand{\poly}[1]{\ensuremath{#1^{\Oh(1)}}}
\DeclareMathOperator{\odd}{odd}
\DeclareMathOperator{\isolated}{isol}
\DeclareMathOperator{\degree}{deg}
\DeclareMathOperator{\vc}{\textsc{vc}}
\DeclareMathOperator{\fvs}{\textsc{fvs}}
\newcommand{\Oh}{\ensuremath{\mathcal{O}}\xspace}
\newcommand{\F}{\ensuremath{\mathcal{F}}\xspace}
\newcommand{\Q}{\ensuremath{\mathcal{Q}}\xspace}
\newcommand{\mintw}{\ensuremath{\min\tw}\xspace}
\newcommand{\scVC}{\textsc{Vertex Cover}\xspace}
\newcommand{\scFVS}{\textsc{Feedback Vertex Set}\xspace}
\newcommand{\Deletion}[2]{\textsc{#1-#2-Free Deletion}\xspace}
\newcommand{\FMDeletion}{\Deletion{\F}{Minor}}
\newcommand{\FSDeletion}{\Deletion{\F}{Subgraph}}
\newcommand{\type}{\ensuremath{\mathit{type}}}
\newcommand{\FTDeletion}{\Deletion{\F}{\type}}
\newcommand{\NP}{\ensuremath{\mathsf{NP}}}
\newcommand{\NPhard}{\NP-hard\xspace}
\newcommand{\NPcomplete}{\NP-complete\xspace}
\newcommand{\coNPpoly}{\ensuremath{\mathsf{coNP/poly}}}
\newcommand{\MKtwo}{\ensuremath{\mathsf{MK[2]}}\xspace}
\newcommand{\MKtwohard}{\MKtwo-hard\xspace}
\newcommand{\MKtwocomplete}{\MKtwo-complete\xspace}
\newcommand{\notcontainment}{\ensuremath{\NP \not\subseteq \coNPpoly}\xspace}
\newcommand{\containment}{\ensuremath{\NP \subseteq \coNPpoly}\xspace}
\newcommand{\defparproblem}[4]{
\vspace{1mm}
\noindent\fbox{
\begin{minipage}{0.96\textwidth}
\textsc{#1} \\
{\bf{Input:}} #2 \\
{\bf{Parameter:}} #3 \\
{\bf{Question:}} #4
\end{minipage}
}
\vspace{1mm}
}
\begin{document}

\maketitle

\begin{abstract}
 For a fixed finite family of graphs \F, the \FMDeletion problem takes as input a graph $G$ and an integer $\ell$ and asks whether there exists a set $X \subseteq V(G)$ of size at most $\ell$ such that $G-X$ is \F-minor-free. For $\F=\{K_2\}$ and $\F=\{K_3\}$ this encodes \scVC and \scFVS respectively. When parameterized by the feedback vertex number of $G$ these two problems are known to admit a polynomial kernelization. Such a polynomial kernelization also exists for any \F containing a planar graph but no forests. 
 In this paper we show that \FMDeletion parameterized by the feedback vertex number is \MKtwohard for $\F = \{P_3\}$. This rules out the existence of a polynomial kernel assuming \notcontainment, and also gives evidence that the problem does not admit a polynomial Turing kernel. Our hardness result generalizes to any \F not containing a $P_3$-subgraph-free graph, using as parameter the vertex-deletion distance to treewidth $\mintw(\F)$, where $\mintw(\F)$ denotes the minimum treewidth of the graphs in \F. 
 For the other case, where \F contains a $P_3$-subgraph-free graph, we present a polynomial Turing kernelization. Our results extend to \FSDeletion.
\end{abstract}

\section{Introduction}

\ifllncs
  \subsubsection*{Background and motivation.}
\else
  \subparagraph*{Background and motivation}
\fi
Kernelization is a framework for the scientific investigation of provably effective preprocessing procedures for \NPhard problems. It uses the notion of a parameterized (decision) problem to capture meaningful performance guarantees for preprocessing. In a parameterized problem, every problem input~$x$ has an associated integer~$k$ called the parameter, which captures the difficulty of the input in some way. A \emph{kernelization} for a parameterized problem is a polynomial-time algorithm that transforms any parameterized instance~$(x,k)$ into an instance~$(x',k')$ with the same answer, such that~$|x'|$ and~$k'$ are both bounded by~$f(k)$ for some computable function~$f$. The function~$f$ is the \emph{size} of the kernel. Of particular interest are kernels of polynomial size. Determining which parameterized problems admit kernels of polynomial size has become a rich area of algorithmic research~\cite{Bodlaender09,FominLSZ19,LokshtanovMS12}.

A common approach in kernelization~\cite{AgrawalLMSZ17,FominLMS12,KratschW14,Iwata17} is to take the solution size as the parameter~$k$, with the aim of showing that large inputs that ask for a small solution can be efficiently reduced in size. However, this method does not give any nontrivial guarantees when the solution size is known to be proportional to the total size of the input. For that reason, there is an alternative line of research~\cite{BougeretS17,CyganLPPS14,FominJP14,GuoHN04,JansenB13,JansenK13,JansenP18,UhlmannW13} that focuses on parameterizations based on a measure of nontriviality of the instance (cf.~\cite{Niedermeier10}). One formal way to capture nontriviality of a graph problem, is to measure how many vertex-deletions are needed to reduce the input graph to a graph class in which the problem can be solved in polynomial time. Since many graph problems can be solved in polynomial time on trees and forests, the structural graph parameter \emph{feedback vertex number} (the minimum number of vertex deletions needed to make the graph acyclic, i.e., a forest) is a relevant measure of the distance of the input to a trivially solvable one.

Previous research has shown that for the \scVC problem, there is a polynomial kernel parameterized by the feedback vertex number~\cite{JansenB13}. This preprocessing algorithm guarantees that inputs which are large with respect to their feedback vertex number, can be efficiently reduced. The \scVC problem is the simplest in a family of so-called minor-free deletion problems. For a fixed finite family of graphs~\F, an input to \FMDeletion consists of a graph~$G$ and an integer~$\ell$. The question is whether there is a set~$S$ of at most~$\ell$ vertices in~$G$, such that the graph~$G-S$ obtained by removing these vertices does not contain any graph from~\F as a minor. Various classic graph optimization problems such as \scVC, \scFVS, and \textsc{Vertex Planarization} fit this framework by a suitable choice of~\F. The investigation of minor-free deletion problems has led to numerous advances in the study of kernelization and parameterized algorithmics~\cite{BasteST17,FominLMPS11,FominLMPS16,FominLMS12,GiannopoulouJLS15}. Motivated by the fact that \scVC and \scFVS, arguably the simplest \FMDeletion problems, admit polynomial kernels when parameterized by the feedback vertex number, we set out to resolve the following question: Do \emph{all} \FMDeletion problems admit a polynomial kernel when parameterized by the feedback vertex number?

\ifllncs
  \subsubsection*{Results.}
\else
  \subparagraph*{Results}
\fi
To our initial surprise, we prove that the answer to this question is \emph{no}. While the parameterization by feedback vertex number admits polynomial kernels for~$\F = \{K_2\}$~\cite{JansenB13}, for~$\F = \{K_3\}$~\cite{BodlaenderD10,Thomasse10,Iwata17}, and for any set~$\F$ containing a planar graph\footnote{If~$\F$ contains no forests, then any acyclic graph is~$\F$-minor free, implying the size of an optimal solution is at most the size of a feedback vertex set. Hence the kernelization for the solution-size parameterization yields a kernel of size bounded polynomially in the feedback vertex number.} but no forests~\cite{FominLMS12}, there are also cases that do not admit polynomial kernels (under the assumption that \notcontainment, which we tacitly assume throughout the informal discussion in this introduction). For example, the case of~$\F$ consisting of a single graph~$P_3$ that forms a path on three vertices does not admit a polynomial kernel. This lower bound for~$\F = \{P_3\}$ follows from a more general theorem that we state below.

Recall that a graph is a forest if and only if its treewidth is one~\cite{Bodlaender98}. Hence the feedback vertex number is exactly the minimum number of vertex deletions needed to obtain a graph of treewidth one. Let~$\tw(G)$ denote the treewidth of graph~$G$, and define~$\mintw(\F) := \min_{H \in \F} \tw(H)$. Our lower bound also holds for \FSDeletion, which is the related problem that asks whether there is a vertex set~$S$ of size at most~$k$ such that~$G-S$ contains no graph~$H \in \F$ as a \emph{subgraph}. We prove the following.

\begin{theorem} \label{thm:lowerbound}
Let~$\F$ be a finite set of graphs, such that each graph in~$\F$ has a connected component on at least three vertices. Then \FMDeletion and \FSDeletion do not admit polynomial kernels when parameterized by the vertex-deletion distance to a graph of treewidth~$\mintw(\F)$, unless \containment.
\end{theorem}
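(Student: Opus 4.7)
The plan is to rule out polynomial kernels via a polynomial-parameter transformation from a problem known to lack polynomial kernels assuming \notcontainment, a natural source being \textsc{CNF-SAT} parameterized by the number of variables $n$; equivalently, one can set up an OR-cross-composition from the \NP-hard unparameterized version. The aim is to build an instance of \FMDeletion (and analogously of \FSDeletion) in which a set $D$ of $\mathrm{poly}(n)$ vertices can be deleted to produce a graph of treewidth at most $\mintw(\F)$.

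I would fix a graph $H \in \F$ with $\tw(H) = \mintw(\F) =: t$ and a connected component $C$ of $H$ on at least three vertices (guaranteed by hypothesis), noting that such a $C$ necessarily contains a $P_3$. For each variable of the source formula I introduce two \emph{selector vertices} representing its two literals, and around them I build two kinds of gadgets from modified copies of $H$: a \emph{variable gadget} attached to the two literal-selectors of a variable, which forces at least one of them to be deleted in any solution, and a \emph{clause gadget} attached to the three literal-selectors of a clause, which enforces the ``OR'' by containing $H$ both as a minor and as a subgraph precisely when none of its three designated selector-attachment vertices is deleted.

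The parameter analysis is then immediate. Letting $D$ be the set of all $2n$ literal selectors, each gadget becomes isolated from the rest once its selector vertices are removed, and what remains of each gadget is obtained from $H$ by vertex deletion, hence has treewidth at most $t = \mintw(\F)$. Thus the vertex-deletion distance of the constructed graph to treewidth $\mintw(\F)$ is at most $2n$, polynomial in the source parameter $n$. Correctness then follows by setting the deletion budget equal to the number of variables, or a suitable polynomial thereof, and matching satisfying assignments with valid deletion sets in the standard way.

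The main obstacle is the design of the clause gadget for general $\F$: one must enforce ``delete at least one of three designated vertices'' using nothing more than the assumption that some $H \in \F$ has a connected component on $\geq 3$ vertices, and the construction must work simultaneously for the minor relation (for \FMDeletion) and for subgraph containment (for \FSDeletion). The key structural fact to exploit is that the guaranteed $C$ contains a $P_3$; identifying the three vertices of this $P_3$ in a copy of $C$ with the three literal-selectors of a clause lets $H$ appear in the gadget precisely when all three selectors remain, and be destroyed as soon as any one is removed. Carrying this out uniformly over all admissible $\F$---and in particular verifying that no \emph{unintended} copy of a graph in $\F$ appears in the gadgets after deletions---is the most delicate part of the argument.
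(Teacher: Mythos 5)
Your high-level plan matches the paper's: a polynomial-parameter transformation from \textsc{cnf-sat} parameterized by the number of variables, with a modulator of size $O(n)$ consisting of literal-selector vertices, variable gadgets forcing one selector per variable into the solution, and clause gadgets enforcing the disjunction. The paper's variable gadget is, as you suggest, just a copy of a suitable $H\in\F$ with two labeled vertices. However, the clause gadget you sketch has a genuine gap, in two respects.

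First, you propose identifying the three vertices of a $P_3$ inside a component of $H$ with ``the three literal-selectors of a clause,'' which presupposes clauses of length exactly three. But the source problem must be \textsc{cnf-sat} with \emph{unbounded} clause length: with clauses of length $O(1)$ there are only $\mathrm{poly}(n)$ distinct clauses, so the source problem has a trivial polynomial kernel and the reduction would prove nothing. The paper's clause gadget therefore takes an arbitrary arity $d=|C_j|$ as input and is built as a chain of $2d-1$ blocks glued along cut vertices, not as a single copy of $H$.

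Second, and more fundamentally, a single copy of $H$ with selectors identified into it does not enforce the clause constraint under a tight budget, nor does it guard against unintended occurrences of $\F$-members. The paper's gadget is engineered around a packing argument: the gadget $W_j$ contains $3d-1$ vertex-disjoint $H$-subgraphs, but only $3d-2$ of them avoid the selector set $S_j$, so the global budget $\ell=k+\sum_j(3|C_j|-2)$ is exhausted unless every clause gadget has at least one selector deleted. Moreover, after the intended deletion set $X_j$ is removed, the gadget must satisfy strong structural guarantees (conditions 4(b)--4(d) of Lemma~5): $H$-minor-freeness, the $\slb(H)$-pruned remainder being a component-wise minor of $H$, and all components touching a surviving selector having fewer than $\slb(H)$ vertices. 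These are exactly what prevent ``unintended'' $\F$-minors from forming when pieces of neighboring gadgets are glued at a surviving selector---the difficulty you correctly flag as ``the most delicate part'' but do not resolve. Establishing these guarantees is where the paper's robustness/pruning machinery (Propositions~4--9) comes in: minimal models of biconnected graphs live in biconnected subgraphs, and $\alpha$-robust subgraphs behave monotonically under minors. A bare $P_3$-identification has no analogue of this control.

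Finally, your sketch fixes a single $H\in\F$ with $\tw(H)=\mintw(\F)$ and argues only about $H$. When $\F$ has several graphs this is insufficient: an $H$-minor-free graph can still contain a different member of $\F$ as a minor. The paper's Lemma~7 picks $H$ to be $\wminorof$-minimal of treewidth $\mintw(\F)$, selects the $\minorof$-maximal component $H_\mx$ with the smallest leaf-block among such graphs, duplicates the single-$H$ construction $2c-1$ times, and pads with $(\ell+1)$ copies of $H$ with its $H_\mx$-components removed. Together with the $\slb(H_\mx)\prune(\cdot)\wminorof H$ guarantee this rules out every other $\F$-minor. This multi-graph step is absent from your proposal and is not routine.

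In summary, your outline identifies the correct reduction skeleton, but the crucial clause gadget---the paper's main technical contribution---is not correctly designed in your sketch, and the multi-graph case of $\F$ is not addressed.
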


To see that Theorem~\ref{thm:lowerbound} implies the claimed lower bound for~$\F = \{P_3\}$, observe that whenever \F contains an acyclic graph with at least one edge we have~$\mintw(\F) = 1$ and therefore the vertex-deletion distance to treewidth~$\mintw(\F)$ equals the feedback vertex number. The theorem also generalizes earlier results of Cygan et al.~\cite[Theorem 13]{CyganLPPS14}, who investigated the problem of \emph{losing treewidth}. They proved that for each fixed~$1 \leq \eta < \rho$, the \textsc{$\eta$-Transversal} problem (delete at most~$\ell$ vertices to get a graph of treewidth at most~$\eta$) does not have a polynomial kernel when parameterized by the vertex-deletion distance to treewidth~$\rho$. Since the treewidth of a graph does not increase when taking minors, there is a finite set~$\F_\eta$ of forbidden minors (cf.~\cite{RobertsonS04}) that characterize the graphs of treewidth at most~$\eta$. As the members of the obstruction set for~$\eta \geq 1$ are easily seen to be connected, have treewidth~$\eta + 1$, and at least three vertices, the lower bound of Theorem~\ref{thm:lowerbound} encompasses the theorem of Cygan et al.~and generalizes it to arbitrary \FMDeletion problems.

Theorem~\ref{thm:lowerbound} is obtained through a polynomial-parameter transformation from the \textsc{cnf-sat} problem parameterized by the number of variables, for which a superpolynomial kernelization lower bound is known~\cite{DellM14,FortnowS11}. The main technical contribution in the hardness proof consists of the design of a gadget that acts as a clause checker. A certain budget of vertex deletions is available to break all \F-minors present in the gadget, and this is possible if and only if one of the neighboring vertices in a variable gadget is removed by the solution. This removal encodes that the variable is set in a way that satisfies the clause. The intricate part of the construction is to design the gadget knowing only that \F has a graph with a connected component of at least three vertices. Here we extensively rely on the fact that minimal minor models of biconnected graphs live in biconnected subgraphs, together with the fact the treewidth of a graph does not increase when attaching structures along cut vertices in a tree-like manner.

Using the framework of Hermelin et al.~\cite{HermelinKSWW15}, our polynomial-parameter transformation from \textsc{cnf-sat} parameterized by the number of variables to the structural parameterization of \FMDeletion, also rules out the existence of polynomial-size \emph{Turing} kernelizations under a certain hardness assumption. Turing kernelization~\cite{Fernau16} is a relaxation of the traditional form of kernelization. Intuitively, it investigates whether inputs~$(x,k)$ can be solved efficiently using the answers to subproblems of size~$f(k)$ which are provided by an oracle, which models an external computation cluster. Note that a parameterized problem that has a kernel of size~$\Oh(k^c)$ can be solved by a polynomial-time algorithm that first spends polynomial time to prepare a query of size~$\Oh(k^c)$, and then queries an oracle for its answer. Turing kernelization investigates if and how polynomial-time algorithms can solve \NPhard parameterized problems by querying an oracle for the answers to instances of size~$k^{\Oh(1)}$, potentially multiple times. Some problems that do \emph{not} admit polynomial kernelizations, do admit polynomial-size Turing kernelizations~\cite{Binkele-RaibleFFLSV12,Lokshtanov09,Jansen17,JansenPW17,Weller2013}.

Formally, a Turing kernelization of size~$f$ for a parameterized problem~$\Q$ is an algorithm that can query an oracle to obtain the answer to any instance of problem~$\Q$ of size and parameter bounded by~$f(k)$ in a single step, and using this power solves any instance~$(x,k)$ in time polynomial in~$|x|+k$. The reduction proving Theorem~\ref{thm:lowerbound} also proves the non-existence of polynomial-size Turing kernelizations, unless all parameterized problems in the complexity class \MKtwo defined by Hermelin et al.~\cite{HermelinKSWW15} have polynomial Turing kernels. (The \textsc{cnf-sat} problem with clauses of unbounded length, parameterized by the number of variables, is \MKtwocomplete~\cite[Thm.~1, cf.~Thm.~10]{HermelinKSWW15} and widely believed \emph{not} to admit polynomial-size Turing kernels.) 

Motivated by the general form of the lower bound statement in Theorem~\ref{thm:lowerbound}, we also investigate upper bounds and derive a complexity dichotomy. For any \F that does not meet the criterion of Theorem~\ref{thm:lowerbound}, we obtain a polynomial Turing kernel.

\begin{restatable}{theorem}{thmUpperboundStatement}
\label{thm:upperbound}
Let~$\F$ be a finite set of graphs, such that some~$H \in \F$ has no connected component of three or more vertices. Then \FMDeletion and \FSDeletion admit polynomial Turing kernels when parameterized by the vertex-deletion distance to a graph of treewidth~$\mintw(\F)$.
\end{restatable}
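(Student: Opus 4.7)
We focus on the nontrivial case where every graph in $\F$ has at least one edge; if some $H' \in \F$ is edgeless, the problem trivially reduces to bounding $|V(G-S)|$ below $|V(H')|$. Under this assumption, the hypothesis provides $H := sK_2 + rK_1 \in \F$ with $s \geq 1$, so $\mintw(\F) = 1$ and the parameter $k$ equals the feedback vertex number of $G$; we compute a $2$-approximate feedback vertex set $M$ of size $\Oh(k)$ in polynomial time. Write $\mu(G')$ for the maximum matching size of a graph $G'$.

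The structural engine of the kernelization is that $G - S$ is $H$-minor-free if and only if $\mu(G-S) < s$ or $|V(G-S)| < 2s + r$. In the \emph{small-remainder} case $|V(G)| \leq \ell + 2s + r - 1$ we enumerate all subsets of $V(G)$ of size less than $2s + r$ as candidates for $V(G-S)$ and verify $\F$-minor-freeness directly---polynomially many constant-size checks. Otherwise every solution $S$ satisfies $\mu(G-S) < s$, and the endpoints of a maximum matching of $G - S$ form a vertex cover $C$ of $G - S$ of constant size $|C| \leq 2(s-1)$. We enumerate every candidate $C \subseteq V(G)$ with $|C| \leq 2(s-1)$, giving polynomially many subproblems.

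For each fixed $C$, the subproblem asks whether there is $S \subseteq V(G) \setminus C$ with $|S| \leq \ell$ such that $S$ is a vertex cover of $G - C$ and $G - S$ is $\F$-minor-free. Define the \emph{type} of $v \in V(G) \setminus C$ by $\tau(v) := N_G(v) \cap C$; there are at most $2^{|C|} = \Oh(1)$ type classes $X_T := \{v : \tau(v) = T\}$. Whether $G - S$ is $\F$-minor-free depends only on $G[C]$ and on the counts $(|X_T \setminus S|)_T$; moreover this dependence saturates once each count reaches a constant threshold $N$ depending only on $\F$, since every minor model of an $H' \in \F$ contains at most $\max_{H' \in \F}|V(H')|$ vertices. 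Iterating over the constantly many maximal valid saturated profiles $(\tilde a_T)_T$, the subproblem reduces to a constrained vertex cover question: find $S$ covering $G - C$ with $|X_T \cap S| \geq \max(0, |X_T| - \tilde a_T)$ for each $T$ and $|S| \leq \ell$.

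We encode this constrained vertex cover as a single $\poly(k)$-size oracle query. Since $M \setminus C$ is a feedback vertex set of $G[V(G) \setminus C]$ of size at most $k$, the polynomial kernel for \scVC parameterized by feedback vertex number from~\cite{JansenB13}, suitably adapted to preserve the per-type structure, compresses $G[V(G) \setminus C]$ to $\poly(k)$ vertices. Vertex Cover is then simulated inside \FMDeletion{} by adjoining a disjoint $(s-1)$-matching and $2s + r$ isolated vertices, forcing any valid \FMDeletion{} solution on the augmented graph to reduce the maximum matching below $s$ and therefore destroy every remaining edge of the embedded Vertex Cover instance. The per-type lower bounds on deletions are enforced by attaching to each type class a gadget that introduces an $\F$-minor precisely when too many type-$T$ vertices survive. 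The main technical obstacle will be designing these per-type gadgets so that they are $\F$-minor-free in isolation yet create an $\F$-minor exactly when a type-count constraint is violated, without interfering with the matching-forcing gadget or with other members of $\F$; we anticipate constructing them along the lines of the minor-detection gadgets from Theorem~\ref{thm:lowerbound}, using biconnected attachments via cut vertices to keep the treewidth bounded and the $\F$-minor profile controllable. The same framework extends to \FSDeletion by replacing minor-avoidance with subgraph-avoidance throughout.
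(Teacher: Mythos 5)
Your high-level plan---handle small instances by brute force, reduce large instances to a vertex-cover problem by guessing a constant-size vertex cover $C$ of the solution graph, classify the remaining vertices into types by their neighborhood in $C$, and reduce to a kernelized \scVC query---does match the paper's strategy. But two concrete gaps remain, and the second is the real crux of the proof.

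First, your threshold claim is wrong as stated. You assert that $\F$-minor-freeness ``saturates once each count reaches a constant threshold $N$ $\ldots$ since every minor model of an $H' \in \F$ contains at most $\max_{H'\in\F}|V(H')|$ vertices.'' That is false: a minor model of $H'$ has $|V(H')|$ branch sets, but the branch sets may be arbitrarily large. The correct statement is that \emph{when the host graph has a small vertex cover}, one can find a minor model of total size $|V(H')| + \vc\cdot(\Delta(H')+1)$; the paper needs Proposition~\ref{prop:proposition1} (from~\cite{FominJP14}) and packages it as Lemma~\ref{lem:shrinkS} precisely to get this bound. Your threshold must be $\alpha = \max_{H\in\F'}|V(H)| + \Oh(1)\cdot(\Delta(H)+1)$, not $\max|V(H')|$, and this requires that the remaining graph has a vertex cover of bounded size---a fact you have available but do not invoke.

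Second, and more seriously, you never actually close the reduction. You arrive at a ``constrained vertex cover'' problem (per-type lower bounds on deletions) and then announce that you will ``adapt'' the \cite{JansenB13} kernel ``to preserve the per-type structure'' and design ``per-type gadgets'' to enforce the lower bounds, explicitly flagging this as an unresolved technical obstacle. This step is exactly where the paper's proof differs, and it avoids the obstacle rather than solving it. The observation you are missing is that after guessing which representatives of each type survive (the function $f$ in Algorithm~\ref{alg:kernel}), the per-type constraints degenerate: if fewer than $\alpha$ representatives of type $T$ were guessed to survive, then by the minimality of the guess \emph{all} other type-$T$ vertices must be deleted---there is no choice, so they are simply moved into the forced set $Q'$---while if $\alpha$ representatives survive, then by Lemma~\ref{lem:shrinkS} further type-$T$ survivors cannot introduce a new forbidden minor, so no constraint is needed. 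What remains is an \emph{unconstrained} \scVC instance on an induced subgraph $G[Q]-Q'$ of $G$, which automatically has $\fvs \le \fvs(G)$, so the off-the-shelf kernel of~\cite{JansenB13} applies without modification. Finally, the translation from a \scVC query back to an \FTDeletion oracle query is not done by bespoke gadgets but simply by NP-completeness of both problems, which yields a polynomial-time instance transformation for free. Your gadget construction is therefore both unfinished and unnecessary.
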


The main insight in the Turing kernelization is the following. If~$H \in \F$ has no connected component of three or more vertices, then~$H$ consists of disjoint edges and isolated vertices. If~$H$ only has isolated vertices, then~$\F$-minor-free deletion is polynomial-time solvable because the leftover graph has less than~$|V(H)| \in \Oh(1)$ vertices, for which we can search by brute force. Otherwise,~$H$ is a matching of size~$t \geq 1$ plus potentially some isolated vertices. The isolated vertices turn out only to make a difference if the solution $\F$-free graph has constant size. In the interesting case, we can focus on~$H \in \F$ being a matching of size~$t$. Then a graph that is~$\F$-minor-free does not admit a matching of size~$t$, and therefore has a vertex cover of size at most~$t$. Hence a solution to \FMDeletion can be extended to a vertex cover by including~$\Oh(1)$ additional vertices. Using the Tutte-Berge formula, we can make the relation between \FMDeletion and the vertex cover precise, and use it to reduce an instance of \FMDeletion parameterized by deletion distance to~$\mintw(\F)$, to the logical OR of a polynomial number of instances of \scVC parameterized by deletion distance to~$\mintw(\F)$. If~$\F$ has a graph with no component of size at least three, then~$\mintw(\F) = 1$, implying that the parameter is the feedback vertex set size. This allows us to use the polynomial kernel for \scVC parameterized by feedback vertex set on each generated instance. We query the resulting instances of size~$k^{\Oh(1)}$ to the oracle to find the answer.

\ifllncs
  \subsubsection*{Organization.}
\else
  \subparagraph*{Organization}
\fi
We present preliminaries on graphs and kernelization in Section~\ref{sec:prelims}. Section~\ref{sec:lowerbound} develops the lower bounds on (Turing) kernelization when all graphs in~$\F$ have a connected component with at least three vertices. In Section~\ref{sec:upperbound} we show that in all other cases, a polynomial Turing kernelization exists.

\section{Preliminaries} \label{sec:prelims}

All graphs we consider in this paper are simple, finite and undirected. We denote the vertex set and edge set of a graph $G$ by $V(G)$ and $E(G)$ respectively.
For a vertex set $S \subseteq V(G)$ let $G[S]$ be the subgraph of $G$ induced by $S$, and let $G - S$ denote the subgraph of $G$ induced by $V(G)\setminus S$.
For a vertex $v$ we use $G-v$ as shorthand for $G-\{v\}$.
For a non-negative integer $n$ we use $n \cdot G$ to denote the graph consisting of $n$ disjoint copies of $G$.
Let $N_G(S)$ and $N_G(v)$ denote the open neighborhood in $G$ of a vertex set $S$ and a vertex $v$ respectively.
Let $\degree_G(v)$ denote the degree of $v$ in $G$. The subscript may be omitted when $G$ is clear from the context. We use $\fvs(G)$ to denote the feedback vertex number of $G$.

A graph $H$ is a minor of graph $G$, denoted by $H \minorof G$, if $H$ can be obtained from $G$ by a series of edge contractions, edge deletions, and vertex deletions. An $H$-model in $G$ is a function $\varphi \colon V(H) \rightarrow 2^{V(G)}$ such that (i)~for every vertex $v \in V(H)$, the graph $G[\varphi(v)]$ is connected, (ii)~for every edge $\{u,v\} \in E(H)$ there exists an edge $\{u', v'\} \in E(G)$ with $u' \in \varphi(u)$ and $v' \in \varphi(v)$, and (iii)~for distinct~$u,v \in V(H)$ we have~$\varphi(v) \cap \varphi(u) = \emptyset$. The sets $\varphi(v)$ are called \emph{branch sets}. Clearly, $H \minorof G$ if and only if there is an $H$-model in $G$.
For any function $f \colon A \rightarrow B$ and set $A' \subseteq A$ we use $f(A')$ as a shorthand for $\bigcup_{a \in A'} f(a)$. Specifically in the case of a minor-model $\varphi$ and graph $G$, we use $\varphi(G)$ to denote $\bigcup_{v \in V(G)}\varphi(v)$. 
We say a graph $H$ is a \emph{component-wise minor} of a graph $G$, denoted as $H \wminorof G$, when every connected component of $H$ is a minor of $G$.

\begin{observation} \label{obs:minorisom}
 For graphs $G, H$, if $H \minorof G$ and $G \minorof H$ then $H$ and $G$ are isomorphic.
\end{observation}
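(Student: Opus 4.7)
The plan is to use the $H$-model characterization to bound vertex and edge counts monotonically under the minor relation, and then to show that equality on both counts forces isomorphism. Concretely, I would first argue that $H \minorof G$ implies both $|V(H)| \leq |V(G)|$ and $|E(H)| \leq |E(G)|$. Given any $H$-model $\varphi$ in $G$, property (iii) makes the branch sets pairwise disjoint, and each branch set is nonempty since $G[\varphi(v)]$ is required to be connected. Hence $|V(H)| \leq \sum_{v \in V(H)} |\varphi(v)| \leq |V(G)|$. For the edge bound, each branch set $\varphi(v)$ contributes at least $|\varphi(v)|-1$ internal edges by connectedness, and each edge of $H$ requires by property (ii) at least one further edge of $G$ running between two distinct branch sets. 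These two classes of $G$-edges are disjoint, so $|E(G)| \geq |E(H)| + \sum_{v \in V(H)}(|\varphi(v)|-1) \geq |E(H)|$.

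Combining this with the symmetric inequalities derived from $G \minorof H$ forces $|V(H)| = |V(G)|$ and $|E(H)| = |E(G)|$. The vertex equality, together with $|\varphi(v)| \geq 1$ for every $v$, forces $|\varphi(v)| = 1$ for all $v \in V(H)$ and $\bigcup_{v \in V(H)} \varphi(v) = V(G)$, so $\varphi$ is a bijection from $V(H)$ to $V(G)$. Since all branch sets are singletons, the internal-edge contribution vanishes, so the edge equality forces every edge of $G$ to witness a (necessarily unique) edge of $H$. Thus $\varphi$ induces a bijection on edges as well and is therefore a graph isomorphism. I do not foresee a real obstacle: the observation is a direct consequence of the definitions given in the preliminaries, and the only mild care needed is to keep the internal and inter-branch edge contributions separate when bounding $|E(G)|$.
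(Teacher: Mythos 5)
The paper states this as an Observation and gives no proof, so there is no author argument to compare against; your job is simply to supply a correct one, and you do. The double counting $|E(G)| \geq |E(H)| + \sum_{v \in V(H)}(|\varphi(v)|-1)$, which cleanly separates intra-branch-set edges (at least $|\varphi(v)|-1$ by connectedness) from inter-branch-set edges (at least one per $H$-edge, and distinct for distinct $H$-edges because the branch sets are pairwise disjoint), is exactly the right tool: combined with the symmetric inequality from $G \minorof H$ it forces $|V(H)|=|V(G)|$ and $|E(H)|=|E(G)|$, the vertex equality collapses every branch set to a singleton so $\varphi$ becomes a vertex bijection, and the edge equality then makes the induced edge map a bijection in both directions, yielding the isomorphism. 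The one small point worth spelling out when you write this up is the surjectivity step: the inter-branch-set witnesses give an injection $E(H)\hookrightarrow E(G)$, and it is the cardinality equality $|E(H)|=|E(G)|$ that upgrades this to a bijection and hence guarantees that an arbitrary edge $\{a,b\}\in E(G)$ pulls back to an edge $\{\varphi^{-1}(a),\varphi^{-1}(b)\}\in E(H)$. With that made explicit, the argument is complete and correct.
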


\begin{observation} \label{obs:wminortw}
 For graphs $H, G$, if $H \wminorof G$ then $\tw(H) \leq \tw(G)$ since the treewidth of $H$ is the maximum treewidth of its connected components and each connected component of $H$ is a minor of $G$, so its treewidth it at most $\tw(H)$.
\end{observation}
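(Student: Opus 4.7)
The plan is to unfold the definition of component-wise minor and combine it with two well-known facts about treewidth: its behavior on disconnected graphs, and its monotonicity under taking minors. The statement is essentially a one-line consequence, but I will lay out the intended argument in the order it would appear.

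First I would recall that for any graph $H$ with connected components $C_1, \ldots, C_r$, we have $\tw(H) = \max_{i \in [r]} \tw(C_i)$. This holds because tree decompositions of the individual components can be glued together by adding an arbitrary edge between their decomposition trees to form a tree decomposition of $H$ of the same width, and conversely the restriction of any tree decomposition of $H$ to the bags intersected with $V(C_i)$ yields a valid tree decomposition of $C_i$ of at most the original width. This reduces the statement about $H$ to a statement about its components.

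Next I would invoke the definition of $H \wminorof G$, which by the convention introduced just above the observation means that each connected component $C_i$ of $H$ satisfies $C_i \minorof G$. Combined with the standard minor-monotonicity of treewidth, namely that $H' \minorof G$ implies $\tw(H') \leq \tw(G)$ (edge deletions and vertex deletions trivially preserve a tree decomposition, and contracting an edge $\{u,v\}$ can be handled by identifying the two vertices in every bag), this gives $\tw(C_i) \leq \tw(G)$ for every $i$.

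Combining the two steps yields $\tw(H) = \max_i \tw(C_i) \leq \tw(G)$, which is the claimed inequality. There is no real obstacle here; the only thing to be careful about is treating the edge case where $H$ has no vertices (so $\tw(H) = -1$ or $0$ depending on convention) and where $G$ itself may be disconnected, both of which are handled transparently by the facts above.
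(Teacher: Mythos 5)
Your argument matches the paper's own justification, which is stated inline in the observation: treewidth of $H$ is the maximum over its components, each component is a minor of $G$, and treewidth is minor-monotone. You merely spell out the standard supporting facts in more detail; there is no substantive difference in approach.
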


\begin{definition}
 Let \F be a family of graphs and let $G \in \F$. For every relation $\trianglelefteq\; \in \{\minorof, \wminorof\}$ we define minimal and maximal elements as follows:
 \begin{itemize}
  \item $G$ is said to be \emph{$\trianglelefteq$-minimal} in \F when for all graphs $H \in \F$ we have $H \trianglelefteq G \Rightarrow G \trianglelefteq H$.
  \item $G$ is said to be \emph{$\trianglelefteq$-maximal} in \F when for all graphs $H \in \F$ we have $G \trianglelefteq H \Rightarrow H \trianglelefteq G$.
 \end{itemize}
\end{definition}

\begin{definition}
 We call a connected component $C$ of a graph $G$ a \emph{$\minorof$-maximal component} of $G$ when $C$ is \mbox{$\minorof$-maximal} in the set of graphs that form the connected components of $G$.
\end{definition}

For $\type \in \{\text{minor}, \text{subgraph}\}$ and a finite family of graphs \F, we define:

\defparproblem{\F-\type-Free Deletion}
{A graph $G$ and an integer $\ell$.}
{vertex-deletion distance to a graph of treewidth $\mintw(\F)$.}
{Is there a set~$X \subseteq V(G)$ of at most~$\ell$ vertices such that~$G-X$ does not contain any~$H \in \F$ as a \type?}

A vertex $v \in V(G)$ is a \emph{cut vertex} when its removal from $G$ increases the number of connected components.
A graph is called \emph{biconnected} when it is connected and contains no cut vertex. A \emph{biconnected component} of a graph $G$ is a maximal biconnected subgraph of $G$. 
%
For any integer $\alpha$, a graph $G$ is called \emph{$\alpha$-robust} when $|V(G)| \geq \alpha$ and no vertex $v \in V(G)$ exists such that $G-v$ contains a connected component with less than $\alpha-1$ vertices.

\begin{proposition} \label{prop:uniquerobust}
 Any graph $G$ has a unique maximal $\alpha$-robust subgraph. Any $\alpha$-robust subgraph of~$G$ is a subgraph of the maximal $\alpha$-robust subgraph of~$G$.
\end{proposition}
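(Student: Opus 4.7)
The plan is to prove that the class of $\alpha$-robust subgraphs of $G$ is closed under taking pairwise unions (as subgraphs of $G$, via unions of vertex and edge sets). Once this closure property is in hand, both statements follow immediately: iterating pairwise union over the finitely many $\alpha$-robust subgraphs of $G$ yields a single $\alpha$-robust subgraph that contains every $\alpha$-robust subgraph, and this is therefore the unique maximal one.

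The key auxiliary fact I will establish first is that every connected component of an $\alpha$-robust graph $H$ has at least $\alpha-1$ vertices. Indeed, if some component $C$ had $|C|<\alpha-1$, then since $|V(H)|\geq \alpha$ there would exist $v \in V(H)\setminus C$, and removing such a $v$ would leave $C$ intact as a connected component of $H-v$, contradicting $\alpha$-robustness of $H$.

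For the main step, let $H_1$ and $H_2$ be $\alpha$-robust subgraphs of $G$ and set $H := H_1 \cup H_2$. Then $|V(H)|\geq |V(H_1)|\geq \alpha$. Fix any $v\in V(H)$ and any connected component $C$ of $H-v$, and pick $u\in C$; without loss of generality $u \in V(H_1)$. Let $C_1$ be the connected component of $H_1-v$ containing $u$ (which, when $v\notin V(H_1)$, coincides with the connected component of $H_1$ containing $u$). Either way $|C_1|\geq \alpha-1$: if $v \in V(H_1)$ this is the defining property of $\alpha$-robustness of $H_1$, while if $v\notin V(H_1)$ it follows from the auxiliary fact above. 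Since every edge of $H_1$ is an edge of $H$ and no vertex of $C_1$ equals $v$, the set $C_1$ is still connected in $H-v$, so $C_1\subseteq C$ and $|C|\geq \alpha-1$.

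The delicate point will be this two-way case split on whether $v$ lies in $H_1$: the auxiliary fact is needed exactly in the case that $v$ is present in the opposing subgraph but absent from the one containing $u$, since otherwise there is no direct guarantee that the component of $H_1$ itself (rather than of $H_1-v$) containing $u$ is large enough. With that observation in place the closure argument goes through, and uniqueness of the maximal $\alpha$-robust subgraph together with the inclusion statement follow by taking the union over all $\alpha$-robust subgraphs of $G$.
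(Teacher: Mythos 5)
Your proof is correct and essentially matches the paper's: both reduce to showing that the union of two $\alpha$-robust subgraphs is $\alpha$-robust by projecting a component of $(H_1\cup H_2)-v$ onto $H_1$ or $H_2$ and invoking that graph's robustness, with a case split on whether $v$ lies in the chosen subgraph. The paper argues by contradiction from a hypothetical small component and handles the $v\notin A$ case in-line, while you argue directly and extract the observation that every connected component of an $\alpha$-robust graph already has at least $\alpha-1$ vertices as a standalone lemma; the underlying idea is the same.
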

\begin{proof}
 The proposition follows straightforwardly from the fact that if $G[A]$ and $G[B]$ are $\alpha$-robust, then so is $G[A \cup B]$. We now prove this fact.
 
 Consider two vertex sets $A, B \subseteq V(G)$, such that $G[A]$ and $G[B]$ are $\alpha$-robust. We show that $G[A \cup B]$ is $\alpha$-robust. Since $G[A]$ is $\alpha$-robust we have $|A| \geq \alpha$ so then $|A \cup B| \geq \alpha$. Suppose for contradiction that there exists a vertex $v \in A \cup B$ such that $G[A \cup B] - v$ contains a connected component of size smaller than $\alpha-1$. Let $C$ be the vertices of this connected component. We know $C$ contains vertices of at least one of $A$ and $B$. Assume w.l.o.g. $A \cap C \neq \emptyset$, then $G[A \cap C]$ is a connected component of size less than $\alpha-1$ in $G[A] - v$. If $v \in A$ this directly contradicts $\alpha$-robustness of $G[A]$, so assume $v \not\in A$. Now $G[A]$ contains a connected component with less than $\alpha-1$ vertices. Since $|C| < \alpha \leq |A|$ there exists a vertex $u \in A \setminus C$, so then $G[A] - u$ contains a connected component with less than $\alpha-1$ vertices, which contradicts $\alpha$-robustness of $G[A]$.
\end{proof}

 For any graph $G$ and integer $\alpha$, let $\alpha\prune(G)$ denote the unique maximal $\alpha$-robust subgraph of $G$, which may be empty.
We define a \emph{leaf-block} of a graph $G$ as a biconnected component of $G$ that contains at most one cut vertex of $G$. The size of a leaf-block~$H$ is~$|V(H)|$. The size of the smallest leaf-block of a graph $G$ is denoted as~$\slb(G)$. Observe that~$G$ is $\alpha$-robust if and only if~$\slb(G) \geq \alpha$.

A \emph{polynomial-parameter transformation} from parameterized problem~$\mathcal{P}$ to parameterized problem~$\mathcal{Q}$ is a polynomial-time algorithm that, given an instance~$(x,k)$ of~$\mathcal{P}$, outputs an instance~$(x', k')$ of~$\mathcal{Q}$ such that all of the following are true:
\begin{enumerate}
	\item $(x,k) \in \mathcal{P} \Leftrightarrow (x', k') \in \mathcal{Q}$, 
	\item $k'$ is upper-bounded by a polynomial in~$k$.
\end{enumerate}

\section{Lower bound}
\label{sec:lowerbound}

In this section we consider the case where all graphs in \F contain a connected component of at least three vertices and give a polynomial-parameter transformation from \textsc{cnf-sat} parameterized by the number of variables. In this construction we make use of the way biconnected components of graphs $G$ and $H$ restrict the options for an $H$-model to exist in $G$.

\begin{proposition} \label{prop:minmodelrobust}
 Let $H$ be an $\alpha$-robust graph and let $\varphi$ be a minimal $H$-model in a graph $G$, then $G[\varphi(H)]$ is $\alpha$-robust.
\end{proposition}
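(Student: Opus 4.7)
The plan is to verify the two defining conditions of $\alpha$-robustness for $G[\varphi(H)]$ separately. The size bound $|V(G[\varphi(H)])| \geq \alpha$ is immediate: branch sets are pairwise disjoint and nonempty, so $|\varphi(H)| \geq |V(H)| \geq \alpha$. The substance lies in showing that no vertex removal yields a component of size less than $\alpha - 1$, which I would prove by contradiction: assume some $w \in \varphi(H)$ is such that $G[\varphi(H)] - w$ contains a component $C$ with $|C| < \alpha - 1$.

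For the setup, let $v_0 \in V(H)$ be the unique vertex with $w \in \varphi(v_0)$, and define $V_C = \{v \in V(H) : \varphi(v) \subseteq C\}$. Since each branch set is connected in $G$ (and hence in $G[\varphi(H)]$), any branch set $\varphi(v)$ with $v \neq v_0$ meeting $C$ must lie entirely in $C$; otherwise a path within $\varphi(v)$ would cross between $C$ and its complement in $G[\varphi(H)] - w$, which is impossible. Moreover, the only neighbor of $C$ outside $C$ in $G[\varphi(H)]$ is $w \in \varphi(v_0)$, so for any $v \in V_C$ and any $H$-neighbor $v'$ of $v$, the required $G$-edge between $\varphi(v) \subseteq C$ and $\varphi(v')$ forces $v' \in V_C \cup \{v_0\}$. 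Consequently $V_C$ is a union of connected components of $H - v_0$.

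I would then split into two cases. If $V_C \neq \emptyset$, at least one connected component of $H - v_0$ has size at most $|V_C| \leq |C| < \alpha - 1$, directly contradicting $\alpha$-robustness of $H$. If $V_C = \emptyset$, then $C \subseteq \varphi(v_0)$, and I would derive a contradiction with minimality of $\varphi$ by showing the modified model $\varphi'$ given by $\varphi'(v_0) := \varphi(v_0) \setminus C$ and $\varphi'(v) := \varphi(v)$ otherwise, is still a valid $H$-model in $G$. Connectivity of $G[\varphi(v_0) \setminus C]$ holds since any path in $G[\varphi(v_0)]$ from $w$ to a vertex outside $C$ that traversed $C$ would yield a path in $G[\varphi(H)]$ between $C$ and another component of $G[\varphi(H)] - w$ avoiding $w$, which is impossible. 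The adjacency requirements to other branch sets $\varphi(v')$ with $v' \neq v_0$ are preserved because, as observed above, $C$ has no edges in $G$ to vertices of other branch sets (their only possible common neighbor $w$ lies in $\varphi(v_0)$, not in $\varphi(v')$).

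The main obstacle will be the $V_C = \emptyset$ case: cleanly translating the topological fact that $C$ sits on one side of the cut vertex $w$ into the algebraic claim that removing $C$ from $\varphi(v_0)$ preserves both branch-set connectivity and the edge-encoding conditions of an $H$-model. Once that is in place, minimality of $\varphi$ is violated and the proof is complete.
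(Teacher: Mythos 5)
Your proof is correct, and it takes a genuinely different route from the paper's. The paper argues directly: it fixes a vertex $v \in \varphi(u)$, observes that each component of $G[\varphi(H)] - \varphi(u)$ contains the image of a component of $H - u$ and hence has $\geq \alpha - 1$ vertices, and then uses a spanning tree of $G[\varphi(u)]$ together with minimality to show that every component of $G[\varphi(u)] - v$ contains a leaf of that tree, which by minimality connects to another branch set; consequently no component of $G[\varphi(H)] - v$ can be trapped inside $\varphi(u)$ and every component absorbs an entire component of $G[\varphi(H)] - \varphi(u)$. You instead argue by contradiction from a hypothetical small component $C$ of $G[\varphi(H)] - w$, with a clean dichotomy driven by the set $V_C$ of branch sets that fall entirely inside $C$: either $V_C \neq \emptyset$, in which case $V_C$ is a union of components of $H - v_0$ and forces a component of $H - v_0$ of size $< \alpha - 1$, contradicting robustness of $H$; or $V_C = \emptyset$, in which case $C \subseteq \varphi(v_0)$ and you can prune $C$ from $\varphi(v_0)$, contradicting minimality. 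Both proofs use exactly the same two ingredients ($\alpha$-robustness of $H$ and minimality of $\varphi$), but your case split makes the structure of the argument more transparent at the cost of a contradiction framing, while the paper's spanning-tree device compresses the "trapped inside one branch set" case into a single line; your version is arguably easier to verify, and your observation that $V_C$ is a union of components of $H - v_0$ is a nice self-contained structural fact that the paper leaves implicit.
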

\begin{proof}
 Take an arbitrary vertex $v \in \varphi(H)$ and let $u \in V(H)$ be such that $v \in \varphi(u)$. Since $H-u$ does not have connected components smaller than $\alpha-1$, $G[\varphi(H)] - \varphi(u)$ cannot have connected components smaller than $\alpha-1$. Consider a spanning tree of $G[\varphi(u)]$. Each leaf of this spanning tree must be connected to a vertex in a different branch set, otherwise $\varphi$ is not minimal. We know every connected component in $G[\varphi(u)] - v$ contains at least one leaf of this spanning tree, hence every connected component of $G[\varphi(u)] - v$ is connected to $G[\varphi(H)] - \varphi(u)$. So $G[\varphi(H)] - v$ does not contain a connected component smaller than $\alpha-1$. Since $v$ was arbitrary, $G[\varphi(H)]$ is $\alpha$-robust.
\end{proof}

\begin{proposition} \label{prop:biconnectedminor}
 Let $\varphi$ be an $H$-model in $G$, and $B$ a biconnected component of~$H$. Then $G[\varphi(B)]$ contains a biconnected subgraph on at least $|B|$ vertices.
\end{proposition}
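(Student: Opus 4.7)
The plan is to build the desired biconnected subgraph $S$ of $G[\varphi(B)]$ by applying Whitney's open ear decomposition theorem to $B$ and lifting each ear to a simple path in $G$ that threads the corresponding branch sets. The cases $|B|\in\{1,2\}$ are handled directly: when $|B|=1$, any single vertex of $\varphi(w)$ for the unique $w\in V(B)$ is a biconnected subgraph on one vertex, and when $|B|=2$ (so $B=K_2$ on $\{u,v\}$) any edge in $G$ between $\varphi(u)$ and $\varphi(v)$ guaranteed by the $H$-model is biconnected on two vertices. For the remainder assume $|B|\geq 3$; for every edge $\{u,v\}\in E(B)\subseteq E(H)$ fix a \emph{connector edge} $e_{uv}\in E(G)$ with endpoints $a_{u,v}\in\varphi(u)$ and $a_{v,u}\in\varphi(v)$, and let $B=C_0\cup P_1\cup\cdots\cup P_s$ be an open ear decomposition of $B$, where $C_0$ is a cycle of length at least three and each $P_i$ is a path with two distinct endpoints in $B_{i-1}:=C_0\cup P_1\cup\cdots\cup P_{i-1}$ and new internal vertices.

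I construct subgraphs $S_0\subseteq S_1\subseteq\cdots\subseteq S_s=S$ of $G[\varphi(B)]$ maintaining the invariant that $S_i$ is biconnected and $V(S_i)\cap\varphi(v)\neq\emptyset$ for every $v\in V(B_i)$. For the cycle $C_0=v_1v_2\cdots v_kv_1$, let $S_0$ be the union of the connector edges $e_{v_iv_{i+1}}$ together with, inside each $\varphi(v_i)$, a simple path from $a_{v_i,v_{i-1}}$ to $a_{v_i,v_{i+1}}$; these paths exist because $G[\varphi(v_i)]$ is connected, and since they lie in pairwise disjoint branch sets, $S_0$ is a simple cycle meeting each $\varphi(v_i)$. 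For each ear $P_i=u_0u_1\cdots u_{m+1}$, form a lifted path $L_i$ by concatenating (i) a shortest path $R_0$ in $G[\varphi(u_0)]$ from $a_{u_0,u_1}$ to the nonempty set $V(S_{i-1})\cap\varphi(u_0)$, ending at some $y_0\in V(S_{i-1})$; (ii) the connector $e_{u_0u_1}$; (iii) for $j=1,\ldots,m$, a simple path $R_j$ in $\varphi(u_j)$ from $a_{u_j,u_{j-1}}$ to $a_{u_j,u_{j+1}}$ followed by $e_{u_ju_{j+1}}$; and (iv) a symmetric shortest path $R_{m+1}$ in $G[\varphi(u_{m+1})]$ from $a_{u_{m+1},u_m}$ to some $y_{m+1}\in V(S_{i-1})\cap\varphi(u_{m+1})$. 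Set $S_i:=S_{i-1}\cup L_i$.

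It remains to check that $L_i$ is a proper open ear of $S_{i-1}$, after which Whitney's theorem in the other direction certifies that each $S_i$, and hence $S=S_s$, is biconnected. The pieces $R_0,\ldots,R_{m+1}$ lie in the pairwise disjoint branch sets $\varphi(u_0),\ldots,\varphi(u_{m+1})$, so $L_i$ is a simple path; its endpoints $y_0\in\varphi(u_0)$ and $y_{m+1}\in\varphi(u_{m+1})$ are distinct because the ear is open, so $u_0\neq u_{m+1}$. For the internal vertices of $L_i$: those on $R_j$ with $1\leq j\leq m$ sit in $\varphi(u_j)$ which is disjoint from $V(S_{i-1})\subseteq\varphi(V(B_{i-1}))$ because $u_j\notin V(B_{i-1})$, while internal vertices of $R_0$ and $R_{m+1}$ avoid $V(S_{i-1})$ by virtue of being taken as shortest paths to $V(S_{i-1})$. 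Every branch set $\varphi(v)$ with $v\in V(B)$ contributes at least one vertex to $S$ (introduced when $v$ first enters the decomposition), so disjointness of the branch sets yields $|V(S)|\geq|V(B)|=|B|$. The main subtlety is ensuring that each lifted ear is a genuine open ear: it is the shortest-path choice at the two boundary branch sets that prevents the ear from sneaking back into $V(S_{i-1})$ through internal vertices and thereby breaking the ear structure.
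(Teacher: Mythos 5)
Your proof is correct, but it takes a genuinely different and more constructive route than the paper. The paper's argument is very short: take a minimal $B$-model $\varphi'$ refining $\varphi$, observe that $B$ is $|V(B)|$-robust because it is biconnected, apply Proposition~\ref{prop:minmodelrobust} (minimal models of $\alpha$-robust graphs span $\alpha$-robust subgraphs) to conclude $G[\varphi'(B)]$ is $|V(B)|$-robust, and then use the stated equivalence between $\alpha$-robustness and $\slb(\cdot)\geq\alpha$ to extract a biconnected component on at least $|V(B)|$ vertices. You instead build the biconnected subgraph explicitly by lifting an open ear decomposition of $B$ through the branch sets, taking care at the two boundary branch sets (the shortest-path trick) so that each lifted ear is genuinely open, and invoking Whitney's characterization in both directions. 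Both arguments are sound. The paper's version buys brevity and reuses the robustness machinery it has already set up (and will use again); yours is more self-contained --- it does not depend on the $\alpha$-robustness propositions at all --- and is constructive, exhibiting the biconnected subgraph directly rather than only asserting its existence. The one presentational gap worth noting: you should make explicit that each connector edge $e_{u_0u_1}$ for the first edge of an ear is not already present in $S_{i-1}$ (it cannot be, since $\{u_0,u_1\}\notin E(B_{i-1})$ and connector edges for distinct $B$-edges have endpoints in distinct pairs of branch sets), so that a degenerate ear $L_i$ consisting only of this edge is still a new, admissible ear.
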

\begin{proof}
 Let $\varphi'$ a minimal $B$-model in $G$ such that $\varphi'(v) \subseteq \varphi(v)$ for all $v \in V(B)$. Hence $G[\varphi'(B)]$ is a subgraph of $G[\varphi(B)]$. It suffices to show that $G[\varphi'(B)]$ contains a biconnected component on at least $|V(B)|$ vertices. Since $B$ is biconnected, it is $|V(B)|$-robust so by Proposition~\ref{prop:minmodelrobust} we know $G[\varphi'(B)]$ is $|V(B)|$-robust. Hence $G[\varphi'(B)]$ contains a biconnected component on at least $|V(B)|$ vertices.
\end{proof}

\begin{observation} \label{obs:prune}
 For any graph $H$, which may be an empty graph on vertex set~$\emptyset$, and integers $\alpha \geq \beta$ we have $\alpha\prune(\beta\prune(H)) = \alpha\prune(H)$.
\end{observation}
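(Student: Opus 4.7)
The plan is to prove Observation~\ref{obs:prune} by double containment, relying on Proposition~\ref{prop:uniquerobust} together with the simple monotonicity of robustness in the parameter.

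The first step I would record is that whenever $\alpha \geq \beta$, every $\alpha$-robust graph is automatically $\beta$-robust: having at least $\alpha$ vertices implies having at least $\beta$, and having no component of size less than $\alpha-1$ after a vertex deletion implies having no component of size less than $\beta - 1$ either. Consequently, any $\alpha$-robust subgraph of $H$ is in particular a $\beta$-robust subgraph of $H$, hence by Proposition~\ref{prop:uniquerobust} it is contained in the unique maximal $\beta$-robust subgraph $\beta\prune(H)$.

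For the inclusion $\alpha\prune(H) \subseteq \alpha\prune(\beta\prune(H))$, I would apply the observation above to $\alpha\prune(H)$ itself: it is $\alpha$-robust by definition, hence $\beta$-robust, and therefore sits inside $\beta\prune(H)$. Now $\alpha\prune(H)$ is an $\alpha$-robust subgraph of $\beta\prune(H)$, so by Proposition~\ref{prop:uniquerobust} (this time applied with host graph $\beta\prune(H)$) it is a subgraph of $\alpha\prune(\beta\prune(H))$. The reverse inclusion is even quicker: $\alpha\prune(\beta\prune(H))$ is $\alpha$-robust and is a subgraph of $\beta\prune(H) \subseteq H$, so it is an $\alpha$-robust subgraph of $H$, and Proposition~\ref{prop:uniquerobust} places it inside $\alpha\prune(H)$.

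The only delicate point is the degenerate case where some prune is empty; this is handled by noting that if $\beta\prune(H)$ is empty then $H$ contains no $\beta$-robust subgraph, hence (by the monotonicity recorded above) no $\alpha$-robust subgraph, so $\alpha\prune(H)$ is empty as well, matching $\alpha\prune(\beta\prune(H)) = \alpha\prune(\emptyset) = \emptyset$. I do not expect a genuine obstacle here; the argument is essentially the standard idempotence/monotonicity pattern for closure-like operators, and both inclusions follow mechanically from Proposition~\ref{prop:uniquerobust} once the robustness monotonicity is in place.
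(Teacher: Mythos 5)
Your argument is correct: both inclusions follow from Proposition~\ref{prop:uniquerobust} once you record the monotonicity fact that every $\alpha$-robust graph is $\beta$-robust when $\alpha \geq \beta$. The paper states Observation~\ref{obs:prune} without any proof, and your double-containment argument is exactly the natural way to justify it.
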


\begin{proposition} \label{prop:prune}
 For graphs $H$ and $G$ we have $H \minorof G \Rightarrow \alpha\prune(H) \minorof \alpha\prune(G)$ for any integer $\alpha$.
\end{proposition}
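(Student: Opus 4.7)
The plan is to take an arbitrary minor model of $H$ in $G$, shrink it minimally onto the robust part of $H$, and then argue via Proposition~\ref{prop:minmodelrobust} and Proposition~\ref{prop:uniquerobust} that this shrunken model is already contained in $\alpha\prune(G)$.

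First I would dispose of the degenerate case: if $\alpha\prune(H)$ is empty then it is trivially a minor of $\alpha\prune(G)$, so assume $\alpha\prune(H)$ is a nonempty $\alpha$-robust graph. Since $\alpha\prune(H)$ is by definition a subgraph of $H$, we have $\alpha\prune(H) \minorof H$, and combined with the hypothesis $H \minorof G$ this yields $\alpha\prune(H) \minorof G$. Now fix a \emph{minimal} $\alpha\prune(H)$-model $\varphi$ in $G$, which exists by iteratively removing superfluous vertices from any witnessing model.

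Next I would apply Proposition~\ref{prop:minmodelrobust} with the $\alpha$-robust graph $\alpha\prune(H)$ in the role of $H$: this gives that $G[\varphi(\alpha\prune(H))]$ is itself $\alpha$-robust. By Proposition~\ref{prop:uniquerobust}, every $\alpha$-robust subgraph of $G$ is a subgraph of the unique maximal $\alpha$-robust subgraph $\alpha\prune(G)$. Hence $V(G[\varphi(\alpha\prune(H))]) \subseteq V(\alpha\prune(G))$ and all edges of $G[\varphi(\alpha\prune(H))]$ are edges of $\alpha\prune(G)$. It follows that each branch set $\varphi(v)$ still induces a connected subgraph in $\alpha\prune(G)$, and that for every edge of $\alpha\prune(H)$ the corresponding inter-branch edge of $G$ also survives in $\alpha\prune(G)$. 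Thus $\varphi$ witnesses $\alpha\prune(H) \minorof \alpha\prune(G)$, as required.

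The main point that needs care is the passage from "the vertices of $G[\varphi(\alpha\prune(H))]$ lie in $\alpha\prune(G)$" to "the entire model $\varphi$, including the edges used for branch-set connectivity and for inter-branch adjacency, still lives in $\alpha\prune(G)$". This is exactly what Proposition~\ref{prop:uniquerobust} guarantees by asserting containment as subgraphs (not merely as vertex sets), so the step goes through once the minimality of $\varphi$ has been invoked to feed Proposition~\ref{prop:minmodelrobust}.
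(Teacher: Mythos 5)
Your proposal is correct and follows essentially the same route as the paper's proof: both pass from $\alpha\prune(H) \minorof G$ to a minimal $\alpha\prune(H)$-model $\varphi$, invoke Proposition~\ref{prop:minmodelrobust} to conclude $G[\varphi(\alpha\prune(H))]$ is $\alpha$-robust, and then use Proposition~\ref{prop:uniquerobust} to embed it in $\alpha\prune(G)$. The extra remarks about the empty case and about edges surviving the embedding are harmless but unnecessary, since both $G[\varphi(\alpha\prune(H))]$ and $\alpha\prune(G)$ are induced subgraphs of $G$, so vertex-set containment already implies subgraph containment.
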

\begin{proof}
 Clearly $\alpha\prune(H) \minorof H$ so since $H \minorof G$ we have $\alpha\prune(H) \minorof G$. For simplicity let $H' := \alpha\prune(H)$. Let $\varphi$ be a minimal $H'$-model in $G$ and observe $H' \minorof G[\varphi(H')]$. From Proposition~\ref{prop:minmodelrobust} we know $G[\varphi(H')]$ is $\alpha$-robust. Since $G[\varphi(H')]$ is an $\alpha$-robust subgraph of $G$ it is also a subgraph of $\alpha\prune(G)$ by Proposition~\ref{prop:uniquerobust}. Hence $\alpha\prune(H) \minorof G[\varphi(H')] \minorof \alpha\prune(G)$.
\end{proof}

\begin{proposition} \label{prop:pruning}
 For any $\trianglelefteq\; \in \{\minorof, \wminorof\}$, two integers $\alpha \geq \beta$, and graphs $H$ and $G$ we have that $H \trianglelefteq G \Rightarrow \alpha\prune(H)  \trianglelefteq \beta\prune(G)$.
\end{proposition}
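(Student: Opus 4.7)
The plan is to treat the two choices of $\trianglelefteq$ separately, and in both cases combine Proposition~\ref{prop:prune} with the elementary observation that any $\alpha$-robust graph is also $\beta$-robust whenever $\alpha \geq \beta$ (both clauses of the definition only become weaker as the parameter decreases). By Proposition~\ref{prop:uniquerobust}, this observation implies $\alpha\prune(G)$ is a subgraph of $\beta\prune(G)$ for every graph $G$, and in particular $\alpha\prune(G) \minorof \beta\prune(G)$.

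For the case $\trianglelefteq\; = \minorof$, applying Proposition~\ref{prop:prune} directly to $H \minorof G$ yields $\alpha\prune(H) \minorof \alpha\prune(G)$, and composing with the subgraph inclusion $\alpha\prune(G) \subseteq \beta\prune(G)$ gives $\alpha\prune(H) \minorof \beta\prune(G)$.

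For the case $\trianglelefteq\; = \wminorof$, it suffices to show that every connected component $C$ of $\alpha\prune(H)$ is a minor of $\beta\prune(G)$. Since $C$ is connected and lies inside $\alpha\prune(H) \subseteq H$, it is contained in some connected component $D$ of $H$, and the hypothesis $H \wminorof G$ yields $D \minorof G$, whence $C \minorof G$ as $C$ is a subgraph of $D$. Now fix a minimal $C$-model $\psi$ in $G$. If $C$ itself is $\alpha$-robust, then Proposition~\ref{prop:minmodelrobust} gives that $G[\psi(C)]$ is $\alpha$-robust, so by Proposition~\ref{prop:uniquerobust} it is a subgraph of $\alpha\prune(G) \subseteq \beta\prune(G)$. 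We then conclude $C \minorof G[\psi(C)] \minorof \beta\prune(G)$ as required.

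The main obstacle is justifying that each component $C$ of $\alpha\prune(H)$ is itself $\alpha$-robust. The plan is to use the leaf-block characterization $\slb(\cdot) \geq \alpha$: the cut vertices and biconnected components of $C$ coincide with those of $\alpha\prune(H)$ that happen to lie in $V(C)$, so every leaf-block of $C$ is a leaf-block of $\alpha\prune(H)$ and therefore has size at least $\alpha$. This gives $\slb(C) \geq \alpha$ and hence $\alpha$-robustness of $C$, after which the argument above goes through.
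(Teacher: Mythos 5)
Your proof is correct and follows essentially the same route as the paper: apply Proposition~\ref{prop:prune} to get $\alpha\prune(H)\minorof\alpha\prune(G)$, compose with the subgraph inclusion $\alpha\prune(G)\subseteq\beta\prune(G)$, and in the component-wise case reduce to the connected-minor case after observing that each connected component of $\alpha\prune(H)$ is itself $\alpha$-robust. The two small deviations are cosmetic: you derive $\alpha\prune(G)\subseteq\beta\prune(G)$ from the robustness hierarchy and Proposition~\ref{prop:uniquerobust} instead of from Observation~\ref{obs:prune}, and in the $\wminorof$ case you unfold the argument of Proposition~\ref{prop:prune} (Proposition~\ref{prop:minmodelrobust} plus maximality) rather than simply invoking it once you know $\alpha\prune(C)=C$; the explicit leaf-block verification that components of $\alpha\prune(H)$ are $\alpha$-robust is a welcome filling-in of a step the paper only asserts.
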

\begin{proof}
 Suppose $H \minorof G$, then
 \begin{align*}
  \alpha\prune(H) &\minorof \alpha\prune(G) && \text{by Proposition~\ref{prop:prune}}\\
  &= \alpha\prune(\beta\prune(G)) && \text{by Observation~\ref{obs:prune}}\\
  &\minorof \beta\prune(G) .
 \end{align*}
 
 Alternatively, suppose $H \wminorof G$. Let $H'$ be a connected component of $\alpha\prune(H)$, then there exists a connected component $H''$ of $H$ such that $H' \minorof H''$. Since $H \wminorof G$ we have $H'' \minorof G$ so then $H' \minorof G$. As shown above, this implies $\alpha\prune(H') \minorof \beta\prune(G)$. Note that $\alpha\prune(H') = H'$ since $H'$ is a connected component of $\alpha\prune(H)$, hence $\alpha\prune(H) \wminorof \beta\prune(G)$.
\end{proof}

 We proceed to construct a clause gadget to be used in the polynomial-parameter transformation from \textsc{cnf-sat}.

\begin{restatable}{lemma}{clausegadget} \label{lem:clausegadget}
 For any connected graph $H$ with at least three vertices there exists a polynomial-time algorithm that, given an integer $n \geq 1$, outputs a graph $G$ and a vertex set $S \subseteq V(G)$ of size~$n$ such that all of the following are true:
 \begin{enumerate}
  \item \label{clausegadget:1}$\tw(G) \leq \tw(H)$,
  \item \label{clausegadget:2}$G$ contains a packing of $3n-1$ vertex-disjoint $H$-subgraphs,
  \item \label{clausegadget:3}$G-S$ contains a packing of $3n-2$ vertex-disjoint $H$-subgraphs, and
  \item \label{clausegadget:4}$\forall v \in S$ there exists $X \subseteq V(G)$ of size $3n-1$ s.t. all of the following are true:
  \begin{enumerate}
   \item \label{clausegadget:4a}$v \in X$,
   \item \label{clausegadget:4b}$G-X$ is $H$-minor-free,
   \item \label{clausegadget:4c}$\slb(H)\prune(G-X) \wminorof H$, and
   \item \label{clausegadget:4d}for all connected components $G_c$ of $G-X$ that contain a vertex of $S$ we have $|V(G_c)| < \slb(H)$ and $G_c$ contains exactly one vertex of $S$.
  \end{enumerate}
 \end{enumerate}
\end{restatable}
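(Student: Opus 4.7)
The plan is to construct $G$ together with $S$ explicitly and then verify each of the four properties, splitting (if helpful) into the subcase where $H$ is biconnected and the subcase where $H$ has a cut vertex. Write $\alpha := \slb(H)$ and fix a leaf-block $B$ of $H$ of size $\alpha$ (so $B = H$ when $H$ is biconnected). The graph $G$ will be built from copies of $H$ glued by clique-sums along single vertices, together with pendant attachments of $B$-like blocks at chosen positions, and with the $n$ vertices of $S$ placed as designated ``connector'' vertices.

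At a high level, I would take $3n-1$ copies $\hat H_1,\ldots,\hat H_{3n-1}$ of $H$ and arrange them so that their disjoint union forms the packing required by property~2, while each $s_i$ is identified with a vertex inside exactly one copy, via an injection $\pi \colon [n] \to [3n-1]$. To meet property~3 I would add auxiliary pendants (copies of $B$ attached at specific vertices of the $\hat H_j$), tuned so that in $G-S$ these pendants can be combined with the broken pieces $\hat H_{\pi(i)} - s_i$ to form $n-1$ additional disjoint $H$-subgraphs, giving, together with the $2n-1$ untouched copies, a packing of $3n-2$ in $G-S$. Property~1 is then immediate: all assembly operations are clique-sums of size one and pendant edges, and a tree decomposition of $G$ of width $\tw(H)$ is obtained by stitching decompositions of the individual copies together along the shared vertices.

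Property~4 is the technical heart of the proof. For each $v = s_i$, set $X := \{s_i\} \cup \{x_j : j \in [3n-1]\setminus\{\pi(i)\}\}$, where each $x_j$ is a strategically chosen vertex of $\hat H_j$; this ensures $|X| = 3n-1$ and settles 4a. For 4b, I would invoke Proposition~\ref{prop:biconnectedminor}: any $H$-minor in $G-X$ would need a biconnected subgraph on at least $\alpha$ vertices, and the $x_j$'s are chosen so that each $\hat H_j - x_j$ contains no biconnected component of size $\geq \alpha$; the pendants and the small fragments around the remaining $S$-vertices are a~priori too small to supply such a biconnected subgraph. Property~4d would follow by choosing $x_j$ so that the neighbours of every $s_{i'}$ with $i' \neq i$ are removed, leaving $s_{i'}$ in a component of size~$<\alpha$ containing only itself from $S$. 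Property~4c would follow from Proposition~\ref{prop:prune} applied to each ``large'' surviving component: it is obtained from a copy of $H$ by deleting a vertex and then $\alpha$-pruning, hence is a minor of $H$.

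The main obstacle is calibrating the pendant attachments so that they simultaneously supply the $n-1$ extra $H$-subgraphs needed in $G-S$ (property~3) without providing additional $H$-minors that survive the removal of $X$ (which would force $|X| > 3n-1$ and violate property~4b). Using pendants that are copies of the leaf-block $B$ navigates this tension: in $G-S$ each pendant can be aggregated with a broken copy $\hat H_{\pi(i)} - s_i$ to reconstitute an $H$-subgraph, but in $G-X$ the pendants become disconnected fragments of size less than $\alpha$, get stripped away by the $\alpha$-prune, and thus contribute neither to a stray biconnected subgraph nor to a large surviving component.
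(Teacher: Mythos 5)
Your construction is genuinely different from the paper's, and unfortunately the sketch leaves the central obstacle unresolved rather than navigating it. The paper does not build $G$ from $3n-1$ vertex-disjoint copies of $H$ plus pendant leaf-blocks; instead it chains together $2n-1$ ``modules'' $M_1,\ldots,M_{2n-1}$, each built from two copies of $H$ and one extra copy of the leaf-block $L$ glued at cut vertices so that each module contains \emph{three overlapping} candidate $H$-subgraphs. Consecutive modules share single vertices. The packings for properties 2 and 3 use largely \emph{different} $H$-subgraphs from the same modules, and property~4 is argued by exhibiting a hitting family $\mathcal{Q}$ of small vertex sets such that every $H$-model in $G$ must contain some $Q \in \mathcal{Q}$ (proved by carefully case-analysing which biconnected component of $G$ can host the $L$-part of the model via Proposition~\ref{prop:biconnectedminor}), then explicitly choosing an $X$ of size $3n-1$ through $f_j(v)$ that meets every $Q \in \mathcal{Q}$.

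The concrete gap in your approach is property~4b. With $X = \{s_i\} \cup \{x_j : j \neq \pi(i)\}$, the copy $\hat H_{\pi(i)}$ loses \emph{only} $s_i$; the pendant copy of $B$ attached to $\hat H_{\pi(i)}$ (needed for property~3) is untouched and remains attached. If that pendant is glued at the cut vertex of the leaf-block containing $s_i$ — which is exactly the position that lets $\hat H_{\pi(i)} - s_i$ plus the pendant reconstitute an $H$-subgraph in $G - S$ — then the very same reconstitution gives an $H$-subgraph (hence $H$-minor) inside $G - X$, because $G - X$ contains $G - S$ restricted to that module plus more. Your resolution, that ``in $G-X$ the pendants become disconnected fragments of size less than $\alpha$,'' requires $x_j$ to be the pendant's attachment vertex for each $j$; but for $j = \pi(i)$ the vertex removed is $s_i$, not $x_{\pi(i)}$, so the pendant on $\hat H_{\pi(i)}$ survives intact (a biconnected subgraph on $\alpha$ vertices, which is \emph{not} stripped by $\alpha$-pruning). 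Forcing $s_i = x_{\pi(i)}$ is no way out either, since then $G-S$ also disconnects that pendant and property~3 cannot be recovered. This is precisely the tension your last paragraph flags but does not close: any pendant rich enough to rebuild an $H$-subgraph in $G-S$ is also rich enough to rebuild one in $G-X$ unless it is explicitly killed by $X$, and the budget $3n-1$ does not leave room for that. The paper's overlapping-module design avoids this entirely by never introducing structures whose sole purpose is to patch up property~3; the extra $H$-subgraphs of property~3 are already latent in the modules and overlap the ones used in property~2, so removing a single vertex of $S$ trades one subgraph of the first packing for one of the second without any surplus material left over.
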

\begin{proof}
 Consider a subgraph $L$ of $H$ such that $L$ is a smallest leaf-block of $H$. Let $R$ be the graph obtained from $H$ by removing all vertices of $L$ that are not a cut vertex in $H$. Note that when $H$ is biconnected, $L=H$ and $R$ is an empty graph. We distinguish three distinct vertices $a,b,c$ in $H$. Vertices $c$ and $b$ are both part of $L$, where $c$ is the cut vertex (if there is one) and $b$ is any other vertex in $L$. Finally vertex $a$ is any vertex in $H$ that is not $c$ or $b$. See Fig.~\ref{fig:graphH}. In the construction of $G$ we will combine copies of $H$ such that $a$, $b$, and $c$ form cut vertices in $G$ and are part of two different $H$-subgraphs. Vertices $b$ and $c$ are chosen such that removing either one from a copy of $H$ in $G$ means no vertex from the $L$-subgraph of this copy of $H$ can be used in a minimal $H$-model in $G$.
 In the remainder of this proof we use $f_{K\rightarrow K'} \colon V(K) \rightarrow V(K')$ for isomorphic graphs $K$ and $K'$ to denote a fixed isomorphism.
 
 Take two copies of $H$, call them $H_1$ and $H_2$. Let $R_1$ and $L_1$ denote the subgraphs of $H_1$ related to $R$ and $L$, respectively, by the isomorphism between $H$ and $H_1$. Similarly let $R_2$ and $L_2$ denote the subgraphs of $H_2$. Take a copy of $L$ which we call $L_3$. Let $M$ be the graph obtained from the disjoint union of $H_1$, $H_2$, and $L_3$ by identifying the pair $f_{H \rightarrow H_1}(c)$ and $f_{H \rightarrow H_2}(b)$ into a single vertex $s$, and identifying the pair $f_{H \rightarrow H_2}(c)$ and $f_{L \rightarrow L_3}(c)$ into a single vertex $t$. We label $f_{H \rightarrow H_1}(a)$, $f_{H \rightarrow H_1}(b)$, and $f_{L \rightarrow L_3}(b)$ as $u$, $w$, and $v$ respectively.
 
 This construction is motivated by the fact that the graphs $M - \{v,s\}$, $M - \{u,t\}$, and $M - \{w,t\}$ are all $H$-minor-free, which we will exploit in the formal correctness argument later. We will connect copies of $M$ to each other via the vertices $u$, $v$, and $w$ so that, although two vertices need to be removed in every copy of $M$, one such vertex can always be in two copies of $M$ at the same time.
 
 \ifllncs
    \begin{figure}[bt]
      \centering
      \subfigure[Graph $H$]{
	\includegraphics[scale=0.75,trim={-0.5cm -0.5cm -0.5cm 0},clip]{graphH-2.eps}
	\label{fig:graphH}
      }\hfill
      \subfigure[Graph $M_i$ for $1 \leq i \leq n$]{
	\includegraphics[scale=0.8]{graphMi-part1.eps}
	\label{fig:graphMi}
      }\hfill
      \subfigure[Graph $M_{n+i}$ for $1 \leq i < n$]{
	\includegraphics[scale=0.8]{graphMi-part2.eps}
	\label{fig:graphMni}
      }
      \caption{We show the situation where $a$ is contained in $R$. Note that $a$ can always be chosen such that it is contained in $R$ when $H$ is not biconnected. Note that the graphs in Fig.~\ref{fig:graphMi} and~\ref{fig:graphMni} are isomorphic but drawn differently.}
      \label{fig:graphHM}
    \end{figure}
 \else
    \begin{figure}[bt]
      \centering
      \begin{subfigure}{0.15\textwidth}
	\includegraphics[scale=0.75,trim={-0.5cm -0.5cm -0.5cm 0},clip]{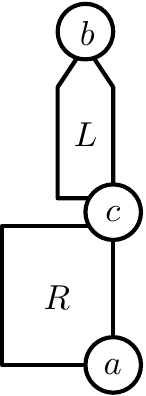}
	\caption{Graph $H$}
	\label{fig:graphH}
      \end{subfigure}
      \hfill
      \begin{subfigure}{0.40\textwidth}
	\includegraphics[scale=0.8]{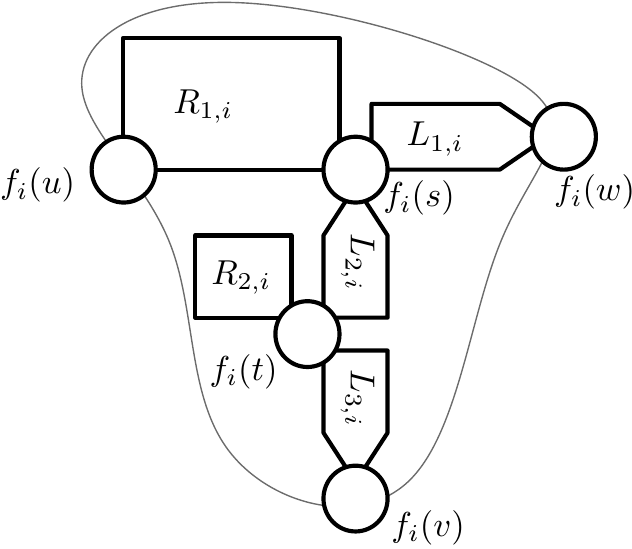}
	\caption{Graph $M_i$ for $1 \leq i \leq n$}
	\label{fig:graphMi}
      \end{subfigure}
      \hfill
      \begin{subfigure}{0.40\textwidth}
	\includegraphics[scale=0.8]{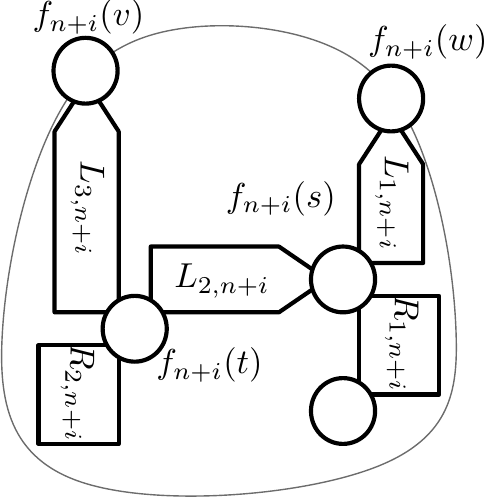}
	\caption{Graph $M_{n+i}$ for $1 \leq i < n$}
	\label{fig:graphMni}
      \end{subfigure}
      \caption{We show the situation where $a$ is contained in $R$. Note that $a$ can always be chosen such that it is contained in $R$ when $H$ is not biconnected. Note that the graphs in Fig.~\ref{fig:graphMi} and~\ref{fig:graphMni} are isomorphic but drawn differently.}
      \label{fig:graphHM}
    \end{figure}
 \fi

 Now take $2n-1$ copies of $M$, call them $M_1, \ldots, M_{2n-1}$. For readability we denote $f_{M \rightarrow M_i}$ as $f_i$ for all $1 \leq i \leq 2n-1$. For all $1 \leq i < n$ we identify $f_{i}(w)$ and $f_{n+i}(v)$, and we identify $f_{n+i}(w)$ and $f_{i+1}(u)$. Let this graph be $G$, and let $S$ be the set of vertices $f_i(v)$ for all $1 \leq i \leq n$. Let $H_{1,i}$, $H_{2,i}$, $R_{1,i}$, $R_{2,i}$, $L_{1,i}$, $L_{2,i}$, and $L_{3,i}$ denote the subgraphs in $M_i$ that correspond to the subgraphs $H_1$, $H_2$, $R_1$, $R_2$, $L_1$, $L_2$, and $L_3$ in $M$. See Fig.~\ref{fig:graphMi} and \ref{fig:graphMni}.
 
 This concludes the description of graph $G$ and set $S$. It is easily seen that these can be constructed in polynomial time. It remains to verify that all conditions of the lemma statement are met.
 
 \textbf{(\ref{clausegadget:1})} Since we connected copies of $L$ and $R$ in a treelike fashion along cut vertices, we did not introduce any new biconnected components. Since the treewidth of a graph is equal to the maximum treewidth over all its biconnected components we know that $\tw(G) \leq \max\{\tw(R), \tw(L)\} = \tw(H)$.
 
 \textbf{(\ref{clausegadget:2})} For each $1 \leq i \leq n$ we can distinguish two $H$-subgraphs in $M_i$, namely $H_{1,i}$ and $L_{3,i} \cup R_{2,i}$. This gives us $2n$ $H$-subgraphs in $G$. Note that since all $M_1, \dots, M_n$ are vertex-disjoint, these $2n$ $H$-subgraphs are also vertex-disjoint in $G$. For each $n < i \leq 2n-1$ we distinguish one $H$-subgraph, namely $H_{2,i}$. Note that since $H_{2,i}$ is vertex-disjoint from all $M_1, \dots, M_{i-1}, M_{i+1}, \dots, M_{2n-1}$ we have a total of $2n+n-1 = 3n-1$ vertex-disjoint $H$-subgraphs in $G$. This packing is shown in Fig.~\ref{fig:clause1}.

\textbf{(\ref{clausegadget:3})} Alternatively, for each $1 \leq i \leq n$ we can distinguish one $H$-subgraph in $M_i$, namely $H_{2,i}$. For each $n < i \leq 2n-1$ we distinguish two $H$-subgraphs in $M_i$, namely $H_{1,i}$ and $L_{3,i} \cup R_{2,i}$. Again these $H$-subgraphs are vertex-disjoint, and since they also do not contain any vertices of $S$, they form a packing of $n + 2(n-1) = 3n-2$ vertex-disjoint $H$-subgraphs in $G-S$. See Fig.~\ref{fig:clause2}.

\ifllncs
    \begin{figure}[bt]
      \centering
      \subfigure[A packing of $3n-1$ vertex-disjoint $H$-subgraphs in $G$]{
	\includegraphics[width=\textwidth]{clause1.eps}
	\label{fig:clause1}
      }
      \subfigure[A packing of $3n-2$ vertex-disjoint $H$-subgraphs in $G-S$]{
	\includegraphics[width=\textwidth]{clause2.eps}
	\label{fig:clause2}
      }
      \caption{Two packings of vertex-disjoint $H$-subgraphs in $G$ and $G-S$. Vertices in $S$ are marked black.}
    \end{figure}
\else
    \begin{figure}[bt]
      \centering
      \begin{subfigure}{\textwidth}
	\includegraphics[width=\textwidth]{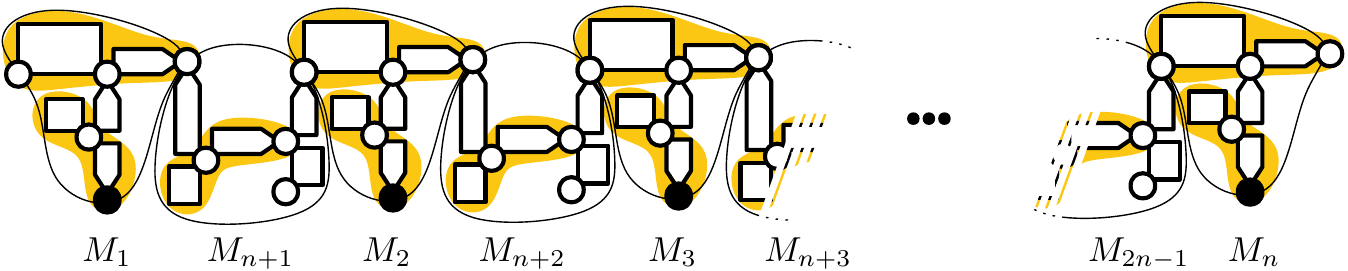}
	\caption{A packing of $3n-1$ vertex-disjoint $H$-subgraphs in $G$}
	\label{fig:clause1}
      \end{subfigure}
      
      \begin{subfigure}{\textwidth}
	\includegraphics[width=\textwidth]{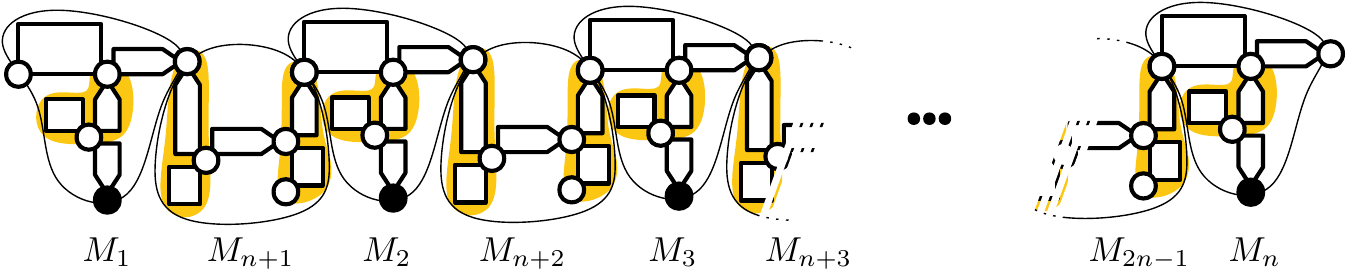}
	\caption{A packing of $3n-2$ vertex-disjoint $H$-subgraphs in $G-S$}
	\label{fig:clause2}
      \end{subfigure}
      \caption{Two packings of vertex-disjoint $H$-subgraphs in $G$ and $G-S$. Vertices in $S$ are marked black.}
    \end{figure}
\fi
 
 \textbf{(\ref{clausegadget:4})}
 Finally we prove that for all $v \in S$ there exists a set $Z \subseteq V(G)$ of size $3n-1$ such that the four conditions listed in condition~\ref{clausegadget:4} are true. For this purpose we first identify a family \Q of vertex sets such that any $H$-model in $G$ spans at least one vertex set in \Q. Let \Q be defined as follows: (see Fig.~\ref{fig:clause3})
 \begin{multline*}
  \Q =
  \{\{f_i(v),f_i(t)\} \mid 1 \leq i \leq 2n-1\}
  \cup \{\{f_i(t),f_i(s)\} \mid 1 \leq i \leq 2n-1\}\\
  \cup \{\{f_i(s),f_i(w)\} \mid n+1 \leq i \leq 2n-1\}
  \cup \{\{f_i(u),f_i(s),f_i(w)\} \mid 1 \leq i \leq n\}.\\
 \end{multline*}
 
 \begin{figure}[bt]
  \centering
  \includegraphics[width=\textwidth]{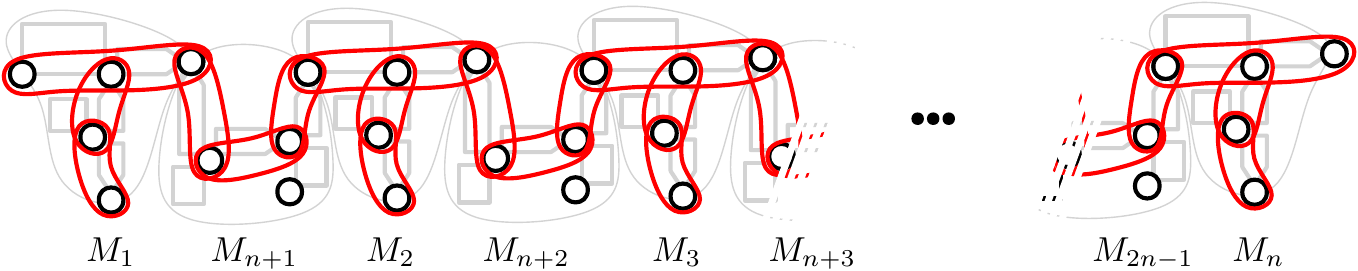}
  \caption{Vertex sets in \Q are encircled.}
  \label{fig:clause3}
 \end{figure}
 
 \begin{subclaim} \label{claim:clausegadget:QsubsetPhi}
  If $\varphi$ is an $H$-model in $G$, then $\varphi(H) \supseteq Q$ for some $Q \in \Q$.
 \end{subclaim}
 \begin{claimproof}
  Let $\varphi$ be an arbitrary $H$-model in $G$. We know from Proposition~\ref{prop:biconnectedminor} that $G[\varphi(L)]$ contains a biconnected subgraph on at least $|L|$ vertices. Let $B$ be such a biconnected subgraph in $G$. Subgraph $B$ must be fully contained in a biconnected component of $G$. Such a biconnected component must contain least $|B| \geq |L|$ vertices. We do a case distinction over all biconnected components in $G$ with size at least $|L|$, and prove that if $B$ is contained in them, then $\varphi(H) \supseteq Q$ for some $Q \in \mathcal{Q}$.
  \begin{itemize}
    \item $L_{2,i}$ for any $1 \leq i \leq 2n-1$: We know $|L_{2,i}| = |L|$ so $B = L$, hence $f_i(t),f_i(s) \in \varphi(L)$.
    \item $L_{3,i}$ for any $1 \leq i \leq 2n-1$: We know $|L_{3,i}| = |L|$ so $B = L$, hence $f_i(v),f_i(t) \in \varphi(L)$.
    \item $L_{1,i}$ for any $n+1 \leq i \leq 2n-1$: We know $|L_{1,i}| = |L|$ so $B = L$, hence $f_i(s),f_i(w) \in \varphi(L)$.
    \item $L_{1,i}$ for any $1 \leq i \leq n$: We know $|L_{1,i}| = |L|$ so $B = L$, hence $f_i(s),f_i(w) \in \varphi(L)$. If $f_i(t) \in \varphi(H)$ or $f_i(u) \in \varphi(H)$ then clearly $\mathcal{Q} \ni \{f_i(t),f_i(s)\} \subseteq \varphi(H)$ or $\mathcal{Q} \ni \{f_i(u),f_i(s),f_i(w)\} \subseteq \varphi(H)$. If $f_{n+i}(t) \in \varphi(H)$ then $\mathcal{Q} \ni \{f_{n+i}(v),f_{n+i}(t)\} \subseteq \varphi(H)$, since $f_i(w) = f_{n+i}(v)$. Suppose $\varphi(H)$ does not contain $f_i(t)$, $f_i(u)$, or $f_{n+i}(t)$, then $\varphi$ must be an $H$-model in the graph $G' := (H_{1,i} - f_i(u)) \cup (L_{2,i} - f_i(t)) \cup (L_{3,n+i} - f_i(t))$, so $H \minorof G'$. By Proposition~\ref{prop:prune} we know that $|L|\prune(H) \minorof |L|\prune(G')$. Clearly $|L|\prune(H) = H$. The graph $G'$ contains at least two leaf blocks that are smaller than $|L|$, namely $L_{2,i} - f_i(t)$ and $L_{3,n+i} - f_i(t)$, so $|L|\prune(G')$ is a subgraph of $H_{1,i} - f_i(u)$. But then $|V(|L|\prune(G'))| < |V(H)|$ so $|L|\prune(G')$ cannot contain an $H$-model. Contradiction.
    \item There can be biconnected components of size at least $|L|$ in $R_{2,i}$ for any $1 \leq i \leq 2n-1$. Suppose $f_i(t) \not\in \varphi(H)$, then $\varphi$ must be an $H$-model in the graph $R_{2,i} - f_i(t)$. Clearly this is not possible since $|V(R_{2,i} - f_i(t))| < |V(H)|$, so $f_i(t) \in \varphi(H)$. If $f_i(v) \in \varphi(H)$ or $f_i(s) \in \varphi(H)$ then $\mathcal{Q} \ni \{f_i(v),f_i(t)\} \subseteq \varphi(H)$ or $\mathcal{Q} \ni \{f_i(t),f_i(s)\} \subseteq \varphi(H)$. Suppose $\varphi(H)$ does not contain $f_i(v)$ or $f_i(s)$, then $\varphi$ must be an $H$-model in the graph $G' := R_{2,i} \cup (L_{3,i} - f_i(v)) \cup (L_{2,i} - f_i(s))$, so $H \minorof G'$. By Proposition~\ref{prop:prune} we know that $|L|\prune(H) \minorof |L|\prune(G')$, so $H \minorof |L|\prune(G') = R_{2,i}$. This is a contradiction since $R_{2,i}$ cannot contain $H$ as a minor.
    \item There can be biconnected components of size at least $|L|$ in $R_{1,i}$ for any $n+1 \leq i \leq 2n-1$. Suppose $f_i(s) \not\in \varphi(H)$, then $\varphi$ must be an $H$-model in the graph $R_{1,i} - f_i(s)$. Clearly this is not possible since $|V(R_{1,i} - f_i(s))| < |V(H)|$, so $f_i(s) \in \varphi(H)$. If $f_i(t) \in \varphi(H)$ or $f_i(w) \in \varphi(H)$ then $\mathcal{Q} \ni \{f_i(t),f_i(s)\} \subseteq \varphi(H)$ or $\mathcal{Q} \ni \{f_i(s),f_i(w)\} \subseteq \varphi(H)$. Suppose $\varphi(H)$ does not contain $f_i(t)$ or $f_i(w)$, then $\varphi$ must be an $H$-model in the graph $G' := R_{1,i} \cup (L_{2,i} - f_i(t)) \cup (L_{1,i} - f_i(w))$, so $H \minorof G'$. By Proposition~\ref{prop:prune} we know that $|L|\prune(H) \minorof |L|\prune(G')$, so $H \minorof |L|\prune(G') = R_{1,i}$. This is a contradiction since $R_{1,i}$ cannot contain $H$ as a minor.
    \item There can be biconnected components of size at least $|L|$ in $R_{1,i}$ for any $1 \leq i \leq n$. Suppose $f_{n+i-1}(s) \in \varphi(H)$ then $f_i(u) = f_{n+i-1}(w) \in \varphi(H)$ since any path in $G$ connecting $f_{n+i-1}(s)$ to any vertex in $R_{1,i}$ includes $f_i(u)$. So $\mathcal{Q} \ni \{f_{n+i-1}(s), f_{n+i-1}(w)\} \subseteq \varphi(H)$. Similarly if $f_i(t) \in \varphi(H)$ then $\mathcal{Q} \ni \{f_i(t),f_i(s)\} \subseteq \varphi(H)$ and if $f_{n+i}(t) \in \varphi(G)$ then $\mathcal{Q} \ni \{f_{n+i}(v),f_{n+i}(t)\} \subseteq \varphi(H)$. Suppose $\varphi(H)$ does not contain $f_{n+i-1}(s)$, $f_i(t)$ or $f_{n+i}(t)$, then $\varphi$ must be an $H$-model in $G' := H_{1,i} \cup (L_{1,n+i-1} - f_{n+i-1}(s)) \cup (L_{2,i} - f_i(t)) \cup (L_{3,n+i} - f_{n+i}(t))$. If $\mathcal{Q} \ni \{f_i(u),f_i(s),f_i(w)\} \subseteq \varphi(H)$, then the claim holds, so suppose $\{f_i(u),f_i(s),f_i(w)\} \not\subseteq \varphi(H)$, then for some $p \in \{f_i(u),f_i(s),f_i(w)\}$ we have that $\varphi$ is an $H$-model in $G' - p$. Therefore $H \minorof G'-p$ and by Proposition~\ref{prop:prune} we know $|L|\prune(H) \minorof |L|\prune(G'-p)$, so $H \minorof |L|\prune(G'-p) = |L|\prune(H_{1,i}-p)$. However $|L|\prune(H_{1,i}-p)$ has at most $|V(H_{1,i}-p)| = |V(H)|-1$ vertices, so it cannot contain an $H$-model. Contradiction.
  \end{itemize}
  This concludes the proof of Claim~\ref{claim:clausegadget:QsubsetPhi}.
 \end{claimproof}
 
 We now proceed to prove condition~\ref{clausegadget:4} of the lemma statement. Let $f_j(v) \in S$ be an arbitrary vertex in $S$, implying $1 \leq j \leq n$, and take
 \ifllncs
    \begin{equation*}
    X =
    \bigcup\limits_{1 \leq i < j} \{f_i(t), f_i(w), f_{i+n}(s)\}
      \cup
    \{f_j(v), f_j(s)\}
      \cup
    \bigcup\limits_{j < i \leq n} \{f_i(t), f_i(u), f_{i+n-1}(t)\} .
    \end{equation*}
\else
    \begin{equation*}
    X =
    \left(\bigcup\limits_{1 \leq i < j} \left\{f_i(t), f_i(w), f_{i+n}(s)\right\}\right)
      \cup
    \{f_j(v), f_j(s)\}
      \cup
    \left(\bigcup\limits_{j < i \leq n} \left\{f_i(t), f_i(u), f_{i+n-1}(t)\right\}\right) .
    \end{equation*}
\fi
 In Fig.~\ref{fig:clause5} the vertices in $Z$ are shown in graph $G$ as a red cross. Observe that $|X| = 3n-1$ and $f_j(v) \in X$. Furthermore $X$ contains at least one element from each set in $\Q$, hence $G-X$ so is $H$-minor-free by Claim~\ref{claim:clausegadget:QsubsetPhi}. This shows condition~\ref{clausegadget:4a} and \ref{clausegadget:4b} of the lemma statement holds. We proceed to show conditions~\ref{clausegadget:4c} and \ref{clausegadget:4d}.
 
 \begin{figure}[bt]
  \centering
  \includegraphics[width=\textwidth]{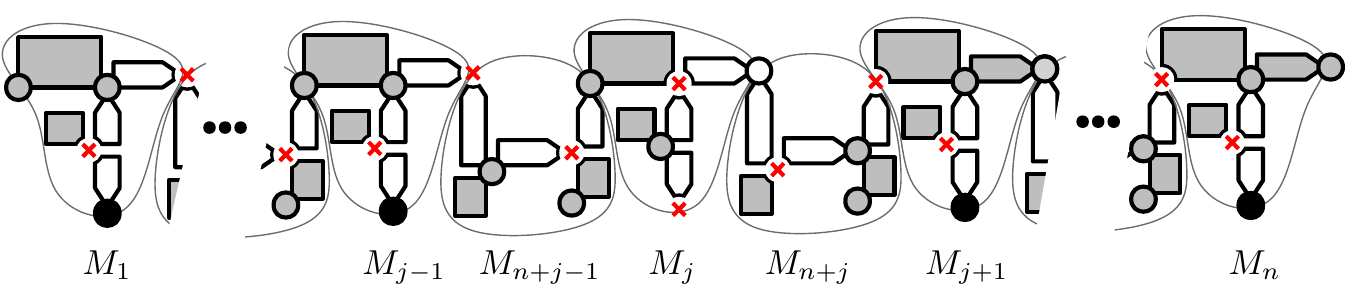}
  \caption{The graph $G-X$. Vertices in $X$ that are removed from the graph are marked by a cross. Vertices in $S$ are marked black. A supergraph of $\slb(H)\prune(G-X)$ is shown in gray. Note that when $|V(R)| = |V(L)|$, not all subgraphs and vertices marked gray are necessarily part of $\slb(H)\prune(G-X)$.}
  \label{fig:clause5}
 \end{figure}

\textbf{(\ref{clausegadget:4c})} Consider the graph $G' := \slb(H)\prune(G-X)$. Figure~\ref{fig:clause5} shows a supergraph of $G'$ in gray for the case that $a \in V(R)$. Every connected component in $G'$ must contain a biconnected component with at least $\slb(H)=|L|$ vertices. Consider all biconnected components in $G-X$ containing at least $|L|$ vertices, these can only be contained in the following subgraphs of $G$:
 \begin{itemize*}[label=]
  \item $R_{2,i}$ for any $1 \leq i \leq 2n-1$,
  \item $H_{1,i}$ for any $1 \leq i \leq n$, and
  \item $R_{1,i}$ for any $n+1 \leq i \leq 2n-1$.
 \end{itemize*}
 Note that any path from a vertex of one of these subgraphs to a vertex of another contains at least one vertex in $X$, hence any connected component in $G$ contains vertices of at most one of these subgraphs. Since all other biconnected components in $G-X$ have size less than $|L|$ we know that each connected component in $|L|\prune(G-X)$ is a subgraph of $R_{1,i}$, $R_{2,i}$ or $H_{1,i}$ for some $i$, hence $|L|\prune(G-X) \wminorof H$.
 
 \textbf{(\ref{clausegadget:4d})} Finally we show that all connected components in $G-X$ that contain a vertex of $S$ have size less than $|L|$. Since we have $f_j(v) \in X$, there is no connected component in $G-X$ containing $f_j(v)$. For all $i \neq j$ we have $f_i(t) \in X$ so the connected components in $G-X$ containing a vertex from $S$ are $L_{3,i} - f_i(t)$ for all $1 \leq i < j$ or $j < i \leq n$. These all have size $|L|-1$ and contain exactly one vertex of $S$.
\end{proof}

Using the clause gadget described in Lemma~\ref{lem:clausegadget} we give a polynomial-parameter transformation for the case where \F contains a single, connected graph $H$.

\begin{lemma} \label{lem:construction}
 For any connected graph $H$ with at least three vertices there exists a polynomial-time algorithm that, given a {\sc CNF}-formula $\Phi$ with $k$ variables, outputs a graph $G$ and an integer $\ell$ such that all of the following are true:
 \begin{enumerate}
  \item \label{construction:1}there is a set $S \subseteq V(G)$ of at most $2k$ vertices such that $\tw(G - S) \leq \tw(H)$,
  \item \label{construction:2}$G$ contains $\ell$ vertex-disjoint $H$-subgraphs,
  \item \label{construction:3}if $\Phi$ is not satisfiable then there does not exist a set $X \subseteq V(G)$ of size at most $\ell$ such that $G - X$ is \mbox{$H$-subgraph}-free,
  \item \label{construction:4}if $\Phi$ is satisfiable then there exists a set $X \subseteq V(G)$ of size at most $\ell$ such that $G - X$ is \mbox{$H$-minor}-free, $\slb(H)\prune(G - X) \wminorof H$, and $\tw(G-X) \leq \tw(H)$.
 \end{enumerate}
\end{lemma}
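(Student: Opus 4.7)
The plan is to construct $G$ by glueing a variable gadget $V_i$ for each variable $x_i$ and a clause gadget $G_j$ for each clause $C_j$. The variable gadget $V_i$ will be a fresh copy of $H$ with two arbitrary distinct designated vertices $y_i^+$ and $y_i^-$; the clause gadget $G_j$ will be the output $(G_j, S_j)$ of Lemma~\ref{lem:clausegadget} applied to $n = n_j$, where $S_j = \{s_{j,1}, \ldots, s_{j,n_j}\}$. For each literal position $t$ of $C_j$, I identify the port $s_{j,t}$ with $y_i^+$ or $y_i^-$ according to whether the literal is $x_i$ or $\neg x_i$. The budget is $\ell := k + \sum_j (3n_j - 2)$.

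Properties~\ref{construction:1} and~\ref{construction:2} should be straightforward: removing all $2k$ variable vertices leaves the disjoint union of $V_i - \{y_i^+,y_i^-\}$ and $G_j - S_j$, each of treewidth at most $\tw(H)$ by Lemma~\ref{lem:clausegadget}(\ref{clausegadget:1}); and combining the $k$ copies of $H$ from the $V_i$'s with the $3n_j - 2$ disjoint $H$-subgraphs in each $G_j - S_j$ from Lemma~\ref{lem:clausegadget}(\ref{clausegadget:3}) yields $\ell$ pairwise disjoint $H$-subgraphs, since the only cross-gadget sharing lies in $S_j \subseteq \{y_i^\pm\}$ which the $G_j - S_j$ subgraphs avoid. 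For property~\ref{construction:3}, any $X$ of size at most $\ell$ that makes $G-X$ subgraph-free must hit every packing subgraph with exactly one vertex, so $|X \cap (G_j - S_j)| = 3n_j - 2$; combined with the $3n_j - 1$ disjoint $H$-subgraphs of $G_j$ from Lemma~\ref{lem:clausegadget}(\ref{clausegadget:2}), this forces $|X \cap S_j| \geq 1$ for each $j$, and the vertices $y_i^\pm \in X$ (extended arbitrarily on variables whose unique $X$-vertex in $V_i$ is not a port) then define a satisfying assignment of $\Phi$.

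For property~\ref{construction:4}, given a satisfying assignment $\tau$ I put every $y_i^{\tau(i)}$ into $X$, select for each clause $C_j$ a true literal whose port $v_j$ equals some $y_i^{\tau(i)}$, invoke Lemma~\ref{lem:clausegadget}(\ref{clausegadget:4}) with this $v_j$ to obtain $X_j$ of size $3n_j - 1$ with $v_j \in X_j$, and add $X_j \setminus \{v_j\}$ to $X$, yielding $|X| = \ell$. The treewidth bound is immediate because every biconnected component of $G-X$ lies inside a single gadget. The main obstacle will be proving $H$-minor-freeness of $G - X$ and $\slb(H)\prune(G-X) \wminorof H$, because a surviving variable vertex $y := y_i^{\overline{\tau(i)}}$ may act as a cut vertex glueing $V_i$ to several clause gadgets, so connected components of $G-X$ can span several gadgets. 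I plan to handle this via a confinement claim: every $\slb(H)$-robust subgraph $Y$ of $G - X$ lies within a single gadget, because any arm of $Y - y$ protruding into a clause gadget $G_j$ would be contained in the component of $y$ in $G_j - X_j$, which by Lemma~\ref{lem:clausegadget}(\ref{clausegadget:4d}) has fewer than $\slb(H)$ vertices and thus cannot supply an arm of size at least $\slb(H) - 1$. Applying confinement to a minimal $H$-model (robust by Proposition~\ref{prop:minmodelrobust}) together with Lemma~\ref{lem:clausegadget}(\ref{clausegadget:4b}) and the vertex-count bound on $V_i - y_i^{\tau(i)}$ yields $H$-minor-freeness, while applying it component-wise to the pruned graph together with Lemma~\ref{lem:clausegadget}(\ref{clausegadget:4c}) and the subgraph-of-$H$ observation for variable-gadget components yields the required component-wise minor relation.
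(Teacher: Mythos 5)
Your construction, budget, choice of $X$, and verification strategy match the paper's proof essentially exactly. The confinement claim you state for $\slb(H)$-robust subgraphs is the same mechanism the paper uses (phrased there in terms of biconnected components of a connected component $G'$ of $G-X$ all lying in a single gadget, with those in clause gadgets bounded below $\slb(H)$ via condition~\ref{clausegadget:4d}); your contrapositive formulation of property~\ref{construction:3} ($|X\cap S_j|\geq 1$ for all $j$ yields a satisfying assignment) is logically identical to the paper's direct contradiction. One small bookkeeping point worth spelling out when you flesh this in: after confining an $\slb(H)$-robust subgraph $Y$ to a clause gadget $G_j$, Proposition~\ref{prop:uniquerobust} is what lets you place each connected component of $\slb(H)\prune(G-X)$ inside $\slb(H)\prune(W_j-X_j)$ so that condition~\ref{clausegadget:4c} applies; and when you invoke Proposition~\ref{prop:minmodelrobust} you should note that $H$ is $\slb(H)$-robust by the paper's observation that $\alpha$-robustness is equivalent to $\slb(\cdot)\geq\alpha$.
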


\begin{proof}
 Let $x_1, \dots, x_k$ denote the variables of $\Phi$, let $C_1, \dots, C_m$ denote the sets of literals in each clause of $\Phi$, and let $n$ denote the total number of occurrences of literals in~$\Phi$, i.e.~$n = \sum_{1 \leq j \leq m}|C_j|$.
 Let $H_1, \dots, H_k$ be copies of $H$. In each copy $H_i$ we arbitrarily label one vertex $v_{x_i}$ and another $v_{\neg x_i}$. Let $G_{var}$ be the graph obtained from the disjoint union of $H_1, \dots, H_k$. For each clause $C_j$ of $\Phi$ we create a graph called~$W_j$ and vertex set $S_j \subseteq V(W_j)$ by invoking Lemma~\ref{lem:clausegadget} with $H$ and $|C_j|$. Let $G$ be the graph obtained from the disjoint union of $W_1,\dots,W_m$ and $G_{var}$ where we identify the vertices in $S_j$ with the appropriate $v_{x_i}$ or $v_{\neg x_i}$ as follows: For each clause~$C_j$ let $s_1,\dots,s_{|C_j|}$ be the vertices in $S_j$ in some arbitrary order, and let $c_1,\dots,c_{|C_j|}$ be the literals in $C_j$, then we identify $s_i$ and $v_{c_i}$ for each $1 \leq i \leq |C_j|$. Finally let $\ell = k+3n-2m$ and $S = \bigcup_{1 \leq i \leq k} \{v_{x_i}, v_{\neg x_i}\}$. Note that $S_j \subseteq S$ for all $1 \leq j \leq m$. This concludes the description of $G$, $\ell$, and~$S$.
	
 It is easy to see they can be constructed in polynomial time. We proceed to show that all conditions in the lemma statement are met.
 
 \textbf{(\ref{construction:1})} Clearly $|S| = 2k$ and since every connected components in $G-S$ is a subgraph of $H_1, \dots, H_k$ or $W_1, \dots, W_m$, we can easily see that $\tw(G-S) \leq \tw(H)$.
 
 \textbf{(\ref{construction:2})} For all $1 \leq j \leq m$ we know from Lemma~\ref{lem:clausegadget} that $W_j - S$ contains a packing of $3|C_j|-2$ $H$-subgraphs. Since $W_j - S$ and $W_i - S$ are vertex-disjoint for $j \neq i$ we can combine these packings to obtain a packing in $G-S$ of $\sum_{1 \leq j \leq m} 3|C_j| - 2 = 3n-2m$ vertex-disjoint $H$-subgraphs. Note that this packing does not contain vertices from $H_1, \dots, H_k$, so we can add these to the packing and obtain a packing of $k+3n-2m = \ell$ vertex disjoint $H$-subgraphs in $G$.
 
 \textbf{(\ref{construction:3})} We now show that if $\Phi$ is not satisfiable, then there does not exist a set $X \subseteq V(G)$ of size at most $\ell$ such that $G - X$ is $H$-subgraph-free. Suppose there exists such a set $X$. Since there is a packing of $\ell$ vertex-disjoint $H$-subgraphs in $G$, we know that $X$ contains exactly one vertex from each $H$-subgraph in the packing. Since $v_{x_i}$ and $v_{\neg x_i}$ belong to the same subgraph, they cannot both be contained in $X$. Consider the variable assignment where $x_i$ is assigned \textit{true} if $v_{x_i} \in X$ or \textit{false} otherwise. Since we assumed $\Phi$ is not satisfiable, there is at least one clause in $\Phi$ that evaluates to \textit{false} with this variable assignment. Let $C_j$ denote such a clause. Since $C_j$ evaluates to \textit{false}, all of its literals must be \textit{false}, so for all variables $x_i$ that are not negated in $C_j$ we have $x_i = \textit{false}$ and therefore $v_{x_i} \not\in X$. For all negated variables $x_i$ in $C_j$ we know $x_i = \textit{true}$ meaning $v_{x_i} \in X$, so $v_{\neg x_i} \not\in X$. This means that $\emptyset = X \cap S_j = X \cap V(W_j) \cap V(G_{var})$, but since $G_{var}$ contains $k$ vertex-disjoint $H$-subgraphs we have $|X \cap V(G_{var})| \geq k$, so then $|X \cap (V(G_{var}) \setminus V(W_j))| \geq k$. For all $i$ there is a packing of $3|C_i|-2$ vertex-disjoint $H$-subgraphs in $W_i - S = W_i - V(G_{var})$, so in the graph $G - V(W_j)$ there are $k + \sum_{i \neq j}(3|C_i|-2)$ vertex-disjoint $H$-subgraphs. This means that $|X \cap (V(G) \setminus V(W_j))| \geq k + \sum_{i \neq j}(3|C_i|-2)$, and since $|X| = k + \sum_{1 \leq i \leq m}(3|C_i|-2)$ we know that $|X \cap V(W_j)| \leq 3|C_j|-2$. However $W_j$ contains $3|C_j|-1$ vertex-disjoint $H$-subgraphs, so $G-X$ cannot be $H$-subgraph-free. Contradiction.
 
 \textbf{(\ref{construction:4})} Finally we show that if $\Phi$ is satisfiable then there exists a set $X \subseteq V(G)$ of size at most $\ell$ such that $G - X$ is $H$-minor-free, $\slb(H)\prune(G - X) \wminorof H$, and $\tw(G-X) \leq \tw(H)$. Since $\Phi$ is satisfiable there exists a variable assignment such that each clause contains at least one literal that is true. Consider the set $X'$ consisting of all vertices $v_{x_i}$ when $x_i$ is \textit{true} and $v_{\neg x_i}$ when $x_i$ is \textit{false}. Since every clause contains one literal that is true, we know for each $1 \leq j \leq m$ that $W_j$ contains at least one vertex from $X'$. So for each $1 \leq j \leq m$ we have $X' \cap S_j \neq \emptyset$. Take and arbitrary vertex $v_j \in X' \cap S_j$ and let $X_j \subseteq V(W_j)$ be the vertex set containing $v_j$ obtained from condition~\ref{clausegadget:4} of Lemma~\ref{lem:clausegadget}. Let $X = X' \cup \bigcup_{1 \leq j \leq m} X_j$. For all $1 \leq j \leq m$ we know $|X' \cap X_j| \geq 1$ since $v_j \in X' \cap X_j$. So $|X| \leq |X'| + \sum_{1 \leq j \leq m} (3|C_j|-2) = k + 3n - 2m = \ell$.
%
 
 By condition~\ref{clausegadget:4b} we have that $W_j - X_j$ is $H$-minor-free for all $1 \leq j \leq m$, so clearly $W_j - X$ is also $H$-minor-free. Consider an arbitrary connected component $G'$ of $G-X$. If $G'$ is also a connected component of $W_j - X$ for some $1 \leq j \leq m$, then we have that $G'$ is $H$-minor-free, $\slb(H)\prune(G') \minorof H$ (by condition~\ref{clausegadget:4c}), and $\tw(G') \leq \tw(W_j) \leq \tw(H)$.
 If $G'$ is not a connected component of $W_j - X$ for any $1 \leq j \leq m$, then it contains a connected component of $H_i - X$ as a subgraph, for some $1 \leq i \leq k$. When $G'$ does not contain any vertices of $S$ we know that $G'$ must be a subgraph of $H_i$, so $G'$ is $H$-minor-free, $\slb(H)\prune(G') \minorof G' \minorof H$, and $\tw(G') \leq \tw(H_i) = \tw(H)$.
 
 Suppose on the other hand $G'$ does contain a vertex $v \in S$. No connected component of $W_j - X_j$ contains more than one vertex from $S$ and each connected component of $G_{var}$ contains exactly two vertices of $S$, one of which is in $X$. So $v$ is the only vertex in $G'$ that is contained in $S$. Moreover, since $S$ is the only overlap between the graphs $G_{var}$ and $W_j$ for all $1 \leq j \leq m$, we have that $v$ is a cut vertex in $G'$, such that for some $1 \leq i \leq k$, each biconnected component of $G'$ is a subgraph of $H_i - X$ or $W_j - X$ for any $1 \leq j \leq m$. So each of these biconnected components of $G'$ has treewidth at most $\tw(H)$, hence $\tw(G') \leq \tw(H)$. Also, each biconnected component in $G'$ that is a subgraph of $W_j - X = W_j - X_j$ for some $1 \leq j \leq m$ contains a vertex from $S$ and therefore has size at most $\slb(H)-1$ by condition~\ref{clausegadget:4d} on the choice of $X_j$. So we have that $\slb(H)\prune(G')$ is a subgraph of $H_i$, hence $\slb(H)\prune(G') \minorof G' \minorof H$. Additionally since $H_i$ contains at least one vertex that is not contained in $G'$ we have $H \not\minorof \slb(H)\prune(G')$. Because $H = \slb(H)\prune(H)$ we can conclude by Proposition~\ref{prop:pruning} that $G'$ is $H$-minor-free. Since $H$ is connected, and all connected components of $G - X$ are $H$-minor-free, $G-X$ must also be $H$-minor-free. We also know for all connected components $G'$ of $G - X$ that $\slb(H)\prune(G') \minorof H$, so $\slb(H)\prune(G-X) \wminorof H$. Finally since $\tw(G') \leq \tw(H)$ for each connected component $G'$ of $G - X$ we have that $\tw(G-X) \leq \tw(H)$.
\end{proof}

 The construction from Lemma~\ref{lem:construction} can directly be used to give a polynomial-parameter transformation from \textsc{cnf-sat} parameterized by the number of variables. Observe that if $G-X$ is \F-minor-free, then $G-X$ is also \F-subgraph-free. Similarly, if $G-X$ contains an $H$-subgraph for all $X \subseteq V(G)$ with $|X| \leq \ell$, then $G-X$ also contains an $H$-minor. Therefore, for any $\type \in \{\text{minor}, \text{subgraph}\}$ and \F consisting of one connected graph on at least three vertices, Lemma~\ref{lem:construction} gives a polynomial-parameter transformation from \textsc{cnf-sat} parameterized by the number of variables to \FTDeletion parameterized by deletion distance to $\mintw(\F)$.
 
 When \F contains multiple graphs, each containing a connected component of at least three vertices, it is possible to select a connected component $H$ of one of the graphs in \F such that the construction described in Lemma~\ref{lem:construction} forms the main ingredient for a polynomial-parameter transformation. We formally argue this in the following lemma.
 
 \begin{lemma} \label{lem:fconstruction}
 For a set \F of graphs, all with a connected component of at least $3$ vertices, and a {\sc CNF}-formula $\Phi$ with $k$ variables, we can create in polynomial time a graph $G$ and integer $\ell$ such that all of the following are true:
 \begin{enumerate}
  \item \label{fconstruction:twmod} there exists a set $S \subseteq V(G)$ of at most $\poly{k}$ vertices such that $\tw(G - S) \leq \mintw(\F)$,
  \item \label{fconstruction:notsat} if $\Phi$ is not satisfiable then there does not exist a set $X \subseteq V(G)$ of size at most $\ell$ such that $G - X$ is \F-subgraph-free, and
  \item \label{fconstruction:sat}if $\Phi$ is satisfiable then there exists a set $X \subseteq V(G)$ of size at most $\ell$ such that $G - X$ is \F-minor-free.
 \end{enumerate}
\end{lemma}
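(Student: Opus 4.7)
The plan is to reduce to Lemma~\ref{lem:construction} by fixing a single representative connected graph $H$ and augmenting the graph it produces with disjoint copies of the remaining components of some $F \in \F$.

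First, I would pick $F \in \F$ with $\tw(F) = \mintw(\F)$ that is $\wminorof$-minimal among such graphs, and take $H$ to be a $\minorof$-maximal connected component of $F$ with $|V(H)| \geq 3$. Such an $H$ exists because every graph in $\F$ has a component on at least three vertices, and a short case analysis on $\tw(F)$ shows that $F$ has a component of three or more vertices attaining treewidth $\tw(F)$ (when $\tw(F) \geq 2$, every maximum-treewidth component has $\geq 3$ vertices; when $\tw(F) = 1$, every component on $\geq 3$ vertices is a tree of treewidth one). Applying Lemma~\ref{lem:construction} to $H$ and $\Phi$ yields $(G_0, \ell_0)$ together with a witnessing set $S_0$.

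Let $F'$ denote $F$ minus all its components isomorphic to $H$. I take $G$ to be the disjoint union of $G_0$ with $\ell_0 + 1$ vertex-disjoint copies of $F'$, extend $S_0$ to $S$ by including the vertices of the extra copies, and set $\ell := \ell_0$. Condition~\ref{fconstruction:twmod} then follows since $\tw(H) \leq \mintw(\F)$ and $|V(F')| \in \Oh(1)$. For condition~\ref{fconstruction:notsat}, any $X$ with $|X| \leq \ell_0$ leaves at least one copy of $F'$ untouched by pigeonhole, while Lemma~\ref{lem:construction}(\ref{construction:3}) guarantees an $H$-subgraph in $G_0 - (X \cap V(G_0))$; the disjoint union produces $F \subseteq G - X$. (When $F$ has several components isomorphic to $H$, I would further enlarge the $H$-packing with disjoint copies of $H$, invoking Lemma~\ref{lem:construction}(\ref{construction:2}).)

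For condition~\ref{fconstruction:sat}, take $X := X_0$ from Lemma~\ref{lem:construction}. Then $G - X$ is $H$-minor-free: $G_0 - X_0$ is $H$-minor-free by Lemma~\ref{lem:construction}, and by $\minorof$-maximality of $H$ no component of $F'$ has $H$ as a minor. Now suppose some $F^* \in \F$ satisfies $F^* \minorof G - X$. The plan is to derive $F^* \wminorof F$ by a careful case analysis of where the connected components of $F^*$ embed: into a component of $G_0 - X_0$, whose biconnected components have size at most $|V(H)| - 1$ and whose $\slb(H)\prune$ is $\wminorof H$ by Lemma~\ref{lem:construction}(\ref{construction:4}); or into a copy of $F'$, whose connected components are themselves components of $F$. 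Using Propositions~\ref{prop:biconnectedminor} and~\ref{prop:pruning} to propagate these structural constraints, I would conclude $F^* \wminorof F$. The $\wminorof$-minimality of $F$ then forces $F \wminorof F^*$, so $H$ is a minor of some component of $F^*$, giving $H \minorof G - X$ --- contradicting $H$-minor-freeness. The main obstacle is precisely this last step, where the embedding analysis for $F^*$ must carefully exploit the structural guarantees of Lemma~\ref{lem:construction}(\ref{construction:4}) together with the $\minorof$-maximality of $H$ to rule out all ways in which $F^*$ could appear as a minor.
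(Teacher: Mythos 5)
Your overall strategy mirrors the paper's --- select a connected representative $H$ from some $\wminorof$-minimal graph of minimum treewidth, apply Lemma~\ref{lem:construction} to $H$, and pad with disjoint copies of the remaining components. However, there are two substantial gaps.

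\textbf{Choice of $H$.} You pick $H$ as an arbitrary $\minorof$-maximal component of some fixed $\wminorof$-minimal $F$ attaining $\tw(F) = \mintw(\F)$. The paper instead ranges over \emph{all} $\wminorof$-minimal graphs of minimum treewidth (the set $\F_\mn$) and chooses, among all their $\minorof$-maximal components, one $H_\mx$ whose smallest leaf-block $\slb(H_\mx)$ is minimum. This $\slb$-minimality is not a cosmetic difference: it is precisely what makes the key step of condition~\ref{fconstruction:sat} work. To rule out $B \wminorof G-X$ for a $\wminorof$-minimal $B \in \F$ of minimum treewidth, one uses the chain
$B \wminorof \slb(H_\mx)\prune(B) \wminorof \slb(H_\mx)\prune(G-X) \wminorof H$,
and the first step requires $\slb(B_\mx) \geq \slb(H_\mx)$ for every $\minorof$-maximal component $B_\mx$ of $B$. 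Without minimizing $\slb$ over all of $\F_\mn$ this can fail. Concretely, take $\F = \{F, B\}$ with $F = C_4 \cup (K_3 \text{ with a pendant vertex})$ and $B$ the bowtie (two triangles sharing a vertex); both are $\wminorof$-minimal of treewidth $2$. If you pick $H = C_4$ (a valid $\minorof$-maximal component of $F$ with $\geq 3$ vertices of treewidth $\tw(F)$), then $\slb(H) = 4 > 3 = \slb(B)$, and the structural guarantees of Lemma~\ref{lem:construction} ($C_4$-minor-freeness, $\tw \leq 2$, $4\prune(G_0-X_0) \wminorof C_4$) do not rule out a bowtie minor in $G_0 - X_0$ --- a bowtie itself satisfies all of them. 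Your sentence ``Using Propositions~\ref{prop:biconnectedminor} and~\ref{prop:pruning} \ldots I would conclude $F^* \wminorof F$'' is exactly where this breaks, and it cannot be repaired without adding the $\slb$-minimality to the choice of $H$.

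\textbf{Multiplicity of $H$ in $F$.} When $F$ has $c > 1$ components isomorphic to $H$, your parenthetical ``enlarge the $H$-packing with disjoint copies of $H$'' does not go through. If the extra copies of $H$ are not covered by the budget, condition~\ref{fconstruction:sat} fails because $G - X$ is no longer $H$-minor-free, which your contradiction relies on; if the budget is enlarged to cover them, condition~\ref{fconstruction:notsat} fails because the adversary can spend $\ell_0+1$ deletions on $G_0$ and kill all $H$-subgraphs there, then kill all but one extra copy --- leaving fewer than the required $c$ disjoint copies of $H$. The paper instead uses $G_1 = (2c-1)\cdot G'$ with $\ell = (2c-1)\ell'$ and, in the unsatisfiable case, an averaging argument over the ``heavy'' and ``light'' copies of $G'$ (relying on condition~\ref{construction:2} of Lemma~\ref{lem:construction} that each $G'$ contains $\ell'$ vertex-disjoint $H$-subgraphs) to show at least $c$ disjoint $H$-subgraphs must survive. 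Some such robustness is essential and is not supplied by isolated copies of $H$. (As a smaller point, you also fold all $\ell_0+1$ copies of $F'$ into the modulator $S$, which would make $|S|$ polynomial in $|\Phi|$ rather than in $k$; the intended argument is to keep $S = S_0$ and observe directly that $\tw(F') \leq \mintw(\F)$.)
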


\begin{proof}
 Let $x_1,\ldots,x_k$ denote the variables of $\Phi$, let $C_1,\ldots,C_m$ denote the sets of literals in each clause of $\Phi$ and let $n$ denote the total number of literals in $\Phi$, i.e. $n = \sum_{1 \leq i \leq m}|C_i|$.
 
 Note that as a consequence of Observation~\ref{obs:wminortw}, there is a graph $F \in \F$ that is $\wminorof$-minimal with $\tw(F) = \mintw(\F)$. Let $\F_\mn \subseteq \F$ denote the set of all $\wminorof$-minimal graphs in $\F$ that have treewidth $\mintw(\F)$. Let $H_\mx$ denote a $\minorof$-maximal component of a graph in $\F_\mn$ such that no other $\minorof$-maximal component of a graph in $\F_\mn$ has a leaf-block smaller than $\slb(H_\mx)$. Let $H \in \F$ denote the graph that contains $H_\mx$ as $\minorof$-maximal component, so $H$ is $\wminorof$-minimal in \F and $\tw(H) = \mintw(\F)$.
 Let $c \geq 1$ denote the number of connected components in $H$ isomorphic to $H_\mx$ and let $Y$ denote the set of vertices contained in these connected components, so $G[Y] = c \cdot H_\mx$.
 
 Note that $H_\mx$ contains at least $3$ vertices since otherwise $H_\mx$ would be a minor of at least one connected component of $H$ containing at least $3$ vertices, which contradicts $H_\mx$ being a $\minorof$-maximal component of $H$.
 We use Lemma~\ref{lem:construction} to construct a graph $G'$ and integer $\ell'$ satisfying conditions~\ref{construction:1}--\ref{construction:4}. Let $S' \subseteq V(G')$ be the vertex set obtained from condition~\ref{construction:1}.
 
 Let $G_1 := (2c-1)\cdot G'$, and let the set $S$ be the union of all $2c-1$ corresponding copies of $S'$. Take $\ell = (2c-1)\cdot \ell'$ and let $G_2 := (\ell+1) \cdot (H-Y)$. We make the following claim about $G_2$:
 
 \begin{subclaim}\label{claim:G_2}
  $G_2$ has the following properties:
  \begin{enumerate*}[label=(\theenumi)]
    \item\label{G_2:wminor} $G_2 \wminorof H$,
    \item\label{G_2:tw} $\tw(G_2) \leq \tw(H)$,
    \item\label{G_2:Hfree} $G_2$ is $H$-minor-free, and
    \item\label{G_2:Ffree} $G_2$ is \F-minor-free.
  \end{enumerate*}
 \end{subclaim}
 \begin{claimproof}
  Property~\ref{G_2:wminor} follows directly from the construction and Property~\ref{G_2:tw} follows directly from Property~\ref{G_2:wminor}. To show Property~\ref{G_2:Hfree}, we show that $G_2$ is $H_\mx$-minor-free. Suppose for contradiction that $G_2$ contains $H_\mx$ as minor then, since $H_\mx$ is connected, there is a connected component~$H'$ of $G_2$ that contains $H_\mx$ as minor. $H'$ is also a connected component of $H$. Since $H_\mx$ is a $\minorof$-maximal component of $H$ and $H_\mx \minorof H'$ we know $H' \minorof H_\mx$, and it follows from Observation~\ref{obs:minorisom} that $H'$ is isomorphic to $H_\mx$. This is a contradiction since $G_2$ contains only connected components of $H$ that are not isomorphic to $H_\mx$.
  
  Having shown that $G_2$ is $H_\mx$-minor-free, Property~\ref{G_2:Ffree} is easily shown by contradiction. Suppose $G_2$ is not \F-minor-free, then there exists a graph $B \in \F$ such that $B \minorof G_2$. It follows from $G_2 \wminorof H$ that $B \wminorof H$ and since $H$ is $\wminorof$-minimal in \F we have that $H \wminorof B \minorof G_2$, but then $H_\mx \minorof G_2$. This is a contradiction since $G_2$ is $H_\mx$-minor-free.
  \end{claimproof}
 
 Consider the graph $G = G_2 \cup G_1$. It follows from condition~\ref{construction:1} of Lemma~\ref{lem:construction} that $\tw(G_1 - S) \leq \tw(H_\mx) \leq \tw(H)$ and since $\tw(G_2) \leq \tw(H)$ by Claim~\ref{claim:G_2}, we obtain $\tw(G-S) \leq \tw(H) = \mintw(\F)$. Note that $|S| = (2c-1)\cdot 2k \leq \Oh(k)$, hence condition~\ref{fconstruction:twmod} holds. We proceed to show the other conditions of the lemma statement.
 
 \textbf{(\ref{fconstruction:notsat})} Suppose $\Phi$ is not satisfiable, and take an arbitrary $X \subseteq V(G)$ of size at most $\ell$. We prove $G-X$ is not \F-subgraph-free by showing that $G - X$ contains an $H$-subgraph. First note that $G_2 - X$ contains at least one copy of $H-Y = H-c\cdot H_\mx$, so it remains to show that $G_1-X$ contains $c$ vertex-disjoint $H_\mx$-subgraphs. Recall that $G_1$ is the disjoint union of $2c-1$ copies of $G'$. Consider the subgraph $\hat{G}_1$ of $G_1$ consisting of the $G'$-subgraphs in $G_1$ that contain at most $\ell'$ vertices of $X$. Since $\Phi$ is not satisfiable, $G'$ leaves at least one $H_\mx$-subgraph when $\ell'$ or fewer vertices are removed, so each $G'$-subgraph in $\hat{G}_1$ leaves at least one $H_\mx$-subgraph in $G_1 - X$.
 When $\hat{G}_1$ contains at least $c$ vertex-disjoint $G'$-subgraphs, we know that there are at least $c$ vertex-disjoint $H_\mx$-subgraphs in $G_1 - X$, concluding the proof. Suppose instead that $\hat{G}_1$ contains less than $c$ vertex-disjoint $G'$-subgraphs. Let $x$ be the number of $G'$-subgraphs in $G_1 - V(\hat{G}_1)$. Since $G_1$ contains $2c-1$ vertex-disjoint $G'$-subgraphs we have $x \geq c$. Each of the $G'$-subgraphs in $G_1 - V(\hat{G}_1)$ contain at least $\ell'+1$ vertices of $X$, so $\hat{G}_1$ contains at most $\ell - x(\ell'+1)$ vertices of $X$. We also know $\hat{G}_1$ contains $\ell'((2c-1)-x)$ vertex-disjoint $H_\mx$-subgraphs since $G'$ contains $\ell'$ vertex-disjoint $H_\mx$-subgraphs and there are $(2c-1)-x$ vertex-disjoint $G'$-subgraphs in $\hat{G}_1$. We conclude that the number of vertex-disjoint $H_\mx$-subgraphs in $\hat{G}_1 - X$, and therefore also in $G_1 - X$, is at least
  \begin{align*}
   \ell'((2c-1)-x) - (\ell - x(\ell'+1))
   &= \ell'((2c-1)-x) - ((2c-1)\cdot \ell' - x\ell' - x)\\
   &= \ell'((2c-1)-x) - \ell'((2c-1)-x) + x \\
   &= x \geq c.
  \end{align*}
 This concludes the proof of condition~\ref{fconstruction:notsat}.
 
 \textbf{(\ref{fconstruction:sat})} When $\Phi$ is satisfiable we know that there exists a set $X' \subseteq V(G')$ of size at most $\ell'$ such that $G' - X'$ is $H_\mx$-minor-free and $\slb(H_\mx)\prune(G'-X') \wminorof H_\mx$. So then there exists a set $X \subseteq V(G_1)$ of size at most $(2c-1)\cdot \ell' = \ell$ such that $G_1 - X$ is $H_\mx$-minor-free and  $\slb(H_\mx)\prune(G_1-X) \wminorof H_\mx$. Since $G_2$ is also $H_\mx$-minor-free we know that $G - X$ is $H_\mx$-minor-free and therefore also $H$-minor-free. We now show that $G-X$ is also \F-minor-free.
 
 First observe the following:
 \begin{gather} \label{eqn:g1minorh}
  \slb(H_\mx)\prune(G_2 - X)
  \minorof G_2 - X
  \minorof G_2
  \wminorof H \text{, and} \\
   \label{eqn:g3minorh}
  \slb(H_\mx)\prune(G_1 - X)
  \wminorof H_\mx
  \minorof H .
 \end{gather}
We now deduce
 \begin{align*}
  \slb(H_\mx)\prune(G - X)
    &=\slb(H_\mx)\prune((G_2 - X) \cup (G_1 - X))
  \\&=\slb(H_\mx)\prune(G_2 - X) \cup \slb(H_\mx)\prune(G_1 - X))
  \\&\wminorof H \qquad \text{(by Equation~\ref{eqn:g1minorh} and \ref{eqn:g3minorh})}
 \end{align*}
 
 Suppose $G - X$ is not \F-minor-free, then for some $H' \in \F$ we have $H' \minorof G-X$. There must exist a graph $B \in \F$ such that $B$ is $\wminorof$-minimal in \F and $B \wminorof G-X$ since if $H'$ is $\wminorof$-minimal in \F then $H'$ forms such a graph $B$, and if on the other hand $H'$ is not $\wminorof$-minimal in \F then there exists a graph $H'' \in \F$ such that $H'' \wminorof H'$ and $H''$ is $\wminorof$-minimal in \F, meaning $H''$ forms such a graph $B$.
 
 Since $B \wminorof G-X$ we know by Observation~\ref{obs:wminortw} that $\tw(B) \leq \tw(G-X)$. Recall that $\tw(G-X) \leq \mintw(\F)$ so then $B \in \F_\mn$. Because of how we chose $H_\mx$, we know for all $\minorof$-maximal components $B_\mx$ of $B$ that $\slb(B_\mx) \geq \slb(H_\mx)$. Therefore
 \begin{align*}
  B
    &\wminorof \slb(H_\mx)\prune(B) &&\text{since } B = \slb(H_\mx)\prune(B)
  \\&\wminorof \slb(H_\mx)\prune(G-X) &&\text{by Proposition~\ref{prop:pruning} since } B \wminorof G-X
  \\&\wminorof H .
 \end{align*}
 
 Since $H$ is $\wminorof$-minimal in \F, it follows that $H \wminorof B$. By definition of $\wminorof$ we have $H_\mx \minorof B \wminorof G-X$. Since $H_\mx$ is connected we conclude $H_\mx \minorof G-X$. This is a contradiction since $G-X$ is $H_\mx$-minor-free.
\end{proof}
 
 We conclude that a polynomial-parameter transformation exists for all $\type \in \{\text{minor},\allowbreak \text{subgraph}\}$ and \F containing only graphs with a connected component on at least three vertices. Together with the fact that \textsc{cnf-sat} is \MKtwohard and does not admit a polynomial kernel unless \containment (cf.~\cite[Lemma 9]{HermelinKSWW15}), this proves the following generalization of Theorem~\ref{thm:lowerbound}.

\begin{theorem}
 For $\type \in \{\text{minor}, \text{subgraph}\}$ and a set \F of graphs, all with a connected component of at least three vertices, \FTDeletion parameterized by vertex-deletion distance to a graph of treewidth $\mintw(\F)$ is \MKtwohard and does not admit a polynomial kernel unless \containment.
\end{theorem}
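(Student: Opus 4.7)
The plan is to assemble the pieces already developed and invoke the standard framework for kernelization lower bounds via polynomial-parameter transformations from \textsc{cnf-sat} parameterized by the number of variables. Given an input formula $\Phi$ with $k$ variables, I would apply Lemma~\ref{lem:fconstruction} to produce, in polynomial time, a graph $G$, an integer $\ell$, and a vertex set $S \subseteq V(G)$ of size $\poly{k}$ such that $\tw(G-S) \leq \mintw(\F)$. The set $S$ witnesses that the vertex-deletion distance of $G$ to a graph of treewidth $\mintw(\F)$ is $\poly{k}$, so the output parameter is polynomially bounded in the input parameter, as required for a polynomial-parameter transformation.

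Next, I would verify the logical equivalence $\Phi \in \textsc{sat} \iff (G,\ell) \in \FTDeletion$ for both choices of $\type$ simultaneously. The key observation is that containing some $H \in \F$ as a subgraph implies containing $H$ as a minor, so for any $X \subseteq V(G)$ the graph $G-X$ being $\F$-minor-free implies that $G-X$ is $\F$-subgraph-free, and conversely $G-X$ not being $\F$-subgraph-free implies that $G-X$ is not $\F$-minor-free. Combining this with conditions~\ref{fconstruction:notsat} and~\ref{fconstruction:sat} of Lemma~\ref{lem:fconstruction} yields: if $\Phi$ is satisfiable then some $X$ of size at most $\ell$ makes $G-X$ both $\F$-minor-free and $\F$-subgraph-free; if $\Phi$ is unsatisfiable then no $X$ of size at most $\ell$ achieves either. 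Hence the same mapping $\Phi \mapsto (G,\ell)$ is a valid polynomial-parameter transformation for both variants of \FTDeletion.

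Finally, I would invoke \MKtwo-completeness of \textsc{cnf-sat} parameterized by the number of variables (cf.~\cite[Lemma 9]{HermelinKSWW15}), together with the classical superpolynomial kernelization lower bound for \textsc{cnf-sat} assuming \notcontainment~\cite{DellM14,FortnowS11}. Since polynomial-parameter transformations preserve both \MKtwo-hardness and the non-existence of polynomial kernels under \notcontainment, both conclusions of the theorem follow. I do not foresee a significant obstacle here: the intricate combinatorial work has already been carried out in Lemmas~\ref{lem:clausegadget} through~\ref{lem:fconstruction}, and the remaining step is a routine application of the framework.
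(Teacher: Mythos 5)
Your proposal is correct and follows essentially the same route as the paper: both assemble the polynomial-parameter transformation from \textsc{cnf-sat} parameterized by the number of variables via Lemma~\ref{lem:fconstruction}, use the same one-directional implication between minor-freeness and subgraph-freeness to cover both choices of \type with a single construction, and then close by invoking the \MKtwo-hardness and kernelization lower bound for \textsc{cnf-sat} together with the fact that polynomial-parameter transformations transfer these properties. The paper's own wording is terser, but the content is identical.
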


\section{A polynomial Turing kernelization}
\label{sec:upperbound}

In this section we consider the case where \F contains a graph with no connected component of more than two vertices; or in short \F contains a $P_3$-subgraph-free graph. This graph consists of isolated vertices and disjoint edges. Let $\isolated(G)$ denote the set of isolated vertices in a graph $G$, i.e. $\isolated(G) = \{v \in V(G) \mid \degree(v) = 0\}$. We first show that the removal of all isolated vertices from all graphs in \F only changes the answer to \FMDeletion and \FSDeletion when the input is of constant size.

\begin{lemma} \label{lem:elimIsolated}
 For $\type \in \{\text{minor}, \text{subgraph}\}$ and any family of graphs \F containing a $P_3$-subgraph-free graph, let $\F' = \{F - \isolated(F) \mid F \in \F \}$. For any graph $G$, if $G$ is \F-\type-free but not $\F'$-\type-free, then $|V(G)| < \max\limits_{F \in \F} (|V(F)|+2|V(F)|^3)$.
\end{lemma}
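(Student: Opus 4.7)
The plan is as follows. Suppose $G$ is $\F$-\type-free but contains some $F' \in \F'$ as a \type. By definition of $\F'$, there is $F \in \F$ with $F' = F - \isolated(F)$; since $F$ itself is not contained in $G$, we must have $|\isolated(F)| \geq 1$.

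For the \emph{subgraph} case, an $F'$-subgraph in $G$ uses $|V(F')| = |V(F)| - |\isolated(F)|$ vertices. Whenever $|V(G)| \geq |V(F)|$, we can pick $|\isolated(F)|$ additional vertices of $G$ (they serve as isolated vertices of $F$, with no edge constraints) to exhibit $F$ as a subgraph of $G$, contradicting $\F$-subgraph-freeness. Hence $|V(G)| < |V(F)|$, well within the claimed bound.

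The \emph{minor} case is the technical one. We leverage the existence of a $P_3$-subgraph-free graph $H \in \F$: it is a disjoint union of $t = |E(H)|$ edges and $|\isolated(H)|$ isolated vertices, so $H \minorof G$ is equivalent to $G$ having a matching of size $t$ together with $|V(G)| \geq |V(H)|$. Hence $G$ being $H$-minor-free gives either $|V(G)| < |V(H)|$ (and we are done) or the matching number $\nu(G) < t$. In the latter case, the standard inequality $\tau(G) \leq 2\nu(G)$ yields a vertex cover $C$ of $G$ of size less than $2t \leq |V(H)|$.

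Using this small vertex cover $C$, we bound the size of a minimum-cardinality $F'$-model $\varphi$ in $G$. Since $V(G) \setminus C$ is an independent set, each branch set of $\varphi$ that avoids $C$ must be a singleton, contributing at most $|V(F')|$ vertices in total. For a branch set that intersects $C$, we account for (i) the $C$-vertices it contains, summing to at most $|C|$ over all branches, (ii) ``port'' vertices in $V(G) \setminus C$ witnessing edges of $F'$, at most one per edge of $F'$ since $V(G) \setminus C$ is independent, and (iii) ``connector'' vertices in $V(G) \setminus C$ needed to link the $C$-vertices of a branch set together, at most $|\varphi(v) \cap C| - 1$ per branch set by minimality and hence at most $|C|$ in total. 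Combining these estimates yields $|\varphi(F')| \leq 2|V(F)|^3$ with room to spare. If $|V(G) \setminus \varphi(F')| \geq |\isolated(F)|$, we may extend $\varphi$ to an $F$-model by assigning each isolated vertex of $F$ to a distinct unused vertex of $G$, contradicting $\F$-minor-freeness. Hence $|V(G)| < |\varphi(F')| + |\isolated(F)| \leq 2|V(F)|^3 + |V(F)|$, matching the claim. The main obstacle will be the careful accounting for the minimum-size $F'$-model, to argue that each ``connector'' and ``port'' vertex in $V(G) \setminus C$ is genuinely necessary and the counts add up as claimed.
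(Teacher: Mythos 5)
Your subgraph case is identical to the paper's. For the minor case you take a genuinely different route: the paper bounds the size of a minimal $F'$-model directly by analyzing the spanning tree $T_v$ of each branch set $G[\varphi(v)]$, arguing by minimality that $T_v$ has at most $\max\{1,\degree_{F'}(v)\}$ leaves and that $T_v$ excludes a long path because $G$ excludes the $P_3$-free graph of $\F$ as a minor, which gives $|\varphi(v)| < \degree_{F'}(v)\cdot \ell$ and summing over $v$ yields the bound. You instead first extract a small vertex cover $C$ of $G$ (via $\tau\le 2\nu$ and the small matching number), note that $V(G)\setminus C$ is independent so branch sets avoiding $C$ are singletons, and then classify the remaining non-$C$ vertices of the model into ports (unique edge-witnesses, at most one per edge of $F'$ since every $G$-edge meets $C$) and connectors (internal tree vertices, at most $|\varphi(v)\cap C|-1$ per branch set by a rooting argument). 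Both proofs ultimately hinge on $\nu(G)$ being small; the paper converts this to a forbidden-long-path statement, you convert it to a small vertex cover. Your accounting does close (with the rooting argument for connectors and the $C$-is-a-cover argument for ports), so the approach is sound. One small imprecision: your final line reads $|V(G)|<2|V(F)|^3+|V(F)|$ for the specific $F\in\F$ with $F'\preceq G$, but the vertex-cover size $|C|$ is governed by $|V(H)|$ for the $P_3$-free graph $H\in\F$, which may be a different member of $\F$; the bound should be stated against $\max_{A\in\F}(|V(A)|+2|V(A)|^3)$, which the lemma's max absorbs and which the arithmetic verifies for $\max_A|V(A)|\ge 2$ (automatic here since $H$ has an edge in the relevant case).
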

\begin{proof}
 We first prove the lemma for $\type = \text{subgraph}$. Suppose $G$ is \F-subgraph-free but not $\F'$-subgraph-free. Now $G$ contains an $H'$-subgraph for some graph $H' \in \F'$. This subgraph consists of $|V(H')|$ vertices. Let $H \in \F$ be the graph for which $H' = H - \isolated(H)$. $G$ cannot contain $|\isolated(H)|$ vertices in addition to the vertices in the $H'$-subgraph because otherwise $G$ trivially contains an \F-subgraph. Hence $|V(G)| < |V(H')| + |\isolated(H)| = |V(H)| \leq \max\limits_{F \in \F} |V(F)|$.

 Next, we show the lemma holds for $\type = \text{minor}$. If some graph $G$ is \F-minor-free but not $\F'$-minor-free then for some graph $H \in \F$ we have $H' \minorof G$ but not $H \minorof G$ where $H' = H - \isolated(H)$. Let $\varphi$ be a minimal $H'$-model in $G$. The graph $G$ has less than $|V(\isolated(H))|$ vertices that are not in any branch set of $\varphi$, since otherwise an $H$-model could be constructed in $G$ by taking the branch sets of $\varphi$ and adding $|V(\isolated(H))|$ branch sets consisting of a single vertex.
 
 The number of vertices in $G$ that are contained in a branch set of $\varphi$ can also be limited. For an arbitrary vertex $v \in V(H')$ consider a spanning tree $T$ of $G[\varphi(v)]$. If $\varphi(v)$ contains multiple vertices then for each leaf $p$ of $T$, there must be a vertex $u \in N_{H'}(v)$ and $q \in N_G(p) \cap \varphi(u)$, such that $p$ is the only vertex from $\varphi(v)$ that is adjacent to $\varphi(u)$; otherwise, removing leaf $p$ from the branch set $\phi(v)$ would yield a smaller $H'$-model in $G$. Hence there can only be $\max\{1,\degree_{H'}(v)\}$ leaves in $T$.
 
 To give a bound on the size of each branch set consider a smallest graph $D \in \F'$ that is $P_3$-subgraph-free. Take $\ell = |V(D)|$ and note that $D \minorof P_\ell$. Since we know that $G$ is $\F'$-minor-free, $G$ must also be $P_\ell$-subgraph-free, therefore $T$ is also $P_\ell$-subgraph-free. Consider an arbitrary vertex $r$ in $T$. Since $T$ is a tree, there is exactly one path from $r$ to each leaf of $T$ and every vertex of $T$ lies on at least one path from $r$ to a leaf of $T$. Since there are no more than $\max\{1, \degree_{H'}(v)\}$ leaves in $T$ there are at most $\max\{1, \degree_{H'}(v)\}$ such paths, and all these paths contain less than $\ell$ vertices since $T$ is $P_\ell$-subgraph-free, hence in total $T$ contains less than $\degree_{H'}(v) \cdot \ell$ vertices. We can now give a bound on the total number of vertices in $G$ as follows: 
\begin{align*}
 |V(G)| 
 &< |\isolated(H)| + \sum\limits_{v \in H - \isolated(H)} |\varphi(v)| \\
 &\leq |\isolated(H)| + \sum\limits_{v \in H - \isolated(H)} (\degree_{H'}(v) \cdot \ell) \\
 &\leq |\isolated(H)| + 2 \cdot |E(H)| \cdot \ell \\
 &\leq |V(H)| + 2 \cdot |V(H)|^2 \cdot |V(D)| \\
 &\leq \max\limits_{F \in \F} (|V(F)|+2|V(F)|^3)
\end{align*}

This concludes the proof.
\end{proof}

After the removal of isolated vertices in \F to obtain $\F'$, we know that $\F'$ contains a graph consisting entirely of disjoint edges, i.e. this graph is isomorphic to $c \cdot P_2$ for some integer $c \geq 0$. If $c=0$ then \F-\type-free graphs have constant size and the problem is polynomial-time solvable. We proceed assuming $c \geq 1$. Let the matching number of a graph $G$, denoted as $\nu(G)$, be the size of a maximum matching in $G$. We make the following observation.
\begin{observation} \label{obs:matchingnumber}
 For all $c \geq 1$, a graph $G$ is $c\cdot P_2$-subgraph-free if and only if $\nu(G) \leq c-1$.
\end{observation}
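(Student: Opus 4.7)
The plan is to argue this by unpacking the two sides of the equivalence and observing they describe exactly the same combinatorial object. The key fact is that the graph $c \cdot P_2$ consists of $c$ pairwise vertex-disjoint edges, so a $c \cdot P_2$-subgraph of $G$ is, by definition, the same thing as a collection of $c$ pairwise vertex-disjoint edges in $G$, i.e.\ a matching of size $c$ in $G$.

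First I would prove the forward direction by contrapositive: suppose $\nu(G) \geq c$, then $G$ admits a matching $M = \{e_1,\ldots,e_c\}$ of size $c$; the subgraph on the endpoints of $M$ with edge set $M$ is isomorphic to $c \cdot P_2$ (since the edges of a matching are vertex-disjoint), so $G$ is not $c \cdot P_2$-subgraph-free. For the reverse direction, also by contrapositive, suppose $G$ contains $c \cdot P_2$ as a subgraph; then $G$ has $c$ edges spanning $2c$ distinct vertices, and these $c$ edges form a matching of size $c$ in $G$, so $\nu(G) \geq c$, contradicting $\nu(G) \leq c-1$.

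I don't anticipate any real obstacle: the statement is essentially a reformulation of the definition of $\nu(G)$ in terms of forbidden subgraphs, and the whole argument fits in two or three lines. The only minor care needed is to note explicitly that $c \geq 1$ ensures that $c \cdot P_2$ is a nonempty graph, so that both sides of the equivalence are nontrivial.
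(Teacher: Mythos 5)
Your proof is correct, and it is exactly the definitional unwinding the paper intends: the paper states this as an \emph{observation} without an explicit proof precisely because, as you note, a $c \cdot P_2$-subgraph is by definition a matching of size $c$, so the equivalence is immediate from the definition of $\nu$. Nothing to add.
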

We give a characterization of graphs with bounded matching number, based on an adaptation of the Tutte-Berge formula~\cite{Berge}. We use $\odd(G)$ to denote the number of connected components in $G$ that consist of an odd number of vertices.

\begin{lemma} \label{lem:partitioning-matchingnumber}
 For any graph $G$ and integer $m$ we have $\nu(G) \leq m$ if and only if $V(G)$ can be partitioned into three disjoint sets $U, R, S$ such that all of the following are true:
 \begin{itemize}
  \item all connected components in $G[R]$ have an odd size of at least $3$,
  \item $G[S]$ is independent,
  \item $N_G(S) \subseteq U$, and
  \item $|U| + \frac{1}{2} (|R| - \odd(G[R])) \leq m$.
 \end{itemize}
\end{lemma}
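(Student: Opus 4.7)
The plan is to derive both directions of the equivalence from the Tutte-Berge formula, which we use in the form: $\nu(G) \leq m$ if and only if there exists $W \subseteq V(G)$ with $|V(G)| + |W| - \odd(G - W) \leq 2m$.

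For the direction ($\Leftarrow$), I would apply the Tutte-Berge inequality with $W := U$. The hypotheses that $G[S]$ is independent and $N_G(S) \subseteq U$ force every vertex of $S$ to be isolated in $G - U$ and to have no neighbor in $R$, so the connected components of $G - U$ are precisely those of $G[R]$ together with $|S|$ singletons contributed by $S$. Hence $\odd(G - U) = \odd(G[R]) + |S|$, and substituting this together with $|V(G)| = |U| + |R| + |S|$ into the Tutte-Berge inequality simplifies directly to $\nu(G) \leq |U| + \frac{1}{2}(|R| - \odd(G[R])) \leq m$.

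For the direction ($\Rightarrow$), I would start from a Tutte-Berge witness $U_0$ with $|V(G)| + |U_0| - \odd(G - U_0) \leq 2m$ and reshape it into a valid partition. The tentative assignment is $U := U_0$, with the isolated vertices of $G - U_0$ placed in $S$ and the odd components of $G - U_0$ of size at least three placed in $R$. The obstacle is that the even connected components of $G - U_0$ are not allowed in $R$. For each such component $C$ I would invoke the standard fact that every connected graph on at least two vertices has a non-cut vertex, e.g., any leaf of a spanning tree, to pick such a $v_C \in V(C)$ and move it into $U$. Then $C - v_C$ remains connected and has odd size: if $|V(C)| \geq 4$ it forms a fresh odd component of size at least three to be placed in $R$, and if $|V(C)| = 2$ the leftover vertex joins $S$, where it is isolated in the new $G - U$. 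The resulting partition meets all structural conditions.

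The main obstacle is then the accounting: verifying that the cost $|U| + \frac{1}{2}(|R| - \odd(G[R]))$ of the modified partition equals $\frac{1}{2}(|V(G)| + |U_0| - \odd(G - U_0))$ and is therefore at most $m$. The key point is that each even component $C$ of $G - U_0$ contributes $|V(C)|$ to $2|U| + |R| - \odd(G[R])$ both before the fix (when $V(C)$ would sit in $R$) and after it: moving $v_C$ to $U$ adds $2$ to $2|U|$, while $|R|$ loses one vertex and $\odd(G[R])$ gains one from the new odd component, netting zero (the size-two case works out identically, with $|R|$ losing two vertices and $\odd(G[R])$ unchanged). Odd components of size at least three and isolated vertices already contribute the same amount on both sides. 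Summing across all components of $G - U_0$ collapses the modified cost to $2|U| + |R| - \odd(G[R]) = |V(G)| + |U_0| - \odd(G - U_0) \leq 2m$, yielding the required bound.
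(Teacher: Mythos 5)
Your proposal is correct and takes essentially the same approach as the paper's proof: the forward direction, in both cases, takes a Tutte--Berge witness, moves one non-cut vertex from each even component into $U$, and carries out the same per-component accounting. The only minor stylistic difference is in the reverse direction, where the paper argues via the elementary bounds $\nu(G)\le|U|+\nu(G-U)$ and $\nu(G[R])\le\tfrac12(|R|-\odd(G[R]))$ rather than substituting $W=U$ into the Tutte--Berge inequality as you do, but these are just two packagings of the same algebra.
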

\begin{proof}
 Consider the Tutte-Berge formula \cite{Berge} (cf. \cite[Chapter 24]{Schrijver03}):
 \begin{equation*}
  \nu(G) = \tfrac{1}{2}\min\limits_{U \subseteq V(G)}(|V(G)| - \odd(G-U) + |U|) .
 \end{equation*}

 Suppose $\nu(G) \leq m$. It follows from the Tutte-Berge formula that there exists a $U_1 \subseteq V(G)$ such that $\frac{1}{2}(|V(G)| - \odd(G-U_1) + |U_1|) = \nu(G) \leq m$. From each connected component in $G-U_1$ on an even number of vertices, select a single non-cut vertex and add the selected vertices to a set $U_2$. Now take $U = U_1 \cup U_2$. Note that $G-U$ contains only odd sized connected components and $\odd(G-U) = \odd(G-U_1) + |U_2|$. Let $S$ be the set of isolated vertices in $G-U$ and let $R = V(G)\setminus(U \cup S)$. Observe that $U$, $R$, and $S$ satisfy the first three conditions in the lemma statement: $G[R]$ contains only connected components with an odd size of at least $3$, $G[S]$ is independent, and $N_G(S) \subseteq U$. Note that this implies that $\odd(G[R]) + |S| = \odd(G[R \cup S]) = \odd(G-U)$. The last requirement follows from the Tutte-Berge formula as follows:
 \begin{align*}
  |U| + \tfrac{1}{2} (|R| - \odd(G[R]))
  &= \tfrac{1}{2}(2|U| + |S| - |S| + |R| - \odd(G[R])\\
  &= \tfrac{1}{2}((|U| + |S| + |R|) - (\odd(G[R]) + |S|) + |U|)\\
  &= \tfrac{1}{2}(|V(G)| - \odd(G-U) + |U|) \\
  &= \tfrac{1}{2}(|V(G)| - (\odd(G-U_1) + |U_2|) + |U_1| + |U_2|) \\
  &= \tfrac{1}{2}(|V(G)| - \odd(G-U_1) + |U_1|) \\
  &= \nu(G) \leq m
 \end{align*}
 
 For the reverse direction of the proof, suppose $V(G)$ can be partitioned into disjoint sets $U,R,S$ as described in the lemma statement. A maximum matching in $G[R]$ has size at most $\frac{1}{2}(|R| - \odd(G[R])$ since at least one vertex in each odd component remains unmatched and every matching edge covers two vertices. Since $N_G(S) \subseteq U$ we know that $S$ is isolated in $G-U$, so $\nu(G-U) = \nu(G[R]) \leq \frac{1}{2}(|R| - \odd(G[R])$. Since a matching in $G$ is at most $|U|$ edges larger than a matching in $G-U$ we conclude $\nu(G) \leq |U| + \frac{1}{2}(|R| - \odd(G[R]) \leq m$.
\end{proof}


Let us showcase how Lemma~\ref{lem:partitioning-matchingnumber} can be used to attack \FMDeletion when~$\F$ consists of a single graph~$c \cdot P_2$, so that the problem is to find a set~$X \subseteq G$ of size at most~$\ell$ such that~$G-X$ has matching number less than~$c$.

\begin{theorem} \label{thm:turingkernel:example}
For any constant~$c$, the \Deletion{$\{c \cdot P_2\}$}{Minor} problem parameterized by the size~$k$ of a feedback vertex set, can be solved in polynomial time using an oracle that answers \scVC instances with~$\Oh(k^3)$ vertices.
\end{theorem}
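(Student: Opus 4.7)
The plan is to use Observation~\ref{obs:matchingnumber} to reformulate the question as deciding whether there exists $X \subseteq V(G)$ with $|X| \leq \ell$ and $\nu(G-X) \leq c-1$; here we exploit that $c \cdot P_2$ is itself a matching, so a graph contains it as a minor if and only if it contains it as a subgraph. Applying Lemma~\ref{lem:partitioning-matchingnumber} to $G-X$ yields a partition of $V(G) \setminus X$ into $(U,R,S)$ satisfying $|U| + \tfrac{1}{2}(|R|-\odd(G[R])) \leq c-1$. Because every odd component of $G[R]$ has at least three vertices, $|R|-\odd(G[R]) \geq \tfrac{2}{3}|R|$, and hence $|U| \leq c-1$ and $|R| \leq 3(c-1)$. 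In particular $W := U \cup R$ has at most $4(c-1) = \Oh(1)$ vertices.

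The algorithm enumerates every ordered pair $(U,R)$ of disjoint vertex subsets with $|U|+|R| \leq 4(c-1)$ such that each component of $G[R]$ has odd size at least three and $|U| + \tfrac{1}{2}(|R|-\odd(G[R])) \leq c-1$. Since $c$ is constant, there are only $n^{\Oh(c)} = n^{\Oh(1)}$ such pairs. For each surviving guess, set $W := U \cup R$ and $R^* := N_G(R) \setminus W$. Any $X$ that realises the partition $(U, R, V(G)\setminus W\setminus X)$ must satisfy $X \cap W = \emptyset$, $R^* \subseteq X$ (to destroy every edge between $R$ and $S$ in $G-X$), and $X$ must cover all edges of $G[V(G) \setminus W]$ (to leave $S$ independent). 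Conversely, any such $X$ makes $(U,R,V(G)\setminus W\setminus X)$ a valid partition for Lemma~\ref{lem:partitioning-matchingnumber}, witnessing $\nu(G-X) \leq c-1$. Since all edges of $G[V(G) \setminus W]$ incident to $R^*$ are already covered by $R^* \subseteq X$, the residual task---after rejecting the guess if $|R^*| > \ell$---is to decide whether $G^* := G[V(G) \setminus W \setminus R^*]$ admits a vertex cover of size at most $\ell - |R^*|$.

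Because $G^*$ is an induced subgraph of $G$, its feedback vertex number is at most $k$. I therefore apply the polynomial \scVC kernelization parameterized by feedback vertex number of~\cite{JansenB13} to $G^*$, producing an equivalent \scVC instance on $\Oh(k^3)$ vertices, and submit that instance to the oracle. The original instance is a YES-instance if and only if some enumerated guess elicits a YES answer. All preprocessing is polynomial-time, so the total running time (excluding oracle calls) is $n^{\Oh(1)}$, and all $n^{\Oh(1)}$ oracle queries have size $\Oh(k^3)$ as required.

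The main step to justify carefully is the equivalence in the reduction to \scVC per guess. One direction follows because for any feasible $X$ the enumeration eventually hits the pair $(U,R)$ guaranteed by Lemma~\ref{lem:partitioning-matchingnumber}, after which $X \setminus R^*$ is a vertex cover of $G^*$ of the right size; the other direction requires checking that the set $X := R^* \cup X'$ reconstructed from a vertex cover $X'$ of $G^*$ satisfies the four conditions of Lemma~\ref{lem:partitioning-matchingnumber} for the guessed partition. Both directions are essentially bookkeeping; the conceptual content lies entirely in the $\Oh(1)$ bound on $|W|$ coming from the odd-component restriction, and in the clean reduction of the remaining problem to \scVC on an induced subgraph of $G$.
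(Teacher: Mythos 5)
Your proposal is correct and follows essentially the same approach as the paper: guess the constant-size sets $U$ and $R$ from the Tutte--Berge partition of Lemma~\ref{lem:partitioning-matchingnumber}, force $N_G(R)\setminus(U\cup R)$ into the solution, reduce the remainder to a \scVC instance on the induced subgraph $G - (U\cup R\cup N_G(R))$, and kernelize via~\cite{JansenB13} before querying the oracle. The only differences are cosmetic (notation $W$, $R^*$, $G^*$ and the explicit derivation of the constant bound on $|U\cup R|$).
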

\begin{proof}
If an instance~$(G,\ell)$ admits a solution~$X$, then Lemma~\ref{lem:partitioning-matchingnumber} guarantees that~$V(G-X)$ can be partitioned into~$U,R,S$ satisfying the four conditions for~$m = c-1$. We try all relevant options for the sets~$U$ and~$R$ in the partition, of which there are only polynomially many since~$|U| + \frac13|R| \leq m \in \Oh(1)$.

For given sets~$U,R \subseteq V(G)$, we can decide whether there is a solution~$X$ of size at most~$\ell$ for which~$U,R$, and~$S := V(G) \setminus (U \cup R \cup X)$ form the partition witnessing that~$G-X$ has matching number at most~$m$, as follows. If some component of~$G[R]$ has even size, or less than three vertices, we reject outright. Similarly, if~$|U| + \frac12(|R| - \odd(G[R])) > m$, we reject. Now, if~$U$ and~$R$ were guessed correctly, then Lemma~\ref{lem:partitioning-matchingnumber} guarantees that the only neighbors of~$R$ in the graph~$G-X$ belong to~$U$. Hence we infer that all vertices of~$X' := N_G(R) \setminus U$ must belong to the solution~$X$. 
Note that since~$S$ is an independent set in~$G-X$, the solution~$X$ forms a vertex cover of~$G - (U \cup R)$, so that~$X'' := X \setminus X'$ is a vertex cover of~$G' := G - (U \cup R \cup X')$. 
On the other hand, for every vertex cover~$X''$ of~$G'$, the graph~$G - (X' \cup X'')$ will have matching number at most~$m$, as witnessed by the partition. Hence the problem of finding a minimum solution~$X$ whose corresponding graph~$G-X$ has~$U$ and~$R$ as two of the classes in its witness partition, reduces to finding a minimum vertex cover of the graph~$G'$. In terms of the decision problem, this means~$G$ has a solution of size at most~$\ell$ with~$U$ and~$R$ as witness partite sets, if and only if~$G'$ has a vertex cover of size at most~$\ell - |X'|$. Since~$\fvs(G') \leq \fvs(G)$, we can apply the known~\cite{JansenB13} kernel for \scVC parameterized by the feedback vertex number to reduce~$(G', \ell - |X'|)$ to an equivalent instance~with~$\Oh(\fvs(G)^3)$ vertices, which is queried to the oracle. If the oracle answers positively to any query, then~$(G,\ell)$ has answer \textsc{yes}; otherwise the answer is \textsc{no}.
\end{proof}

We remark that by using the polynomial-time reduction guaranteed by NP-completeness, the queries to the oracle can be posed as instances of the original \FMDeletion problem, rather than \scVC.
We now present our general (non-adaptive) Turing kernelization for the minor-free and subgraph-free deletion problems for all families~$\F$ containing a $P_3$-subgraph-free graph, combining three ingredients. Lemma~\ref{lem:elimIsolated} allows us to focus on families whose graphs have no isolated vertices. The guessing strategy of Theorem~\ref{thm:turingkernel:example} is the second ingredient. The final ingredient is required to deal with the fact that a solution subgraph~$G-X$ that is $c \cdot P_2$-minor-free for some~$c \cdot P_2 \in \F$, may still have one of the other graphs in~$\F$ as a forbidden minor. To cope with this issue, we show in Lemma~\ref{lem:shrinkS} that if~$G-X$ has no matching of size~$c$ (i.e., $G-X$ has a vertex cover of size at most $2c$), but does contain a minor model of some graph in~$\F$, then there is such a minor model of constant size. By employing a more expensive (but still polynomially bounded) guessing step, this allows us to complete the Turing kernelization. In the following lemmas $\vc(G)$ will denote the vertex cover number and $\Delta(G)$ will denote the maximum degree of $G$.
\begin{proposition}[{\cite[Proposition 1]{FominJP14}}] \label{prop:proposition1}
 If $G$ contains $H$ as a minor, then there is a subgraph $G^*$ of $G$ containing an $H$-minor such that $\Delta(G^*) \leq \Delta(H)$ and $|V(G^*)| \leq |V(H)| + \vc(G^*)\cdot(\Delta(H)+1)$.
\end{proposition}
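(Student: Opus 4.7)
The plan is to build the witnessing subgraph $G^*$ from a carefully pruned $H$-minor model of $G$ and then bound its degree and vertex count separately. First I would fix an $H$-model $\varphi$ in $G$ that is inclusion-minimal (i.e., no branch set can be shrunk while keeping an $H$-minor). For each edge $\{u,v\}\in E(H)$, pick a single ``representative'' cross-edge between $\varphi(u)$ and $\varphi(v)$, and let $X_v\subseteq\varphi(v)$ collect the endpoints of representative cross-edges lying in $\varphi(v)$; in particular $|X_v|\le \deg_H(v)$. For each $v$, compute an edge-minimal subtree $T_v$ of $G[\varphi(v)]$ spanning $X_v$ (or a single vertex if $X_v=\emptyset$ and $v$ is an isolated vertex of $H$). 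Define $G^*$ as the subgraph of $G$ whose edges are the union of the $T_v$'s and the chosen representative cross-edges. The sets $\{V(T_v)\}_{v\in V(H)}$ then form branch sets witnessing $H\minorof G^*$.

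For the degree bound $\Delta(G^*)\le\Delta(H)$, consider any $w\in V(T_v)$. Its $G^*$-degree is its $T_v$-degree plus the number of representative cross-edges incident to $w$; the latter is at most $\deg_H(v)$. The aim is to show via minimality that the two contributions combine to at most $\deg_H(v)$: if at some $w$ the combined degree exceeded $\deg_H(v)$, one could reroute a redundant tree edge through an existing cross-edge path, or shrink $T_v$ by removing a branch whose role is already covered, producing a smaller $H$-model in $G$ and contradicting minimality of $\varphi$ (or of $T_v$).

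For the vertex count, let $C$ be a minimum vertex cover of $G^*$, so $|C|=\vc(G^*)$ and $I:=V(G^*)\setminus C$ is independent. Every non-isolated vertex of $I$ has at least one edge to $C$, and each vertex of $C$ has at most $\Delta(G^*)\le\Delta(H)$ neighbors, hence the number of non-isolated vertices of $I$ is at most $\vc(G^*)\cdot\Delta(H)$. Any isolated vertex of $G^*$ must be a singleton branch set with empty $X_v$, which by construction corresponds to an isolated vertex of $H$, giving at most $|V(H)|$ isolated vertices of $G^*$. Therefore
\[
|V(G^*)|=|C|+|I|\le \vc(G^*)+\vc(G^*)\cdot\Delta(H)+|V(H)|=|V(H)|+\vc(G^*)(\Delta(H)+1).
\]

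The main obstacle is the degree bound: the two natural sources of edges at $w$, namely tree edges in $T_v$ and representative cross-edges, individually admit only the weaker bounds $|X_v|\le\deg_H(v)$ and $\deg_H(v)$, giving a trivial $2\Delta(H)$ bound. Pushing this down to $\Delta(H)$ requires coordinating the choice of representative cross-edges and the shape of $T_v$ simultaneously, and exploiting minimality of the global $H$-model $\varphi$ rather than treating each branch set in isolation.
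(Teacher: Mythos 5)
The paper does not prove this proposition; it cites it from Fomin, Jansen, and Pilipczuk, so there is no in-paper proof to compare against, and I evaluate your attempt on its own. The construction you describe (one representative cross-edge per edge of $H$, an edge-minimal Steiner tree $T_v$ inside $G[\varphi(v)]$ spanning the cross-edge endpoints $X_v$, and $G^*$ the union of these trees and cross-edges) is the right one, and your vertex-count argument is sound once the degree bound is in hand. But the degree bound is a genuine gap: you assert it, gesture at a rerouting argument, and openly concede in your closing paragraph that you only obtain $2\Delta(H)$. Moreover, your diagnosis of what is missing --- that one must ``coordinate'' the cross-edge choices with the shape of $T_v$ and ``exploit minimality of the global $H$-model'' --- points in the wrong direction.

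The degree bound is in fact a purely local counting argument on $T_v$ alone. For $w \in V(T_v)$ write $c(w)$ for the number of representative cross-edges incident to $w$; then $\deg_{G^*}(w) = \deg_{T_v}(w) + c(w)$ and $\sum_{u \in V(T_v)} c(u) = \deg_H(v)$. Let $L$ be the set of leaves of $T_v$; edge-minimality forces $L \subseteq X_v$, so $c(\ell) \geq 1$ for every $\ell \in L$. If $T_v$ is a singleton, $\deg_{G^*}(w) = c(w) \leq \deg_H(v)$. Otherwise $|L| \geq 2$: if $w$ is internal, then $\deg_{T_v}(w) \leq |L|$ (a tree's maximum degree is at most its number of leaves) and $c(w) \leq \deg_H(v) - \sum_{\ell \in L} c(\ell) \leq \deg_H(v) - |L|$; if $w$ is a leaf, then $\deg_{T_v}(w) = 1$ and $c(w) \leq \deg_H(v) - (|L|-1) \leq \deg_H(v) - 1$. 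Either way $\deg_{G^*}(w) \leq \deg_H(v) \leq \Delta(H)$. The tension you missed is exactly the balance between the two contributions: concentrating many cross-edges at $w$ starves the other vertices of $X_v$ and hence shrinks $L$, while a large tree degree at $w$ forces many leaves, each of which must carry at least one of the $\deg_H(v)$ cross-edges. No global minimality of $\varphi$ and no rerouting is needed; the two sources of degree are already self-limiting.
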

\begin{lemma} \label{lem:shrinkS}
 For any $\type \in \{\text{minor}, \text{subgraph}\}$, let \F be a family of graphs, let $G$ be a graph with vertex cover $C$, and let $S = V(G-C)$. If $G$ contains an \F-\type, then there exists $S' \subseteq S$ such that $G[C \cup S']$ contains an \F-\type and $|S'| \leq \max_{H \in \F} |V(H)| + |C|\cdot(\Delta(H) + 1)$.
\end{lemma}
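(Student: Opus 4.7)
The plan is to split on the two values of \type{} and apply Proposition~\ref{prop:proposition1} in the minor case, using the trivial bound in the subgraph case.

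For \type$=$subgraph, suppose $G$ contains some $H \in \F$ as a subgraph, witnessed by vertex set $V_H \subseteq V(G)$ with $|V_H| = |V(H)|$. I would simply set $S' := V_H \cap S$. Then $V_H \subseteq C \cup S'$ so $G[C \cup S']$ still contains the $H$-subgraph, and $|S'| \leq |V(H)|$, which is dominated by the target bound.

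For \type$=$minor, fix some $H \in \F$ with $H \minorof G$. Invoke Proposition~\ref{prop:proposition1} to obtain a subgraph $G^* \subseteq G$ with $H \minorof G^*$ and
\[
|V(G^*)| \leq |V(H)| + \vc(G^*)\cdot(\Delta(H)+1).
\]
The key observation, which is the one nontrivial step, is that $C \cap V(G^*)$ is itself a vertex cover of $G^*$: every edge of $G^*$ is an edge of $G$ and therefore has an endpoint in $C$, and both its endpoints lie in $V(G^*)$. Hence $\vc(G^*) \leq |C \cap V(G^*)| \leq |C|$, which plugged into the Proposition~\ref{prop:proposition1} bound gives $|V(G^*)| \leq |V(H)| + |C|(\Delta(H)+1)$.

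I then take $S' := V(G^*) \cap S$. Since $V(G^*) \subseteq C \cup S'$, every edge of $G^*$ has both endpoints in $C \cup S'$, so $G^*$ is a subgraph of $G[C \cup S']$, and therefore $H \minorof G[C \cup S']$. Finally,
\[
|S'| \leq |V(G^*)| \leq |V(H)| + |C|(\Delta(H)+1) \leq \max_{H' \in \F}\bigl(|V(H')| + |C|(\Delta(H')+1)\bigr),
\]
as required. The only real obstacle is the minor case, and it is dispatched cleanly by the standard trick that the intersection of a vertex cover with a subgraph's vertex set is a vertex cover of that subgraph; all other verifications are immediate.
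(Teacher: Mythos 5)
Your proof is correct and follows essentially the same route as the paper's: invoke Proposition~\ref{prop:proposition1} in the minor case to obtain $G^*$, bound $\vc(G^*)$ by $|C|$, set $S' := V(G^*) \cap S$, and handle the subgraph case trivially. The only difference is that you spell out why $\vc(G^*) \leq |C|$ (via the intersection of $C$ with $V(G^*)$), a step the paper's proof uses without comment.
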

\begin{proof}
 Suppose $\type = \text{minor}$, then by Proposition~\ref{prop:proposition1} we know that if $G$ contains $H \in \F$ as a minor, then there is a subgraph $G^*$ of $G$ containing an $H$-minor such that $|V(G^*)| \leq |V(H)| + \vc(G^*)\cdot(\Delta(H) + 1)$. Take $S' = V(G^*) \cap S$, then $G[C \cup S'] = G[C \cup V(G^*)]$ contains an \F-minor and
 \begin{align*}
  |S'|
  &\leq |V(G^*)| \\
  &\leq |V(H)| + \vc(G^*)\cdot(\Delta(H) + 1) \\
  &\leq |V(H)| + |C|\cdot(\Delta(H) + 1) \\
  &\leq \max\limits_{H \in \F} |V(H)| + |C|\cdot(\Delta(H) + 1).
 \end{align*}
 
 On the other hand, when $\type = \text{subgraph}$ then $G$ contains an $H$-subgraph for some $H \in \F$, and trivially there exists a set $X \subseteq V(G)$ of $|V(H)|$ vertices such that $G[X]$ contains an $H$-subgraph. Take $S' = X - C$ and clearly $G[C \cup S']$ contains an $H$-minor.
\end{proof}

Armed with Lemma~\ref{lem:shrinkS} we now present the proof of the general Turing kernelization.

\thmUpperboundStatement*
\begin{proof}
 Fix some $\type \in \{\text{minor}, \text{subgraph}\}$. First, consider input instances $(G,\ell)$ for which $|V(G)| - \ell \leq \max_{F \in \F} (|V(F)| + 2|V(F)|^3)$. For these instances there exists a vertex set~$X$ of size at most $\ell$ such that $G-X$ is \F-\type-free if and only if there exists a vertex set~$Y$ of size at most $|V(G)| - \ell \leq \max_{F \in \F} (|V(F)| + 2|V(F)|^3)$ such that $G[Y]$ is \F-\type-free. Since there are only polynomially many such vertex sets~$Y$, and for each~$Y$ we can check in polynomial time whether $G[Y]$ contains an~\F-\type~\cite{RobertsonS95b}, we can apply brute force to solve the instance in polynomial time.
 
 So from now on we only consider instances $(G, \ell)$ for which $|V(G)| - \ell > \max_{F \in \F} (|V(F)| + 2|V(F)|^3)$. This means that for any vertex set~$X$ of size at most~$\ell$, the graph~$G-X$ contains more than $\max_{F \in \F} (|V(F)| + 2|V(F)|^3)$ vertices. Take $\F' = \{F - \isolated(F) \mid F \in \F\}$ and we obtain from Lemma~\ref{lem:elimIsolated} that if $G-X$ is $\F$-\type-free, it is also $\F'$-\type-free, and clearly if $G-X$ contains an \F-\type{} it is also contains an $\F'$-\type. Hence the \FTDeletion instance $(G,\ell)$ is equivalent to the \Deletion{$\F'$}{\type} instance $(G,\ell)$. Note that if $\F'$ contains an empty graph, the instance is trivially false since every graph contains the empty graph as a subgraph. In the rest of the algorithm we assume each graph in $\F'$ contains at least one edge.
 
 Since every graph in~\F contains an edge and at least one graph in~\F has no component of three vertices or more, we have $\mintw(\F) = 1$. Therefore the parameter, the deletion distance to treewidth~$\mintw(\F)$, is equal to $\fvs(G)$.
 
 To complete the Turing kernelization for \FTDeletion, it suffices to give a polynomial-time algorithm solving \Deletion{$\F'$}{\type} using an oracle that can solve \FMDeletion instances $(G', \ell')$ for which $|V(G')| \leq \poly{\fvs(G)}$ and $\fvs(G') \leq \poly{\fvs(G)}$. Note that the latter condition on $(G', \ell')$ is redundant since $\fvs(G') < |V(G')|$ for any graph $G'$.
 
 We will use the \FTDeletion oracle to solve \scVC instances on induced subgraphs $G_0$ of~$G$, for which $\fvs(G_0) \leq \fvs(G)$. This is done by a subroutine called \texttt{VCoracle}. A call to \texttt{VCoracle}$(G_0,\ell_0)$ decides whether~$G_0$ has a vertex cover of size at most~$\ell_0$, by the following process.
\begin{enumerate}
	\item Compute a 2-approximate feedback vertex set~$S_0$ on~$G_0$ in polynomial time, for example using the algorithm by Bafna et al.~\cite{BafnaBF99}.
	\item Apply the kernelization by Jansen and Bodlaender~\cite{JansenB13} for \scVC parameterized by feedback vertex set to the instance~$(G_0,\ell_0)$ and the approximate feedback vertex set~$S_0$. This takes polynomial time, and results in an instance~$(G_1,\ell_1)$ of \scVC on~$\Oh(|S_0|^3) \leq \Oh(\fvs(G_0)^3)$ vertices that is equivalent to~$(G_0,\ell_0)$.
	\item Since every graph in \F contains at least one edge, the \FTDeletion problem is \NPcomplete~\cite{LewisY80}. The \scVC problem is also known to be \NPcomplete, hence there exists a polynomial-time algorithm that transforms the \scVC instance~$(G_1,\ell_1)$ into an equivalent \FTDeletion instance~$(G_2, \ell_2)$. Since this algorithm runs in polynomial time and the size of its input is~$\poly{\fvs(G_0)}$, the number of vertices in~$G_2$ is upper-bounded by~$\poly{\fvs(G_0)}$.
	\item Query the instance~$(G_2, \ell_2)$ of size~$\poly{\fvs(G_0)}$ to the \FTDeletion oracle, and output the answer as the result of the call \texttt{VCoracle}$(G_0, \ell_0)$.
\end{enumerate}

Whenever~$G_0$ is an induced subgraph of~$G$, we have~$\fvs(G_0) \leq \fvs(G)$ so that the procedure above allows the Turing kernelization to implement an oracle for answering vertex cover queries on~$G_0$, using queries to the \FTDeletion oracle of size bounded polynomially in the parameter. 

Using this subroutine, the Turing kernelization algorithm is given in Algorithm~\ref{alg:kernel}. The high-level idea is as follows. Let~$M$ be the smallest graph in~$\F'$ that has no component of three or more vertices, or equivalently, which is~$P_3$-subgraph-free; then~$M$ consists of isolated edges. The Turing kernelization first guesses the sets~$U$ and~$R$ as per Lemma~\ref{lem:partitioning-matchingnumber} witnessing that the graph~$G-X$ obtained after removing the unknown solution~$X$, does not have a matching of~$|E(M)|$ edges (i.e.~that~$G-X$ does not contain~$M \in \F'$ as both a minor and a subgraph). Since Lemma~\ref{lem:partitioning-matchingnumber} guarantees that in the graph~$G-X$ we have~$N_{G-X}(R) \subseteq U$, it follows that~$N_G(R) \setminus U$ must belong to the unknown solution~$X$ if this guess was correct. The algorithm then considers the remaining vertices~$Q := V(G) \setminus (U \cup R \cup N_G(R))$ and classifies them into~$2^{|U|}$ types based on their adjacency to~$U$. An additional guessing step attempts to guess up to~$\alpha$ vertices of each type in~$G-X$, which will be part of the set~$S$ in the partition of Lemma~\ref{lem:partitioning-matchingnumber}, by taking them into the range of the function~$f$. The algorithm tests whether the graph~$G[f(2^U) \cup U \cup R]$ is $\F'$-\type-free. If not, then the guess was incorrect. If so, then for each type of which fewer than~$\alpha$ vertices were guessed to remain behind in~$G-X$, the algorithm collects the remaining vertices of that type in a set~$Q'$ to be added to the solution~$X$, and a \scVC instance is formulated on the remaining vertices of~$Q$. For types of which~$\alpha$ vertices remained behind, no vertices have to be added to~$Q'$ or the solution~$X$ in this step, because using Lemma~\ref{lem:shrinkS} it can be guaranteed that having more vertices of that type will not lead to an $\F'$-\emph{type}. The algorithm returns \emph{true} if the formulated instance of \scVC has a solution that yields a set of size at most~$\ell$ when combined with the vertices of~$N_G(R) \setminus U$ and~$Q'$.

 \begin{algorithm}[t]
  \caption{Solving \Deletion{$\F'$}{\type} instances using \texttt{VCoracle} with $\F'$ containing a $P_3$-subgraph-free graph and no empty graphs or graphs with isolated vertices.}
  \label{alg:kernel}
  \DontPrintSemicolon
  \SetKwFunction{oracle}{VCoracle}
  \SetKwFunction{true}{true}
  \SetKwFunction{false}{false}
  \SetKwInOut{Input}{input}
  \SetKwInOut{Output}{output}
  \SetKwComment{Comment}{$\triangleright$\ }{}
  
  \Input{A graph $G$ and an integer $\ell$}
  \Output{\true if there exists a set $X$ of size at most $\ell$ such that $G - X$ is $\F'$-\type-free, or \false otherwise.}
  
    Let $m = |E(M)|-1$ where $M$ is a smallest $P_3$-subgraph-free graph in $\F'$\;
    Let $\alpha = \max_{H \in \F'} |V(H)| + 3m(\Delta(H)+1)$\;
    \ForAll{$U \subseteq V(G)$ with $|U| \leq m$}{
      \ForAll{$R \subseteq V(G-U)$ such that \label{alg:line:choiceR}
	\;\Indp
      all connected components in $G[R]$ have an odd size of at least $3$
	\;
	{\normalfont and}
      $|U| + \frac{1}{2}(|R| - \odd(G[R])) \leq m$\;
      }{
	$Q := V(G)\setminus(U \cup R \cup N_G(R))$\; \label{alg:line:qdef}
	\ForAll{functions $f \colon 2^U \rightarrow 2^Q$ such that \label{alg:line:forallf}
	  \;\Indp
	  $G[f(2^U)]$ is independent
	  {\normalfont and}  \Comment*[r]{Recall that $f(2^U) = \bigcup_{Y\subseteq U}f(Y)$}
	  $G[f(2^U) \cup U \cup R]$ is $\F'$-\type-free \label{alg:line:choicef-minorfree}
	  {\normalfont and}\;
	  $\forall_{Y \subseteq U} |f(Y)| \leq \alpha$
	  {\normalfont and}\;
	  $\forall_{Y \subseteq U}\forall_{v \in f(Y)} N_G(v) \cap U = Y$
	  \;
	}{
	  $Q' := \{v \in Q \setminus f(2^U) \mid |f(N_G(v) \cap U)| < \alpha\}$\; \label{alg:line:qprime}
	  \If{\oracle{$G[Q] - Q'$, $\ell - |(N_G(R)\setminus U) \cup Q'|$}}{
	  \Return \true\; \label{alg:line:return2}
	  }
	}
      }
    }
  \Return \false\;
 \end{algorithm}

\ifllncs
    \begin{figure}[t]
      \centering
      \subfigure[]{
	\includegraphics[scale=0.7]{sets1.eps}
	\label{fig:sets1}
      }\hfill
      \subfigure[]{
	\includegraphics[scale=0.7]{sets2.eps}
	\label{fig:sets2}
      }
    \caption{We show two partitions of $G$. Figure~\ref{fig:sets1} shows a partition of $G$ given that Algorithm~\ref{alg:kernel} returns \textit{true}, while Fig.~\ref{fig:sets2} shows a partition of $G$ given that $G-X$ is \F-\type-free. Note that in both cases there can be no edges between $R$ and $Q$.}
    \end{figure}
\else
    \begin{figure}[t]
      \centering
      \begin{subfigure}{0.49\textwidth}
	\includegraphics[scale=0.7]{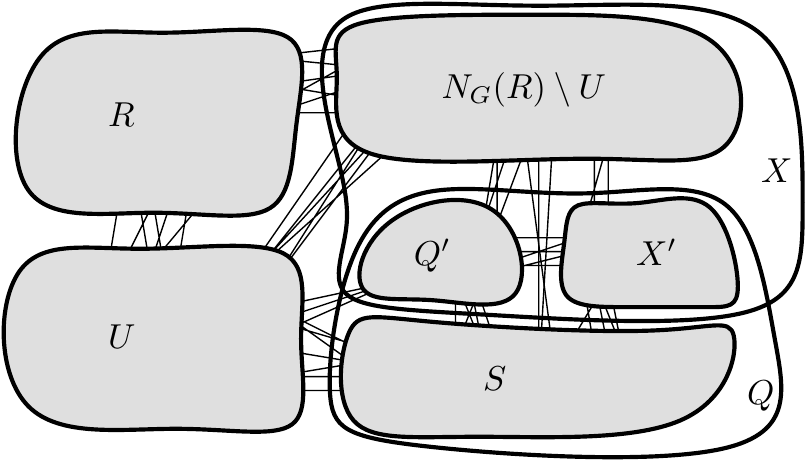}
	\caption{}
	\label{fig:sets1}
      \end{subfigure}
      \begin{subfigure}{0.49\textwidth}
	\includegraphics[scale=0.7]{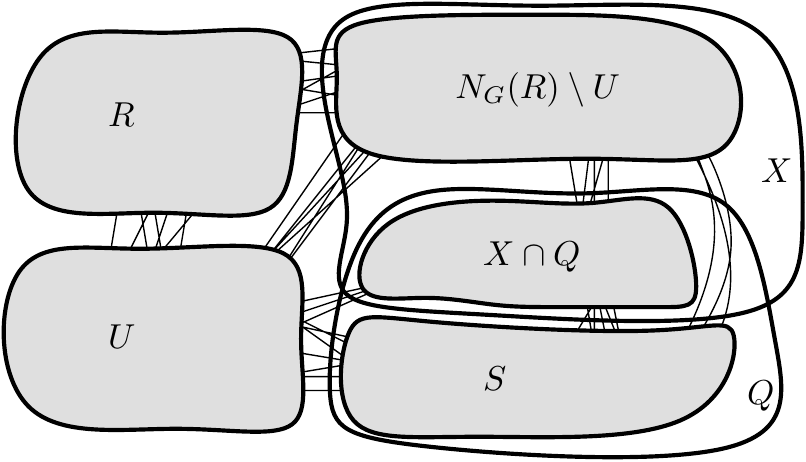}
	\caption{}
	\label{fig:sets2}
      \end{subfigure}
    \caption{We show two partitions of $G$. Figure~\ref{fig:sets1} shows a partition of $G$ given that Algorithm~\ref{alg:kernel} returns \textit{true}, while Fig.~\ref{fig:sets2} shows a partition of $G$ given that $G-X$ is \F-\type-free. Note that in both cases there can be no edges between $R$ and $Q$.}
    \end{figure}
\fi
 
 \ifllncs
    \subsubsection*{Correctness.}
 \else
    \subparagraph*{Correctness}
 \fi
 When the algorithm returns \textit{true}, then consider the values of~$U$, $R$, $Q$, $f$, and $Q'$ at the time that \textit{true} is returned. There exists a vertex cover $X'$ of size at most~$\ell - |(N_G(R) \setminus U) \cup Q'|$ in~$G[Q] - Q'$. Let~$X = X' \cup (N_G(R)\setminus U) \cup Q'$, which has size at most~$\ell$. The set~$X$ is a vertex cover in~$G-(U \cup R)$, since $G-(U\cup R) - X = (G[Q] - Q') - X'$. Hence $S := V(G-(U \cup R)) \setminus X$ is an independent set, even in~$G$. The sets $U,R,S,X$ form a partition of $V(G)$. See Fig.~\ref{fig:sets1} for a visual representation of these sets. We will show that~$X$ is a solution to \Deletion{$\F'$}{\type} on~$G$.
 
 Consider an arbitrary vertex $v \in S$. Note that since $N_G(R) \subseteq U \cup X$ we have $S = V(G)\setminus(U \cup R \cup X) \subseteq V(G)\setminus(U \cup R \cup N_G(R)) = Q$, so $v \in Q$. By definition of $X$ we know $Q' \subseteq X$ so $v \not\in Q'$. Then by definition of $Q'$ on line~\ref{alg:line:qprime} we observe the following:
 \begin{observation} \label{obs:setS}
  For all $v \in S$ we have $v \in f(2^U)$ or $|f(N_G(v) \cap U)| \geq \alpha$.
 \end{observation}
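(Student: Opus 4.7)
The plan is to chase the definitions directly, since the observation is really a bookkeeping statement about which vertices end up in which set. I would take an arbitrary $v \in S$ and trace where it can and cannot live. The first step is to confirm $v \in Q$: since the algorithm adds $N_G(R) \setminus U$ into $X$ explicitly, every neighbor of $R$ lies in $U \cup X$, and by definition $S \subseteq V(G) \setminus (U \cup R \cup X)$, so $v \in V(G) \setminus (U \cup R \cup N_G(R)) = Q$. The second step is to note $v \notin Q'$, because $Q' \subseteq X$ by construction of $X$ while $v \in S$ is disjoint from $X$.

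With $v \in Q$ and $v \notin Q'$ in hand, I would simply invert the set-builder on line~\ref{alg:line:qprime}, namely $Q' = \{v \in Q \setminus f(2^U) \mid |f(N_G(v) \cap U)| < \alpha\}$. A vertex of $Q$ fails to be in $Q'$ precisely when either it already lies in $f(2^U)$, or its type-image $f(N_G(v) \cap U)$ has size at least $\alpha$. Applying this to $v$ yields the conclusion. I do not foresee any obstacle; the only point requiring care is making sure $v \in Q$ is available before invoking the $Q'$-definition, but this is exactly what the preceding paragraph establishes via the containment $N_G(R) \subseteq U \cup X$.
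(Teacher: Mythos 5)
Your proof is correct and follows essentially the same route as the paper: establish $v \in Q$ via the containment $N_G(R) \subseteq U \cup X$, note $v \notin Q'$ since $Q' \subseteq X$ is disjoint from $S$, and read off the disjunction by negating the membership condition defining $Q'$.
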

 
 Assume for a contradiction that $G-X = G[U \cup R \cup S]$ contains an $\F'$-\type. Since $G[S]$ is independent, $U \cup R$ is a vertex cover in $G-X$, and by Lemma~\ref{lem:shrinkS} there exists a set $S' \subseteq S$ with $|S'| \leq \max\limits_{H \in \F'} |V(H)| + |U \cup R|\cdot(\Delta(H) + 1)$ such that $G[U \cup R \cup S']$ contains an $\F'$-\type. Note that $|R| - 3\odd(G[R]) \geq 0$ since every connected component in $G[R]$ contains at least $3$ vertices, so then
\allowdisplaybreaks
 \begin{align*}
  |S'|
  &\leq \max\limits_{H \in \F'} |V(H)| + |U \cup R|\cdot(\Delta(H) + 1)\\
  &\leq \max\limits_{H \in \F'} |V(H)| + (|U| + |R| + \frac{1}{2}(|R| - 3\odd(G[R])))\cdot(\Delta(H) + 1)\\
  &\leq \max\limits_{H \in \F'} |V(H)| + 3(|U| + \frac{1}{2}(|R| - \odd(G[R])))\cdot(\Delta(H) + 1)\\
  &\leq \max\limits_{H \in \F'} |V(H)| + 3m(\Delta(H) + 1)\\
  &= \alpha .
 \end{align*}
 \begin{subclaim} \label{claim:URSsubgraphURf}
  The graph $G[U \cup R \cup S']$ is isomorphic to a subgraph of $G[U \cup R \cup f(2^U)]$.
 \end{subclaim}
 \begin{claimproof}
  Observe that $S$ contains no neighbors of $R$, and since $G[S]$ is independent, we know for all $v \in S$ that $N_G(v) \subseteq U \cup X$ and therefore $N_{G-X}(v) = N_G(v) \cap U$. From Observation~\ref{obs:setS} it follows for all $v \in S'$ that $v \in f(2^U)$ or $|f(N_G(v) \cap U)| \geq \alpha$. In the latter case $v$ is a false twin of any vertex $u \in f(N_G(v) \cap U)$ in $G-X$ since by definition of $f$ we have $N_G(u) \cap U = N_G(v) \cap U$ for all vertices $u \in f(N_G(v) \cap U)$. We have $|S'| \leq |f(N_G(v) \cap U)|$ for all~$v \in S'$, so there exists a bijection that maps all vertices $v \in S'$ to a vertex in $u \in f(2^U)$ that is a false twin of $v$ in $G-X$. Any two false twins in $G-X$ are interchangeable in $G-X$, hence $G[U \cup R \cup S']$ is isomorphic to a subgraph of $G[U \cup R \cup f(2^U)]$.
 \end{claimproof}
 
 Since $f$ is chosen such that $G[U \cup R \cup f(2^U)]$ is $\F'$-\type-free on line~\ref{alg:line:choicef-minorfree}, Claim~\ref{claim:URSsubgraphURf} leads to a contradiction with the fact that~$G[U \cup R \cup S']$ contains an $\F'$-\type. We conclude that if the algorithm returns \textit{true} a set $X$ of size $\ell$ exists such that $G-X$ is $\F'$-\type-free.
 
 Next, we consider the reverse direction. We show that the algorithm returns \textit{true} when there exists a set $X$ of size at most $\ell$ such that $G-X$ is $\F'$-\type-free. Let $m = |E(M)| - 1$ where $M$ is the smallest $P_3$-subgraph-free graph in $\F'$, i.e. $M$ is isomorphic to $(m+1)\cdot P_2$ since no graph in $\F'$ contains isolated vertices. The graph $G-X$ is $\F'$-\type-free so it is also $(m+1) \cdot P_2$-subgraph-free, and by Observation~\ref{obs:matchingnumber} we know $\nu(G-X) \leq m$. Therefore by Lemma~\ref{lem:partitioning-matchingnumber} there exists a partition $U', R', S$ of $V(G-X)$ such that all of the following are true:
 \begin{itemize}
  \item all connected components in $(G-X)[R'] = G[R']$ have an odd size of at least $3$,
  \item $(G-X)[S] = G[S]$ is independent,
  \item $N_{G-X}(S) \subseteq U$ or equivalently $N_G(S) \subseteq U \cup X$, and
  \item $|U'| + \frac{1}{2} (|R'| - \odd(G[R'])) \leq m$.
 \end{itemize}
 Clearly $U'$ and $R'$ are such that there is an iteration in the algorithm where $U = U'$ and $R = R'$. Let $Q$ be the set as defined on line~\ref{alg:line:qdef} in this iteration, see Fig.~\ref{fig:sets2}. Let $g \colon 2^U \rightarrow 2^S$ be defined as $g(Y) = \{v \in S \mid Y = N_G(v) \cap U\}$ for all $Y \subseteq U$.
 We define a function $f' \colon 2^U \rightarrow 2^S$ that maps any $Y \subseteq U$ to an arbitrary subset of $g(Y)$ of size $\min\{|g(Y)|, \alpha\}$.
 We make the following observations:
 \begin{itemize}
  \item Since $N_G(S) \subseteq U \cup X$ we have $N_G(R) \cap S = \emptyset$ so $S = S \setminus N_G(R) = V(G) \setminus (U \cup R \cup X \cup N_G(R)) \subseteq V(G) \setminus (U \cup R \cup N_G(R)) = Q$, so $f' \colon 2^U \rightarrow 2^Q$.
  \item $G[S]$ is independent, so $G[f'(2^U)]$ is also independent because $f'(2^U) \subseteq S$.
  \item $G[U \cup R \cup f'(2^U)]$ is a subgraph of $G[U \cup R \cup S] = G-X$, and since $G-X$ is $\F'$-\type-free, $G[U \cup R \cup f'(2^U)]$ is also $\F'$-\type-free.
  \item Clearly $\forall_{Y \subseteq U} |f'(Y)| \leq \alpha$, and
  \item $\forall_{Y \subseteq U} \forall_{v \in f'(Y)} N_G(v) \cap U = Y$.
 \end{itemize}
 Hence $f'$ satisfies all conditions stated in line~\ref{alg:line:forallf} of the algorithm, so there is an iteration of the algorithm where $f = f'$. Let $Q'$ be the set as defined on line~\ref{alg:line:qprime} in this iteration. We now show that there exists a vertex cover of size at most $\ell - |(N_G(R)\setminus U) \cup Q'|$ in $G[Q] - Q'$.
 
 Since $G[S]$ is independent, $X$ is a vertex cover in $G[X \cup S] = G - (U \cup R)$. Then clearly $X \setminus (N_G(R) \setminus U)$ is a vertex cover in $G - (U \cup R \cup (N_G(R) \setminus U))$ and since $N_G(R) \setminus U \subseteq X$ we have $|X \setminus (N_G(R) \setminus U)| \leq \ell - |N_G(R)\setminus U|$. Similarly consider the set $A = (N_G(R)\setminus U) \cup Q'$. Clearly $X \setminus A$ is a vertex cover in $G - (U \cup R \cup A)$ and $|X \setminus A| \leq \ell - |A|$ if $A \subseteq X$. We will show that $A \subseteq X$. We know $N_G(R)\setminus U \subseteq X$ so it remains to be shown that $Q' \subseteq X$. Consider an arbitrary $v \in Q'$ and suppose $v \not\in X$. Since $Q' \subseteq Q$ we obtain from the definition of $Q$ that $v \not\in U$ and $v \not\in R$, so then $v \in S$. We also note from the definition of $Q'$ that $|f(N_G(v) \cap U)| < \alpha$. Since $f = f'$ we have $|f'(N_G(v) \cap U)| < \alpha$, and from the definition of $f'$ we know that if $|f'(Y)| < \alpha$ for some $Y \subseteq U$, then $f'(Y) = g(Y)$. By definition of $g$ we have $v \in g(N_G(v) \cap U)$, so then $v \in f(N_G(v) \cap U) \subseteq f(2^U)$. This is a contradiction since $v \not\in f(2^U)$ by definition of $Q'$.
 
 Now we have shown that $X\setminus A$ is a vertex cover of size at most $\ell - |A| = \ell - |(N_G(R) \setminus U) \cup Q'|$ in $G - (U \cup R \cup A) = G[Q] - Q'$, hence the \texttt{VCoracle} should report that a vertex cover exists on line $8$.
 
 \ifllncs
    \subsubsection*{Running time and query size.}
 \else
    \subparagraph*{Running time and query size}
 \fi
 Since \F is fixed~$m$ and $\alpha$ are constants, the sets $U$ and $R$ are of constant size so there are only polynomially many possibilities for $U$ and $R$. The function $f$ maps subsets of~$U$ to sets of a maximum size of~$\alpha$, so there are only polynomially many possible functions over which to iterate. For each possibility we run \texttt{VCoracle} which takes polynomial time.
 
 The \texttt{VCoracle} subroutine is invoked on induced subgraphs of~$G$ which therefore have a feedback vertex number of at most~$\fvs(G)$. Hence after computing the vertex cover kernel we invoke the oracle for vertex cover instances with~$\Oh(\fvs(G)^3)$ vertices.
\end{proof}

\section{Conclusion}

Earlier work~\cite{BodlaenderD10,Iwata17,JansenB13,Thomasse10} has shown that several \FMDeletion problems admit polynomial kernelizations when parameterized by the feedback vertex number. In this paper we showed that when \F contains a forest and each graph in~$\F$ has a connected component of at least three vertices, the \FMDeletion problem does \emph{not} admit such a polynomial kernel unless \containment. This lower bound generalizes to any \F where each graph has a connected component of at least three vertices, when we consider the vertex-deletion distance to treewidth $\mintw(\F)$ as parameter.

For all other choices of \F we showed that a polynomial Turing kernelization exists for \FMDeletion parameterized by the feedback vertex number. The size of the \scVC  queries generated by the Turing kernelization does not depend on \F: the Turing kernelization can be shown to be \emph{uniformly polynomial} (cf.~\cite{GiannopoulouJLS17}). However, it remains unknown whether the \emph{running time} can be made uniformly polynomial, and whether the Turing kernelization can be improved to a traditional kernelization.

Our results leave open the possibility that all \FMDeletion problems admit a polynomial kernel when parameterized by the vertex-deletion distance to a \emph{linear forest}, i.e.~a collection of paths. Resolving this question may be an interesting direction for future work.

\ifllncs
    \def\doi#1{\href{https://doi.org/#1}{\nolinkurl{#1}}}
\fi


\end{document}